\providecommand{\U}[1]{\protect\rule{.1in}{.1in}}
\newtheorem{theorem}{Theorem}
\newtheorem{acknowledgement}[theorem]{Acknowledgement}
\newtheorem{corollary}[theorem]{Corollary}
\newtheorem{definition}[theorem]{Definition}
\newtheorem{proposition}[theorem]{Proposition}
\newtheorem{remark}[theorem]{Remark}
\newenvironment{proof}[1][Proof]{\noindent\textbf{#1.} }{\ \rule{0.5em}{0.5em}}
\begin{document}

\title{Boundary problems for three-dimensional Dirac operators and generalized MIT
bag models for unbounded domains}
\author{V.S. Rabinovich\\Instituto Polit\'{e}cnico Nacional, ESIME Zacatenco, M\'{e}xico,\\email: vladimir.rabinovich@gmail.com}
\date{}
\maketitle

\begin{abstract}
We consider the operators of boundary problems
\begin{equation}
\mathbb{D}_{\boldsymbol{A,}\Phi,\mathfrak{B}}\boldsymbol{u}=\left\{
\begin{array}
[c]{c}%
\mathfrak{D}_{\boldsymbol{A},\Phi}\boldsymbol{u}\text{ on }\Omega\\
\mathfrak{B}\boldsymbol{u}_{\partial\Omega}\text{ on }\partial\Omega
\end{array}
\right.  \label{1}%
\end{equation}
in unbounded domains $\Omega\subset\mathbb{R}^{3}$ where $\mathfrak{D}%
_{\boldsymbol{A},\Phi}$ is a $3-D$ Dirac operator
\begin{align}
\mathfrak{D}_{\boldsymbol{A},\Phi}  &  =\left[  \boldsymbol{\alpha\cdot
}(i\boldsymbol{\nabla}+\boldsymbol{A})+\alpha_{0}m+\Phi I_{4}\right] \\
&  =\sum_{j=1}^{3}\alpha_{j}(i\partial_{x_{j}}+A_{j})\alpha_{0}m+\Phi
I_{4}\nonumber
\end{align}
defined on the distributions $\boldsymbol{u}=(u_{1},u_{2},u_{3},u_{4})\in
H^{1}(\Omega,\mathbb{C}^{4})$, where $\alpha_{0},\alpha_{1},\alpha_{2}%
,\alpha_{3}$ are the Dirac matrices, $\boldsymbol{A}\in L^{\infty}%
(\Omega,\mathbb{R}^{3})$ and $\Phi\in L^{\infty}(\mathbb{R}^{3})$ are the
magnetic and electrostatic potentials, $m\in\mathbb{R}$ is the mass of the
particlle. Let $\mathbb{C}^{4}\ni\boldsymbol{u}=(\boldsymbol{u}^{1}%
,\boldsymbol{u}^{2})\in\mathbb{C}^{2}\oplus\mathbb{C}^{2}.$ We assume that the
operator $\mathfrak{B}$ of the boundary condition is
\begin{equation}
\mathfrak{B}\boldsymbol{u}_{\partial\Omega}=\mathfrak{b}_{1}\boldsymbol{u}%
_{\partial\Omega}^{1}+\mathfrak{b}_{2}\boldsymbol{u}_{\partial\Omega}^{2}
\label{2}%
\end{equation}
where $\mathfrak{b}_{j},j=1,2$ are $2\times2$ matrices, $\boldsymbol{u}%
_{\partial\Omega}^{j}\in H^{1/2}(\partial\Omega,\mathbb{C}^{2}),j=1,2$ are
restrictions of distributions $\boldsymbol{u}^{j}\in H^{1}(\Omega
,\mathbb{C}^{2})$ on $\partial\Omega.$ Class of the boundary condition
(\ref{2}) in particular case contains the boundary conditions of MIT bag model
and its generalizations which discribe the confinment of the quarks to the
domain $\Omega.$

We give conditions of self-adjointnes of unbounded operators $\mathcal{D}%
_{\boldsymbol{A,}\Phi,\mathfrak{B}}$ associated with boundary problem
(\ref{1}) and to give a description of the essential spectra of $\mathcal{D}%
_{\boldsymbol{A,}\Phi,\mathfrak{B}}$ for some unbounded domains applying the
limit operators method.

\textbf{Mathematics Subject Classification (2010): }35J56, 81Q10.

\textbf{Keywords. }Dirac operators, Boundary value problems, Self-adjointness,
Essential spectrum

\end{abstract}

\section{ Introduction and notations}

Massive relativistic particles of spin-$\frac{1}{2}$ such as electrons and
quarks are described by the Dirac operator which is the base of relativistic
quantum mechanics (see for instance \cite{Hecht},\cite{Thaller}). \ The
boundary problems for $3-D$ Dirac operators arise in relativistic quantum
mechanics for the descriptions of the particles confined in domains in
$\mathbb{R}^{3}.$ One important model of the confinment of particles is
the\ three-dimensional MIT bag model suggested in the 1970s by physicists (see
for instance \cite{Chodos},\cite{Chodos1},\cite{Jonson}). In the dimension
two, the Dirac operators with special boundary conditions similar to the MIT
bag model are used in the description of graphene \cite{Benguria},
\cite{Benguria1}, see also \cite{Stockmeyer}.

The rigorous mathematical investigation of the MIT bag model as a boundary
value problem for the $3D-$ Dirac operator in domain $\Omega\subset
\mathbb{R}^{3}$ with the special boundary conditions has been started in the
papers \cite{Ariza},\cite{Ariza1},\cite{Qurm} where the authors apply to the
study of self-adjointness of the MIT bag model the modern techniques of
extension theory of symmetric operators. We note also the very recent paper
\cite{Behrndt} devoted to the spectral theory of boundary value problems for
$3D$-Dirac operator with the MIT bag model boundary conditions and some of its
generalizations. The approach of this paper is based on abstract boundary
triple techniques from extension theory of symmetric operators and a thorough
study of certain classes of (boundary) integral operators, that appear in a
Krein type resolvent formula. In this paper the authors study also the
relations between the $3D-$Dirac operators with singular $\delta-$shell
interactions and the MIT bag model. \ 

It should be noted that there is an extensive literature devoted to the
Fredholm theory and index of boundary problems for multi-dimensional Dirac
type operators in bounded domains (see for instance \cite{BB}, \cite{BB1},
\cite{Boos}, \cite{Boos-book}, \cite{Guilkey} and references cited there).

We consider in the paper operators of boundary value problems for the
$3D-$Dirac operators with variable magnetic and electrostatic potentials,
containing in a particular case the operators of MIT bag models and some of
its generalizations. Our study the spectral properties of mentioned operators
is based on the investigation of the parameter-dependent boundary problems for
the Dirac operators and the limit operators method.

We consider the operators of boundary problems
\begin{equation}
\mathbb{D}_{\boldsymbol{A,}\Phi,\mathfrak{B}}=\left\{
\begin{array}
[c]{c}%
\mathfrak{D}_{\boldsymbol{A},\Phi}\boldsymbol{u}\text{ on }\Omega\\
\mathfrak{B}\boldsymbol{u}_{\partial\Omega}\text{ on }\partial\Omega
\end{array}
\right.  \label{0.1}%
\end{equation}
in domains $\Omega\subset\mathbb{R}^{3}$ with $C^{2}-$uniformly regular
boundary $\partial\Omega$ (the definition of uniformly regular manifold see in
\cite{Amman}) where $\mathfrak{D}_{\boldsymbol{A},\Phi}$ is a $3-D$ Dirac
operator
\begin{align}
\mathfrak{D}_{\boldsymbol{A},\Phi}  &  =\left[  \boldsymbol{\alpha\cdot
}(i\boldsymbol{\nabla}+\boldsymbol{A})+\alpha_{0}m+\Phi I_{4}\right]
\label{0.2}\\
&  =\sum_{j=1}^{3}\alpha_{j}(i\partial_{x_{j}}+A_{j})\alpha_{0}m+\Phi
I_{4}\nonumber
\end{align}
defined on the distributions $\boldsymbol{u=}(u_{1},u_{2},u_{3},u_{4})\in
H^{1}(\Omega,\mathbb{C}^{4})=H^{1}(\Omega)\otimes\mathbb{C}^{4}$ where
$H^{1}(\Omega)$ is the Sobolev space on $\Omega$ of order $1.$ We denote by
$\alpha_{0},\alpha_{1},\alpha_{2},\alpha_{3}$ the Dirac matrices satisfying
the relations
\[
\alpha_{i}\alpha_{j}+\alpha_{j}\alpha_{i}=2\delta_{ij}I_{4},i,j=0,1,2,3,4,
\]
$I_{n}$ is the $n\times n$ unit matrix, $\boldsymbol{A}=(A_{1},A_{2},A_{3})\in
L^{\infty}(\mathbb{R}^{3},\mathbb{C}^{3})$ is a magnetic potential, $\Phi\in
L^{\infty}(\mathbb{R}^{3})$ is an electrostatic potential, $m\in\mathbb{R}$ is
the mass of the particle. We use the system of coordinates for which the
Planck constant $\mathfrak{h}=1,$ the light speed $c=1,$ and the charge of the
particle $e=$ $1.$

We introduce the boundary operators of the form
\begin{equation}
\left(  \mathfrak{B}\boldsymbol{u}\right)  (x^{\prime})=\mathfrak{b}%
_{1}\mathfrak{(}x^{\prime})\boldsymbol{u}_{\partial\Omega}^{1}(x^{\prime
})+\mathfrak{b}_{2}\mathfrak{(}x^{\prime})\boldsymbol{u}_{\partial\Omega}%
^{2}(x^{\prime})\boldsymbol{,}x^{\prime}\in\partial\Omega\label{0.3}%
\end{equation}
where $\boldsymbol{u=}\left(
\begin{array}
[c]{c}%
\boldsymbol{u}^{1}\\
\boldsymbol{u}^{2}%
\end{array}
\right)  \in\mathbb{C}^{4},\boldsymbol{u}^{j}\in\mathbb{C}^{2},j=1,2,$
$\boldsymbol{u}_{\partial\Omega}^{j}\in H^{1/2}(\partial\Omega,\mathbb{C}%
^{2})$ is the trace of $\boldsymbol{u}^{j}\in H^{1}(\Omega,\mathbb{C}^{2})$ on
$\partial\Omega,$ and $\mathfrak{b}_{j}\mathfrak{=(}b_{j}^{kl})_{k,l=1}%
^{2},l=1,2$ are $2\times2$ matrices with elements $b_{j}^{kl}$ belonging to
the space $C_{b}(\partial\Omega)$ of bounded continuous functions on
$\partial\Omega.$ $\qquad$

The class of boundary conditions (\ref{0.3}) contains, in particular case, the
boundary condition of the \textit{MIT bag model} \ (see for instance
\cite{Ariza}, \cite{Ariza1}, \cite{Chodos}, \cite{Chodos1}, \cite{Jonson}),
\begin{equation}
\left(  I_{4}+i\alpha_{0}(\boldsymbol{\alpha\cdot\nu)}\right)  \boldsymbol{u}%
_{\partial\Omega}=0 \label{0.4}%
\end{equation}
where $\boldsymbol{\nu}=(\nu_{1},\nu_{2},\nu_{3})$ is the unit outward normal
vector to $\partial\Omega.$ One can see that the rank of the matrix
$I_{4}+i\alpha_{0}(\boldsymbol{\alpha\cdot\nu)}$ is $2,$ and the boundary
condition (\ref{0.4}) is equivalent to the condition
\begin{equation}
\boldsymbol{u}^{1}+\left(  i\sigma\cdot\boldsymbol{\nu}\right)  \boldsymbol{u}%
^{(2)}=0,\sigma\cdot\boldsymbol{\nu}=\sigma_{1}\nu_{1}+\sigma_{2}\nu
_{2}+\sigma_{3}\nu_{3} \label{0.5}%
\end{equation}
having the form (\ref{0.3}) where $\sigma_{1},\sigma_{2},\sigma_{3}$ are the
$2\times2$ Pauli matrices. Moreover, the boundary condition
\begin{equation}
\theta\left(  I_{4}+i\alpha_{0}(\boldsymbol{\alpha\cdot\nu)}\right)
\boldsymbol{u}_{\partial\Omega}=\left(  I_{4}+i\alpha_{0}(\boldsymbol{\alpha
\cdot\nu)}\right)  \alpha_{0}\boldsymbol{u}_{\partial\Omega} \label{0.5'}%
\end{equation}
introduced in the paper \cite{Behrndt} are equivalent to the condition%
\[
(\theta-1)\boldsymbol{u}^{1}+(\theta+1)i(\sigma\cdot\nu)\boldsymbol{u}%
^{2}=\boldsymbol{0}%
\]
of form (\ref{0.3}).

We associate with boundary value problem (\ref{0.1}) a bounded operator
\begin{equation}
\mathbb{D}_{\boldsymbol{A},\Phi,\mathfrak{B}}\boldsymbol{u(}x)=\left\{
\begin{array}
[c]{c}%
\mathfrak{D}_{\boldsymbol{A},\Phi}\boldsymbol{u}(x),\text{ }x\in\Omega\\
\mathfrak{B}\boldsymbol{u}_{\partial\Omega}(x^{\prime})=\mathfrak{b}%
_{1}\mathfrak{(}x^{\prime})\boldsymbol{u}_{\partial\Omega}^{1}(x^{\prime
})+\mathfrak{b}_{2}\mathfrak{(}x^{\prime})\boldsymbol{u}_{\partial\Omega}%
^{2}(x^{\prime})=\boldsymbol{0,}x^{\prime}\in\partial\Omega
\end{array}
\right.  \label{0.6}%
\end{equation}
acting from $H^{1}(\Omega,\mathbb{C}^{4})$ into $L^{2}(\Omega,\mathbb{C}%
^{4}).$ We also associate with the operator $\mathbb{D}_{\boldsymbol{A,}%
\Phi,\mathfrak{B}}$ the unbounded in $L^{2}(\Omega,\mathbb{C}^{4})$ operator
$\mathcal{D}_{\boldsymbol{A,}\Phi,\mathfrak{B}}$ generated by the $3D-$ Dirac
operator $\mathfrak{D}_{\boldsymbol{A},\Phi}$ with the domain
\begin{equation}
H_{\mathfrak{B}}^{1}(\Omega,\mathbb{C}^{4})=\left\{  \boldsymbol{u}\in
H^{1}(\Omega,\mathbb{C}^{4}):\mathfrak{B}\boldsymbol{u}_{\partial\Omega
}=0\text{ on }\partial\Omega\right\}  . \label{0.7}%
\end{equation}

The paper has two principal aims: $\boldsymbol{1)}$ to obtain conditions for
the operator $\mathcal{D}_{\boldsymbol{A,}\Phi,\mathfrak{B}}$ to be a
self-adjoint operator in $L^{2}(\Omega,\mathbb{C}^{4})$ for bounded and
unbounded domains $\Omega,$ and $\boldsymbol{2)}$ to study the essential
spectrum of $\mathcal{D}_{\boldsymbol{A,}\Phi,\mathfrak{B}}$ for unbounded
domains $\Omega.$

$\boldsymbol{1)}$ Our approach to self-adjointness of operators $\mathcal{D}%
_{\boldsymbol{A,}\Phi,\mathfrak{B}}$ is based on the study of invertibility of
the parameter-dependent boundary problems
\[
\mathbb{D}_{\boldsymbol{A},\Phi,\mathfrak{B}}(i\mu)=\mathbb{D}_{\boldsymbol{A}%
,\Phi,\mathfrak{B}}-i\mu I_{4},\mu\in\mathbb{R}%
\]
for large value $\left\vert \mu\right\vert $ of the parameter $\mu
\in\mathbb{R}.$ We introduce the local parameter-dependent Lopatinsky-Shapiro
conditions for the operator $\mathbb{D}_{\boldsymbol{A},\Phi,\mathfrak{B}%
}(i\mu)$ and prove the following result.

\begin{itemize}
\item Let: $(i)$ $\partial\Omega$ be a $C^{2}-$uniformly regular surface,
$(ii)$ the parameter-dependent Lopatinsky-Shapiro conditions for
$\mathbb{D}_{\boldsymbol{A},\Phi,\mathfrak{B}}(i\mu)$ be satisfied uniformly
on $\partial\Omega,$ $(iii)$ the unbounded operator $\mathcal{D}%
_{\boldsymbol{A,}\Phi,\mathfrak{B}}$ with domain $H_{\mathfrak{B}}^{1}%
(\Omega,\mathbb{C}^{4})$ be a symmetric operator in $L^{2}(\Omega
,\mathbb{C}^{4}).$ Then $\mathcal{D}_{\boldsymbol{A,}\Phi,\mathfrak{B}}$ is a
self-adjoint operator in $L^{2}(\Omega,\mathbb{C}^{4}).$
\end{itemize}

In distinction from the above cited papers \cite{Behrndt}, \cite{Ariza},
\cite{Ariza1}, \cite{Qurm} where the boundary problems have been consider for
domains with compact $\mathbb{C}^{2}$-boundary our approach allows to prove
self-adjointness of wide class of boundary value problems for $3D-$Dirac
operators in domains with bounded and unbounded boundary, in particular case,
for the operators of the MIT bag models and their generalizations.

$\boldsymbol{2)}$ We study also the Fredholm property of operators
$\mathbb{D}_{\boldsymbol{A},\Phi,\mathfrak{B}}:H^{1}(\Omega,\mathbb{C}%
^{3})\rightarrow L^{2}(\Omega,\mathbb{C}^{3})$ and the essential spectrum of
associated unbounded operators $\mathcal{D}_{\boldsymbol{A,}\Phi,\mathfrak{B}%
}$ in unbounded domains in $\mathbb{R}^{3}$ with $C^{2}-$uniformly regular
boundary applying the limit operators method \cite{RRS},\cite{Ra1},\cite{Ra2}.
We consider the following cases:

\begin{itemize}
\item $(a)$ $\Omega$ is the exterior of a bounded domain, that is
$\Omega=\mathbb{R}^{3}\diagdown\overline{\Omega^{\prime}}$ where
$\Omega^{\prime}$ is a bounded domain with a $C^{2}-$boundary $\partial
\Omega.$

\item $(b)$ $\Omega$ is an unbounded domain with $\ C^{2}-$ boundary which has
a conic exit at infinity, that is $\Omega$ coincides with a conic set outside
a ball $B_{R}=\left\{  x\in\mathbb{R}^{3}:\left\vert x\right\vert <R\right\}
,R>0.$
\end{itemize}

\ Following to the papers \cite{Ra1},\cite{Ra2} we define a family
$Lim(\mathbb{D}_{\boldsymbol{A,}\Phi,\mathfrak{B}})$ of limit operators
$\mathbb{D}_{\boldsymbol{A,}\Phi,\mathfrak{B}}^{h}$ for $\mathbb{D}%
_{\boldsymbol{A,}\Phi,\mathfrak{B}}$ and obtain the following result.

\begin{itemize}
\item Let the standard Lopatinsky-Shapiro conditions for $\mathbb{D}%
_{\boldsymbol{A,}\Phi,\mathfrak{B}}$ are satisfied uniformly on $\partial
\Omega.$ Then the essential spectrum $sp_{ess}\mathcal{D}_{\boldsymbol{A,}%
\Phi,\mathfrak{B}}$ is defined by the formula%
\begin{equation}
sp_{ess}\mathcal{D}_{\boldsymbol{A,}\Phi,\mathfrak{B}}=%
{\displaystyle\bigcup\limits_{\mathbb{D}_{\boldsymbol{A,}\Phi,\mathfrak{B}%
}^{h}\subset Lim(\mathbb{D}_{\boldsymbol{A,}\Phi,\mathfrak{B}})}}
sp\mathcal{D}_{\boldsymbol{A,}\Phi,\mathfrak{B}}^{h} \label{0.8}%
\end{equation}
where $\mathcal{D}_{\boldsymbol{A,}\Phi,\mathfrak{B}}^{h}$ are the unbounded
operators associated with $\mathbb{D}_{\boldsymbol{A,}\Phi,\mathfrak{B}}^{h}.$
\end{itemize}

Note that in the case $(a)$ the limit operators $\mathbb{D}_{\boldsymbol{A,}%
\Phi,\mathfrak{B}}^{h}$ are the Dirac operators on $\mathbb{R}^{3}$, and for
the case $(b)$, the limit operators are operators of boundary value problems
for the half-spaces. Under some additional assumptions the spectra of limit
operators can be obtained in an explicit form, that is formula (\ref{0.8})
gives the complete description of $sp_{ess}\mathcal{D}_{\boldsymbol{A,}%
\Phi,\mathfrak{B}}.$ \ 

Let $\Omega=\mathbb{R}^{3}\diagdown\overline{\Omega^{\prime}}$ be an exterior
of the bounded domain $\Omega^{\prime}$ \ with $C^{2}$-boundary $\partial
\Omega.$ We assume that the potentials $\boldsymbol{A}$ and $\Phi$ are
real-valued and slowly oscillating at infinity (see Definition \ref{de5.3})
then the essential spectrum of the operator $\mathcal{M}_{\boldsymbol{A,}%
\Phi,\mathfrak{M}}$ of MIT bag model defined by the Dirac operator
$\mathfrak{D}_{\boldsymbol{A},\Phi}$ with domain%
\[
dom\mathcal{M}_{\boldsymbol{A,}\Phi,\mathfrak{M}}=\left\{  \boldsymbol{u}\in
H^{1}(\Omega,\mathbb{C}^{4}):\mathfrak{M}\boldsymbol{u=\ u}_{\partial\Omega
}^{1}+\left(  i\sigma\cdot\nu\right)  \boldsymbol{u}_{\partial\Omega}%
^{(2)}=0\text{ on }\partial\Omega\right\}
\]
is given as
\begin{equation}
sp_{ess}\mathcal{M}_{\boldsymbol{A,}\Phi,\mathfrak{M}}=(-\infty,M_{\Phi}%
^{\sup}-\left\vert m\right\vert ]%
{\displaystyle\bigcup}
[M_{\Phi}^{\inf}+\left\vert m\right\vert ,+\infty) \label{0.9}%
\end{equation}
where
\begin{equation}
M_{\Phi}^{\sup}=\limsup_{x\rightarrow\infty}\Phi(x),M_{\Phi}^{\inf}%
=\liminf_{x\rightarrow\infty}\Phi(x). \label{0.10}%
\end{equation}
Thus $sp_{ess}\mathcal{M}_{\boldsymbol{A,}\Phi,\mathfrak{M}}$ is independent
of the slowly oscillating magnetic potential $\boldsymbol{A,}$ and if
$M_{\Phi}^{\sup}-M_{\Phi}^{\inf}<2\left\vert m\right\vert $%
\[
sp_{dis}\mathcal{M}_{\boldsymbol{A,}\Phi,\mathfrak{M}}\subset(M_{\Phi}^{\sup
}-\left\vert m\right\vert ,M_{\Phi}^{\inf}+\left\vert m\right\vert )
\]
and if $M_{\Phi}^{\sup}-M_{\Phi}^{\inf}\geq2\left\vert m\right\vert $ then%

\begin{equation}
sp\mathcal{M}_{\boldsymbol{A,}\Phi,\mathfrak{M}}=sp_{ess}\mathcal{M}%
_{\boldsymbol{A,}\Phi,\mathfrak{M}}=(-\infty,+\infty). \label{0.11}%
\end{equation}

If $\Omega$ is a domain with a conic exit to infinity and the potentials
$\boldsymbol{A}$ and $\Phi$ are real-valued and slowly oscillating at
infinity, then: if $m>0$ then the essential spectrum of $\mathcal{M}%
_{\boldsymbol{A,}\Phi,\mathfrak{M}}$ is given by formulas (\ref{0.9}%
),(\ref{0.10}), and if $m\leq0$ then the $sp\mathcal{M}_{\boldsymbol{A,}%
\Phi,\mathfrak{M}}=sp_{ess}\mathcal{D}_{\boldsymbol{A},\Phi,\mathfrak{B}%
}=(-\infty,+\infty).$

Note that the approach based on the investigation of parameter-dependentt
operators and the limit operators has been applied earlier for the
investigation of the $\delta-$interaction problems for Schr\"{o}dinger
operators in papers \cite{Rabin},\cite{Rabin1},\cite{Rabin2}.

\subsection{Notations}

\begin{itemize}
\item If $X,Y$ are Banach spaces then we denote by $\mathcal{B(}X,Y)$ the
Banach space of bounded linear operators acting from $X$ into $Y$, and by
$\mathcal{K}(X,Y)$ its subspace \ of compact operators. In the case $X=Y$ we
write shortly $\mathcal{B}(X),\mathcal{K}(X).$

\item An operator $A\in\mathcal{B(}X,Y)$ is called a Fredholm operator if%

\[
k\emph{er}A=\left\{  x\in X:Ax=0\right\}  \text{ and }\emph{coker}%
A\emph{=}Y/\operatorname{Im}A
\]
are finite dimensional spaces.

\item Let $\mathcal{A}$ be a closed unbounded operator in a Hilbert space
$\mathcal{H}$ with a dense in $\mathcal{H}$ domain $dom\mathcal{A}.$ Then
$\mathcal{A}$ is called a Fredholm operator if $\mathcal{A}:dom\mathcal{A}%
\rightarrow\mathcal{H}$ is a Fredholm operator as a bounded operator where
$dom\mathcal{A}$ is equipped by the graph norm
\[
\left\Vert u\right\Vert _{dom\mathcal{A}}=\left(  \left\Vert u\right\Vert
_{\mathcal{H}}^{2}+\left\Vert \mathcal{A}u\right\Vert _{\mathcal{H}}%
^{2}\right)  ^{1/2}%
\]
(see for instance \cite{BirSol}).

\item The essential spectrum $sp_{ess}\mathcal{A}$ of an unbounded operator
$\mathcal{A}$ is a set of $\lambda\in\mathbb{C}$ such that $\mathcal{A}%
-\lambda I$ is not a Fredholm operator as the unbounded operator, and the
discrete spectrum $sp_{dis}\mathcal{A}$ of $\mathcal{A}$ is the set of
isolated eigenvalues of finite multiplicity. It is well known that if
$\mathcal{A}$ is a self-adjoint operator then $\ sp_{dis}\mathcal{A}%
\mathfrak{=}sp\mathcal{A}\mathfrak{\diagdown}sp_{ess}\mathcal{A}\mathfrak{.}$

\item We denote by $L^{2}(\Omega,\mathbb{C}^{d}),L^{2}(\partial\Omega
,\mathbb{C}^{d})$ the Hilbert spaces of $d-$dimensional vector-functions
$\boldsymbol{u}(x)=(u^{1}(x),...,u^{d}(x)),x\in\Omega$ or $x\in\partial\Omega$
with the scalar products
\begin{align*}
\left\langle \boldsymbol{u},\boldsymbol{v}\right\rangle _{L^{2}(\Omega
,\mathbb{C}^{d})}  &  =\int_{\Omega}\boldsymbol{u}(x)\cdot\boldsymbol{v}%
(x)dx,\\
\left\langle \boldsymbol{\varphi},\boldsymbol{\psi}\right\rangle
_{L^{2}(\partial\Omega,\mathbb{C}^{d})}  &  =\int_{\partial\Omega
}\boldsymbol{\varphi}(x^{\prime})\cdot\boldsymbol{\psi}(x^{\prime})dx^{\prime}%
\end{align*}
where $\boldsymbol{h\cdot g=}\sum_{j=1}^{d}h_{j}\bar{g}_{j}.$

\item We denote by $H^{s}(\mathbb{R}^{n},\mathbb{C}^{d})$ the Sobolev space of
distributions $\boldsymbol{u}\in\mathcal{D}^{\prime}(\mathbb{R}^{n}%
,\mathbb{C}^{d})=\mathcal{D}^{\prime}(\mathbb{R}^{n})\otimes\mathbb{C}^{d}$
such that
\[
\left\Vert \boldsymbol{u}\right\Vert _{H^{s}(\mathbb{R}^{n},\mathbb{C}^{d}%
)}=\left(  \int_{\mathbb{R}^{n}}(1+\left\vert \xi\right\vert ^{2}%
)^{s}\left\Vert \boldsymbol{\hat{u}}(\xi)\right\Vert _{\mathbb{C}^{d}}^{2}%
d\xi\right)  ^{1/2}<\infty,s\in\mathbb{R}%
\]
where $\boldsymbol{\hat{u}}$ is the Fourier transform of $\boldsymbol{u}$ in
the sense of distributions. If $\Omega$ $\ $is a domain in $\mathbb{R}^{d}$
then $H^{s}(\Omega,\mathbb{C}^{d})$ is the space of restrictions of
distributions $\boldsymbol{u}\in H^{s}(\mathbb{R}^{n},\mathbb{C}^{d})$ on
$\Omega$ with the norm
\[
\left\Vert \boldsymbol{u}\right\Vert _{H^{s}(\Omega,\mathbb{C}^{d})}%
=\inf_{l\boldsymbol{u}\in H^{s}(\mathbb{R}^{n},\mathbb{C}^{d})}\left\Vert
l\boldsymbol{u}\right\Vert _{H^{s}(\mathbb{R}^{n},\mathbb{C}^{d})}%
\]
where $l\boldsymbol{u}$ is an extension of $\boldsymbol{u}$ on $\mathbb{R}%
^{d}.$ If $s>1/2$ then the distributions in $H^{s}(\Omega,\mathbb{C}^{d})$
have the traces on $\partial\Omega,$ and we denote by $H^{s-1/2}%
(\partial\Omega,\mathbb{C}^{d})$ the Sobolev space on $\partial\Omega$
consisting of these traces.

\item We denote $C_{b}(\mathbb{R}^{n})$ the $C^{\ast}-$ algebra of bounded
continuous functions on $\mathbb{R}^{n},C_{b}^{k}(\mathbb{R}^{n}%
),k\in\mathbb{N}$ is a sub-algebra of $C_{b}(\mathbb{R}^{n})$ consisting of
$k-$times differentiable functions $u$ such that $\partial^{\alpha}u\in
C_{b}(\mathbb{R}^{n})$ for every multiindex $\alpha:\left\vert \alpha
\right\vert \leq k.$ If $\Omega$ is an unbounded domain in $\mathbb{R}^{n},$
then $C_{b}(\Omega),C_{b}^{k}(\Omega),C_{b}(\partial\Omega),C_{b}^{k}%
(\partial\Omega)$ are algebras consisting of the restrictions of functions in
$C_{b}(\mathbb{R}^{n}),C_{b}^{k}(\mathbb{R}^{n})$ on $\Omega,\partial\Omega,$respectively.

\item We say that a domain $\Omega\subset\mathbb{R}^{n}$ has a $C^{2}%
-$\textit{uniformly regular boundary} $\partial\Omega$ \ if $\partial\Omega$
is a $C^{2}-$ hypersurface, and: $(i)$ for fixed $r>0$ and for every point
$x_{0}\in$ $\partial\Omega$ there exists a ball $B_{r}(x_{0})=\left\{
x\in\mathbb{R}^{3}:\left\vert x-x_{0}\right\vert <r\right\}  $ and the
homeomorphism $\varphi_{x_{0}}:B_{r}(x_{0})\rightarrow B_{1}(0)$ such that
\begin{align*}
\varphi_{x_{0}}\left(  B_{r}(x_{0})\cap\Omega\right)   &  =B_{1}%
(0)\cap\mathbb{R}_{+}^{n},\mathbb{R}_{+}^{n}=\left\{  y\in\mathbb{R}^{n}%
:y_{n}>0\right\}  ,\\
\varphi_{x_{0}}\left(  B_{r}(x_{0})\cap\partial\Omega\right)   &
=B_{1}(0)\cap\mathbb{R}^{n-1},\mathbb{R}^{n-1}=\left\{  y\in\mathbb{R}%
^{n}:y_{n}=0\right\}  ;
\end{align*}
\ \ \ \ $(ii)$ if $\varphi_{x_{0}}^{i},i=1,...,n$ are the coordinate functions
of the mappings $\varphi_{x_{0}}$ then%
\[
\sup_{x_{0}\in\partial\Omega}\sup_{\left\vert \alpha\right\vert \leq2,x\in
B_{r}(x_{0})}\left\vert \partial_{x}^{\alpha}\varphi_{x_{0}}^{i}(x)\right\vert
<\infty,i=1,...,n.
\]

\item Note if $\Omega$ is a bounded domain with $C^{2}-$boundary
$\partial\Omega$, then $\partial\Omega$ is the uniformly regular surface.
\end{itemize}

\subsection{Free Dirac operators}

$\bigskip$We denote by
\[
\mathfrak{D}=\boldsymbol{\alpha}\cdot i\boldsymbol{\nabla}+\mathfrak{\alpha
}_{0}m=%
{\displaystyle\sum\limits_{j=1}^{3}}
\mathfrak{\alpha}_{j}D_{x_{j}}+\mathfrak{\alpha}_{0}m,D_{x_{j}}=i\partial
_{x_{j}}%
\]
the free Dirac operator (see for instance \cite{Thaller}) where
$\mathfrak{\alpha}_{j},j=0,1,2,3$ are the $4\times4$ Dirac matrices
\begin{equation}
\mathfrak{\alpha}_{0}=\left(
\begin{array}
[c]{cc}%
I_{2} & 0\\
0 & -I_{2}%
\end{array}
\right)  ,\alpha_{j}=\left(
\begin{array}
[c]{cc}%
0 & \sigma_{j}\\
\sigma_{j} & 0
\end{array}
\right)  ,j=1,2,3, \label{1.1}%
\end{equation}%
\begin{equation}
\mathfrak{\sigma}_{1}=\left(
\begin{array}
[c]{cc}%
0 & 1\\
1 & 0
\end{array}
\right)  ,\sigma_{2}=\left(
\begin{array}
[c]{cc}%
0 & -i\\
i & 0
\end{array}
\right)  ,\sigma_{3}=\left(
\begin{array}
[c]{cc}%
1 & 0\\
0 & -1
\end{array}
\right)  \label{1.2}%
\end{equation}
are the $2\times2$ Pauli matrices satisfying the relations
\begin{equation}
\sigma_{j}\sigma_{k}+\sigma_{k}\sigma_{j}=2\delta_{jk}I_{2},j,k=1,2,3.
\label{1.3}%
\end{equation}
Relations (\ref{1.3}) yield that
\begin{equation}
\mathfrak{\alpha}_{j}\mathfrak{\alpha}_{k}\mathfrak{+\alpha}_{k}%
\mathfrak{\alpha}_{j}=2\delta_{jk}I_{4};j,k=0,1,2,3 \label{1.4}%
\end{equation}
where $I_{n}$ is the $n\times n$ unit matrix. Equality (\ref{1.4}) implies
that
\[
\left(  \boldsymbol{\alpha}\cdot i\boldsymbol{\nabla}\right)  ^{2}=-\Delta
I_{4}%
\]
where $\Delta=\partial_{x_{1}}^{2}+\partial_{x_{2}}^{2}+\partial_{x_{3}}^{2}$
is the $3D-$Laplacian. Moreover,
\[
\mathfrak{D}^{2}=\left(  -\Delta+m^{2}\right)  I_{4}.
\]
It is well-known that the unbounded in $L^{2}(\mathbb{R}^{3},\mathbb{C}^{4})$
operator $\mathcal{D}$ generated by the free Dirac operator $\mathfrak{D}$
with domain $H^{1}(\mathbb{R}^{3},\mathbb{C}^{4})$ is self-adjoint and
\[
sp\mathcal{D}=sp_{ess}\mathcal{D}=\left(  -\infty,-\left\vert m\right\vert
\right]
{\displaystyle\bigcup}
\left[  \left\vert m\right\vert ,+\infty\right)
\]
(see for instance \cite{Thaller}).

\section{Parameter-dependent boundary problems for $3-D$ Dirac operators}

\subsection{Lopatinsky-Shapiro conditions}

We consider the parameter-dependent operator of boundary value problem%

\begin{equation}
\mathbb{D}_{\boldsymbol{A,\Phi,}\mathfrak{B}}(i\mu)\boldsymbol{u=}\left\{
\begin{array}
[c]{c}%
(\mathfrak{D}_{\boldsymbol{A},\Phi}-i\mu I_{4})\boldsymbol{u}=\mathfrak{D}%
_{\boldsymbol{A},\Phi}(i\mu)\boldsymbol{u=f}\text{\ \ on }\Omega,\mu
\in\mathbb{R},\\
\mathfrak{B}\boldsymbol{u}=\mathfrak{b}_{1}\boldsymbol{u}_{\partial\Omega}%
^{1}+\mathfrak{b}_{2}\boldsymbol{u}_{\partial\Omega}^{2}=\boldsymbol{0}\text{
\ on }\partial\Omega
\end{array}
\right.  , \label{2.1}%
\end{equation}
where $\Omega\subset\mathbb{R}^{3}$ is a domain with $C^{2}-$uniformly regular
boundary, and
\begin{equation}
\boldsymbol{A}\in L^{\infty}(\Omega,\mathbb{C}^{4}),\Phi\in L^{\infty}%
(\Omega),\mathfrak{b}_{k}=\left(  b_{k}^{ij}\right)  _{i,j=1}^{2},b_{k}%
^{ij}\in C_{b}(\partial\Omega),k=1,2, \label{2.2}%
\end{equation}
$\boldsymbol{\ u}=(\boldsymbol{u}^{1},\boldsymbol{u}^{2})\in H^{1}%
(\Omega,\mathbb{C}^{4})=H^{1}(\Omega,\mathbb{C}^{2})\oplus H^{1}%
(\Omega,\mathbb{C}^{2}),\boldsymbol{u}_{\partial\Omega}^{j}\in H^{1/2}%
(\partial\Omega,\mathbb{C}^{2})$ are boundary values of the vector-functions
$\boldsymbol{u}^{j}$ on $\partial\Omega,j=1,2.$ We study the invertibility of
the operator $\mathbb{D}_{\boldsymbol{A,\Phi,}\mathfrak{B}}(i\mu):H^{1}%
(\Omega,\mathbb{C}^{4})\rightarrow L^{2}(\Omega,\mathbb{C}^{4}),\mu
\in\mathbb{R}$ for large values of $\left\vert \mu\right\vert .$ We follows
the well-known paper \cite{AgrVishik} where the authors consider the
invertibility of general parameter-dependent boundary value problems in smooth
bounded domains.

We construct the locally inverse operators at every fixed point $x\in
\bar{\Omega}$ and then we obtain the globally inverse operator for large
values of $\left\vert \mu\right\vert $ by gluing together the locally inverse
operators by means of a countable partition of unity of finite multiplicity.

We need the Sobolev spaces $H_{\mu}^{s}(\mathbb{R}^{3},\mathbb{C}^{4})$ of
distributions $\boldsymbol{u}\in H^{s}(\mathbb{R}^{3},\mathbb{C}^{4})$ with
the norm depending on the parameter $\mu\in\mathbb{R}$
\[
\left\Vert \boldsymbol{u}\right\Vert _{\left\Vert u\right\Vert _{H_{\mu}%
^{s}(\mathbb{R}^{n},\mathbb{C}^{4})}}=\left(  \int_{\mathbb{R}^{3}%
}(1+\left\vert \xi\right\vert ^{2}+\mu^{2})^{s}\left\vert \boldsymbol{\hat{u}%
}(\xi)\right\vert ^{2}d\xi\right)  ^{1/2}<\infty,s\geq0,\mu\in\mathbb{R}%
\]
where $\boldsymbol{\hat{u}}(\xi)=\int_{\mathbb{R}^{n}}e^{ix\cdot\xi
}\boldsymbol{u}(x)dx$ is the Fourier transform. We denote by $H_{\mu}%
^{s}(\Omega,\mathbb{C}^{4})$ the space of the restrictions on the domain
$\Omega$ the distributions $\boldsymbol{u}\in H_{\mu}^{s}(\mathbb{R}%
^{3},\mathbb{C}^{4})$ with the norm
\[
\left\Vert \boldsymbol{u}\right\Vert _{H_{\mu}^{s}(\Omega,\mathbb{C}^{4}%
)}=\inf_{l\boldsymbol{u}\in H_{\mu}^{s}(\mathbb{R}^{3},\mathbb{C}^{4}%
)}\left\Vert l\boldsymbol{u}\right\Vert _{H_{\mu}^{s}(\mathbb{R}%
^{3},\mathbb{C}^{4})}%
\]
where $l\boldsymbol{u}$ is the extension of $\boldsymbol{u}\in$ $H_{\mu}%
^{s}(\Omega,\mathbb{C}^{4})$ on $\mathbb{R}^{3}$. Note that
\begin{equation}
\left\Vert \boldsymbol{u}\right\Vert _{H_{\mu}^{s}\Omega,\mathbb{C}^{4})}%
\leq(1+\mu^{2})^{-\frac{(r-s)}{2}}\left\Vert \boldsymbol{u}\right\Vert
_{H_{\mu}^{r}(\Omega,\mathbb{C}^{4})},r\geq s,\mu\in\mathbb{R} \label{2.3}%
\end{equation}
and the trace operator $\gamma_{\partial\Omega}$ is bounded from $H_{\mu}%
^{1}(\Omega,\mathbb{C}^{4})$ into $H_{\mu}^{1/2}(\partial\Omega,\mathbb{C}%
^{4})$ if $\partial\Omega$ is a $C^{l}-$surface, $l>s.$ We consider the
operator $\mathbb{D}_{\boldsymbol{A,}\Phi,\mathfrak{B}}$ as acting from
$H_{\mu}^{1}(\Omega,\mathbb{C}^{4})$ into $L^{2}(\Omega,\mathbb{C}^{4}).$

Let $\mathfrak{D}(\mu)=\boldsymbol{\alpha\cdot}i\boldsymbol{\nabla}-i\mu
I_{4}$ be the main part of the parameter-dependent operator $\mathfrak{D}%
_{\boldsymbol{A},\Phi}(i\mu)=\mathfrak{D}_{\boldsymbol{A},\Phi}-i\mu I_{4}.$
Since
\[
(\boldsymbol{\alpha\cdot\xi}+i\mu I_{4})\left(  \boldsymbol{\alpha\cdot\xi
}-i\mu I_{4}\right)  =\left(  \left\vert \boldsymbol{\xi}\right\vert ^{2}%
+\mu^{2}\right)  I_{4}%
\]
the operator $\mathfrak{D}_{\boldsymbol{A},\Phi}-i\mu I_{4}$ is the uniformly
elliptic operator with parameter $\mu\in\mathbb{R}$ \ (see for instance
\cite{AgrVishik},\cite{Agran}).

For a point $x_{0}\in\partial\Omega$ we fix the local system of orthogonal
coordinates $y=(y_{1},y_{2},y_{3})$ where the $y^{\prime}=(y_{1},y_{2})$
belongs to the tangent plane to $\partial\Omega$ at the point $x_{0}$ and the
axis $y_{3}=z$ is directed along the inward normal vector $\boldsymbol{\nu
}_{x_{0}}$\ \ to $\partial\Omega$ at the point $x_{0},$ and we associate with
the point $x_{0}\in\partial\Omega$ the family of $1-D$ boundary value problems
on the half-line $\mathbb{R}_{+}=\left\{  z\in\mathbb{R}:z>0\right\}  $
\begin{align}
&  \mathbb{\hat{D}}_{\mathfrak{B}_{x_{0}}}(\boldsymbol{\xi}^{\prime}%
,\mu)\boldsymbol{\psi}(z)\label{2.5}\\
&  =\left\{
\begin{array}
[c]{c}%
\left(  \boldsymbol{\alpha}^{\prime}\cdot\boldsymbol{\xi}^{\prime}+i\alpha
_{3}\frac{d}{dz}-i\mu I_{4}\right)  \boldsymbol{\psi}(z),z\in\mathbb{R}%
_{+},\boldsymbol{\alpha}^{\prime}\cdot\boldsymbol{\xi}^{\prime}=\alpha_{1}%
\xi_{1}+\alpha_{2}\xi_{2}\\
\mathfrak{B}_{x_{0}}\boldsymbol{\psi}(0)=\mathfrak{b}^{1}(x_{0}%
)\boldsymbol{\psi}^{1}(+0)+\mathfrak{b}^{2}(x_{0})\boldsymbol{\psi}%
^{2}(\boldsymbol{+}0)\boldsymbol{=0}%
\end{array}
\right. \nonumber
\end{align}
acting from $H^{1}(\mathbb{R}_{+},\mathbb{C}^{4})$ into $L^{2}(\mathbb{R}%
_{+},\mathbb{C}^{4}),$where $\boldsymbol{\psi}^{j}(+0)=\lim_{z\rightarrow
+0}\boldsymbol{\psi}^{j}(z),j=1,2.$

We are looking for the exponentially decreasing solutions of the equation
\begin{equation}
\mathbb{\hat{D}}_{\mathfrak{B}_{x_{0}}}(\boldsymbol{\xi}^{\prime}%
,\mu)\boldsymbol{\psi=0}\text{ on }\mathbb{R}_{+}. \label{2.6}%
\end{equation}
Note that
\begin{equation}
\left(  \boldsymbol{\alpha}^{\prime}\cdot\boldsymbol{\xi}^{\prime
}+i\mathfrak{\alpha}_{3}\frac{d}{dz}+i\mu I_{4}\right)  \left(
\boldsymbol{\alpha}^{\prime}\cdot\boldsymbol{\xi}^{\prime}+i\mathfrak{\alpha
}_{3}\frac{d}{dz}-i\mu I_{4}\right)  =\left(  \left\vert \boldsymbol{\xi
}^{\prime}\right\vert ^{2}+\mu^{2}-\frac{d^{2}}{dz^{2}}\right)  I_{4}.
\label{2.7}%
\end{equation}
Hence the equation%
\begin{equation}
\left(  \boldsymbol{\alpha}^{\prime}\cdot\boldsymbol{\xi}^{\prime
}+i\mathfrak{\alpha}_{3}\frac{d}{dz}-i\mu I_{4}\right)  \boldsymbol{\psi}(z)=0
\label{2.8}%
\end{equation}
has the exponentially decreasing solutions on $\mathbb{R}_{+}$ of the form
\begin{equation}
\boldsymbol{\psi}(z)=\boldsymbol{h}\left(  \boldsymbol{\xi}^{\prime}%
,\mu\right)  e^{-\rho z},\rho=\sqrt{\left\vert \boldsymbol{\xi}^{\prime
}\right\vert ^{2}+\mu^{2}}>0,z>0,(\boldsymbol{\xi}^{\prime},\mu)\in
\mathbb{R}^{3} \label{2.9}%
\end{equation}
where the vector $\boldsymbol{h=h}\left(  \boldsymbol{\xi}^{\prime}%
,\mu\right)  \in\mathbb{C}^{4}$ satisfies the equation
\begin{equation}
\left(  \boldsymbol{\alpha}^{\prime}\cdot\boldsymbol{\xi}^{\prime}-i\alpha
_{3}\rho-i\mu I_{4}\right)  \boldsymbol{h}=0. \label{2.10}%
\end{equation}
The general solution of equation (\ref{2.10}) has the form
\begin{equation}
\boldsymbol{h}\left(  \boldsymbol{\xi}^{\prime},\mu\right)  =\Theta\left(
\boldsymbol{\xi}^{\prime},\mu\right)  \boldsymbol{f} \label{2.11}%
\end{equation}
where $\boldsymbol{f}$ $\in\mathbb{C}^{4}$ is an arbitrary vector, and
\[
\Theta\left(  \boldsymbol{\xi}^{\prime},\mu\right)  =\alpha^{\prime}%
\cdot\boldsymbol{\xi}^{\prime}+i\rho\alpha_{3}+i\mu I_{4}=\left(
\begin{array}
[c]{cc}%
i\mu I_{2} & \Lambda(\boldsymbol{\xi}^{\prime},\mu)\\
\Lambda(\boldsymbol{\xi}^{\prime},\mu) & i\mu I_{2}%
\end{array}
\right)  ,
\]
\begin{align}
\Lambda &  =\Lambda(\boldsymbol{\xi}^{\prime},\mu)=\sigma^{\prime}%
\cdot\boldsymbol{\xi}^{\prime}-i\rho\sigma_{3}=\left(
\begin{array}
[c]{cc}%
-i\rho & \bar{\varsigma}\\
\varsigma & i\rho
\end{array}
\right)  ,\varsigma=\xi_{1}+i\xi_{2},\label{2.12}\\
\sigma^{\prime}\cdot\boldsymbol{\xi}^{\prime}  &  =\sigma_{1}\xi_{1}%
+\sigma_{2}\xi_{2}.\nonumber
\end{align}

We set%

\[
\boldsymbol{e}_{1}=\left(
\begin{array}
[c]{c}%
1\\
0
\end{array}
\right)  ,\boldsymbol{e}_{2}=\left(
\begin{array}
[c]{c}%
0\\
1
\end{array}
\right)  ,\boldsymbol{0}=\left(
\begin{array}
[c]{c}%
0\\
0
\end{array}
\right)  ,
\]
and
\[
\boldsymbol{h}_{1}=\boldsymbol{h}_{1}(\boldsymbol{\xi}^{\prime},\mu
)=\Theta(\boldsymbol{\xi}^{\prime},\mu)\left(
\begin{array}
[c]{c}%
\boldsymbol{e}_{1}\\
\boldsymbol{0}%
\end{array}
\right)  =\left(
\begin{array}
[c]{c}%
i\mu\boldsymbol{e}_{1}\\
\Lambda(\boldsymbol{\xi}^{\prime},\mu)\boldsymbol{e}_{1}%
\end{array}
\right)  ,
\]%
\begin{equation}
\boldsymbol{h}_{2}=\boldsymbol{h}_{2}(\boldsymbol{\xi}^{\prime},\mu
)=\Theta(\boldsymbol{\xi}^{\prime},\mu)\left(
\begin{array}
[c]{c}%
0\\
\boldsymbol{e}_{2}%
\end{array}
\right)  =\left(
\begin{array}
[c]{c}%
\Lambda(\boldsymbol{\xi}^{\prime},\mu)\boldsymbol{e}_{2}\\
i\mu\boldsymbol{e}_{2}%
\end{array}
\right)  \label{2.13}%
\end{equation}
where
\[
\Lambda(\boldsymbol{\xi}^{\prime},\mu)\boldsymbol{e}_{1}=\left(
\begin{array}
[c]{c}%
-i\rho\\
\varsigma
\end{array}
\right)  ,\Lambda(\boldsymbol{\xi}^{\prime},\mu)\boldsymbol{e}_{2}=\left(
\begin{array}
[c]{c}%
\bar{\varsigma}\\
i\rho
\end{array}
\right)  .
\]
The vectors $\boldsymbol{h}_{1}(\xi^{\prime},\mu),\boldsymbol{h}_{2}%
(\xi^{\prime},\mu)$ are orthogonal and satisfy equation (\ref{2.10}). Hence
\[
\left\{  \boldsymbol{h}_{1}(\xi^{\prime},\mu)e^{-\rho z},\boldsymbol{h}%
_{2}(\xi^{\prime},\mu)e^{-\rho z}\right\}
\]
is a fundamental system of solutions of equation (\ref{2.8}) in $L^{2}%
(\mathbb{R}_{+},\mathbb{C}^{4}).$ Then every solution $\boldsymbol{u}\in
L^{2}(\mathbb{R}_{+},\mathbb{C}^{4})$ of equation (\ref{2.8}) on
$\mathbb{R}_{+}$ is of the form%
\begin{equation}
\boldsymbol{\psi}(z)=C_{1}\boldsymbol{h}_{1}e^{-\rho z}+C_{2}\boldsymbol{h}%
_{2}e^{-\rho z},\rho>0 \label{2.14}%
\end{equation}
where $C_{1},C_{2}\in\mathbb{C}$. Substituting $\boldsymbol{\psi}$ \ in the
boundary condition
\[
\mathfrak{b}_{1}(x_{0})\boldsymbol{\psi}^{1}(+0)+\mathfrak{b}(x_{0}%
)\boldsymbol{\psi}^{2}(+0)=\boldsymbol{0}%
\]
we\ obtain a system of linear equations
\begin{equation}
\left(  \mathfrak{b}_{1}(x_{0})\boldsymbol{h}_{1}^{1}+\mathfrak{b}_{2}%
(x_{0})\boldsymbol{h}_{1}^{2}\right)  C_{1}+(\mathfrak{b}_{1}(x_{0}%
)\boldsymbol{h}_{2}^{1}+\mathfrak{b}_{2}(x_{0})\boldsymbol{h}_{2}^{2}%
)C_{2}=\boldsymbol{0} \label{2.15}%
\end{equation}
with respect to $C_{1},C_{2}.$ Let $\mathcal{L(}x_{0},\boldsymbol{\xi}%
^{\prime},\mu)=(b_{1}(x_{0},\boldsymbol{\xi}^{\prime},\mu),b_{2}%
(x_{0},\boldsymbol{\xi}^{\prime},\mu))$ be the matrix with columns
\begin{align}
b_{1}(x_{0},\boldsymbol{\xi}^{\prime},\mu)  &  =\mathfrak{b}_{1}%
(x_{0})\boldsymbol{h}_{1}^{1}\left(  \boldsymbol{\xi}^{\prime},\mu\right)
+\mathfrak{b}_{2}(x_{0})\boldsymbol{h}_{1}^{2}\left(  \boldsymbol{\xi}%
^{\prime},\mu\right)  ,\label{2.16}\\
b_{2}(x_{0},\boldsymbol{\xi}^{\prime},\mu)  &  =\mathfrak{b}_{1}%
(x_{0})\boldsymbol{h}_{2}^{1}\left(  \boldsymbol{\xi}^{\prime},\mu\right)
+\mathfrak{b}_{2}(x_{0})\boldsymbol{h}_{2}^{2}\left(  \boldsymbol{\xi}%
^{\prime},\mu\right)  .\nonumber
\end{align}
System (\ref{2.15}) has the trivial solution if and only if
\[
\det\mathcal{L(}x_{0},\boldsymbol{\xi}^{\prime},\mu)\neq0
\]

\begin{definition}
\label{de2.1} $(i)$ We say that the operator $\mathbb{D}_{\boldsymbol{A,\Phi
,}\mathfrak{B}}(i\mu)$ satisfies the local parameter-dependent
Lopatinsky-Shapiro condition at \ $x_{0}\in\partial\Omega$ if
\begin{equation}
\det\mathcal{L(}x_{0},\boldsymbol{\xi}^{\prime},\mu)\neq0\text{ for every
}\left(  \boldsymbol{\xi}^{\prime},\mu\right)  :\left\vert \boldsymbol{\xi
}^{\prime}\right\vert ^{2}+\mu^{2}=1. \label{2.17}%
\end{equation}
$(ii)$ We say that the operator $\mathbb{D}_{\boldsymbol{A,\Phi,}\mathfrak{B}%
}(i\mu)$ satisfies the uniform parameter-dependent Lopatinsky-Shapiro
condition if
\begin{equation}
\inf_{x\in\partial\Omega,\mu^{2}+\left\vert \boldsymbol{\xi}^{\prime
}\right\vert ^{2}=1}\left\vert \det\mathcal{L(}x,\boldsymbol{\xi}^{\prime}%
,\mu)\right\vert >0. \label{2.18}%
\end{equation}

\end{definition}

Note that if the boundary $\partial\Omega$ is a compact set and the local
parameter-dependent Lopatinsky-Shapiro condition (\ref{2.17}) is satisfied at
every point $x\in\partial\Omega,$ then the uniform Lopatinsky-Shapiro
condition (\ref{2.18}) holds.

\subsection{Standard Lopatinsky-Shapiro condition}

We introduce the standard Lopatinsky-Shapiro condition for the operator
$\mathbb{D}_{\boldsymbol{A,}\Phi,\mathfrak{B}}$ for domains $\Omega
\subset\mathbb{R}^{3}$ with a $C^{2}$-boundary. As above for a fix point
$x_{0}\in\partial\Omega$ we introduce the local system of coordinates
$y=(y_{1},y_{2},y_{3})$ where the axis $y_{1},y_{2}\in\mathbb{T}_{x_{0}}$ the
tangent plane to $\partial\Omega$ at the point $x_{0}$ and the axis $y_{3}=z$
is directed along the outward normal vector $\boldsymbol{\nu}$ \ to
$\partial\Omega$ at the point $x_{0}.$

Let
\begin{equation}
\boldsymbol{\psi}_{1}(\xi^{\prime},z)=\boldsymbol{h}_{1}(\xi^{\prime
})e^{-\left\vert \xi^{\prime}\right\vert z},\boldsymbol{\psi}_{2}(\xi^{\prime
},z)=\boldsymbol{h}_{2}(\xi^{\prime})e^{-\left\vert \xi^{\prime}\right\vert
z},z>0
\end{equation}
where
\begin{equation}
\boldsymbol{h}_{1}(\xi^{\prime})=\left(
\begin{array}
[c]{c}%
\Lambda(\xi^{\prime})\boldsymbol{e}_{1}\\
\boldsymbol{0}%
\end{array}
\right)  ,\boldsymbol{h}_{2}(\xi^{\prime})=\left(
\begin{array}
[c]{c}%
\boldsymbol{0}\\
\Lambda(\xi^{\prime})\boldsymbol{e}_{2}%
\end{array}
\right)  .
\end{equation}
We introduce the $2\times2$ matrix
\[
\mathcal{L}(x_{0},\xi^{\prime})=(\mathfrak{b}_{1}(x_{0})\Lambda(\xi^{\prime
})\boldsymbol{e}_{1},\mathfrak{b}_{2}(x_{0})\Lambda(\xi^{\prime}%
)\boldsymbol{e}_{2})
\]
which coincides with the matrix $\mathcal{L}(x_{0},\xi^{\prime},0)$.

\begin{definition}
\label{d2.2}We say that the operator $\mathbb{D}_{\boldsymbol{A,}%
\Phi,\mathfrak{B}}$ satisfies the local standard Lopatinsky-Shapiro condition
at the point $x_{0}\in\partial\Omega$ if
\begin{equation}
\det\mathcal{L}(x_{0},\xi^{\prime})\neq0\text{ for all }\xi^{\prime}\in S^{1}
\label{2.18'}%
\end{equation}
and the operator $\mathbb{D}_{\boldsymbol{A,}\Phi,\mathfrak{B}}$ satisfies the
standard Lopatinsky-Shapiro conditions \ uniformly if
\begin{equation}
\inf_{x_{0}\in\partial\Omega,\xi^{\prime}\in S^{1}}\left\vert \det
\mathcal{L}(x_{0},\xi^{\prime})\right\vert >0. \label{2.18''}%
\end{equation}

\end{definition}

\subsection{Invertibility of $\mathbb{D}_{\boldsymbol{A,}\Phi,\mathfrak{B}%
}(i\mu)$ for large values of $\left\vert \mu\right\vert $}

\begin{theorem}
\label{t2.2} Let $\Omega\subset\mathbb{R}^{3}$ be a domain with a $C^{2}%
-$uniformly regular boundary $\partial\Omega$, the magnetic potential
$\boldsymbol{A=}(A_{1},A_{2},A_{3})\in L^{\infty}(\Omega,\mathbb{C}^{3}),$ the
electrostatic potential $\Phi\in L^{\infty}(\Omega),$ $\mathfrak{b}_{j}\in
C_{b}(\partial\Omega)\otimes\mathcal{B(}\mathbb{C}^{2}),j=1,2,$ and the
uniform parameter-dependent Lopatinsky condition (\ref{2.18}) for
$\mathbb{D}_{\boldsymbol{A},\Phi,\mathfrak{B}}(i\mu),\mu\in\mathbb{R}$ be
satisfied. Then there exists $\mu_{0}>0$ such that the operator $\mathbb{D}%
_{\boldsymbol{A,\Phi,}\mathfrak{B}}(i\mu):H^{1}(\Omega,\mathbb{C}%
^{4})\rightarrow L^{2}(\Omega,\mathbb{C}^{4})$ is invertible for every $\mu
\in\mathbb{R}:\left\vert \mu\right\vert >\mu_{0}.$
\end{theorem}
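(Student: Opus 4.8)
The plan is to follow the Agranovich--Vishik scheme indicated above: construct local inverses of the operator at interior and at boundary points, glue them by a locally finite partition of unity into a global approximate inverse, and then remove the remainder with a Neumann series once $|\mu|$ is large. Throughout I would work in the parameter-dependent spaces $H_{\mu}^{s}$, exploiting the gain (\ref{2.3}) of a factor $(1+\mu^{2})^{-1/2}$ for each unit drop in order, which is exactly what forces every lower-order contribution to vanish as $|\mu|\to\infty$.

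First I would peel off the lower-order part. Writing
\[
\mathbb{D}_{\boldsymbol{A,\Phi,}\mathfrak{B}}(i\mu)=\mathbb{D}^{0}_{\mathfrak{B}}(i\mu)+\mathcal{V},\qquad \mathcal{V}=\boldsymbol{\alpha}\cdot\boldsymbol{A}+\alpha_{0}m+\Phi I_{4},
\]
where $\mathbb{D}^{0}_{\mathfrak{B}}(i\mu)$ carries the principal symbol $\boldsymbol{\alpha}\cdot i\boldsymbol{\nabla}-i\mu I_{4}$ together with the same (zeroth-order) boundary operator $\mathfrak{B}$, the bounded multiplication operator $\mathcal{V}$ satisfies $\|\mathcal{V}\boldsymbol{u}\|_{L^{2}}\le C(1+\mu^{2})^{-1/2}\|\boldsymbol{u}\|_{H_{\mu}^{1}}$ by (\ref{2.3}). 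Hence it suffices to invert $\mathbb{D}^{0}_{\mathfrak{B}}(i\mu)$ with a bound uniform in $\mu$, after which $\mathcal{V}$ is absorbed as a small perturbation.

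For the local inverses, at an interior point the factorization $(\boldsymbol{\alpha}\cdot\boldsymbol{\xi}+i\mu I_{4})(\boldsymbol{\alpha}\cdot\boldsymbol{\xi}-i\mu I_{4})=(|\boldsymbol{\xi}|^{2}+\mu^{2})I_{4}$ inverts the symbol explicitly and yields an interior parametrix bounded uniformly in $\mu$. At a boundary point $x_{0}$ I would flatten $\partial\Omega$ through the chart $\varphi_{x_{0}}$, freeze the coefficients, and Fourier-transform in the tangential variables; this reduces the half-space model to the one-dimensional problem (\ref{2.5}), whose $L^{2}(\mathbb{R}_{+})$ solutions are spanned by $\boldsymbol{h}_{1}e^{-\rho z}$ and $\boldsymbol{h}_{2}e^{-\rho z}$. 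Imposing the boundary condition gives the $2\times2$ system (\ref{2.15}), and the uniform Lopatinsky--Shapiro condition (\ref{2.18}) provides a lower bound for $|\det\mathcal{L}(x_{0},\boldsymbol{\xi}^{\prime},\mu)|$ independent of $x_{0}$; solving this system produces a boundary solution operator whose norm in the parameter-dependent trace spaces $H_{\mu}^{1/2}(\partial\Omega,\mathbb{C}^{2})$ is uniform over $\partial\Omega$.

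For the gluing I would invoke the $C^{2}$-uniform regularity of $\partial\Omega$, which furnishes a covering of $\bar{\Omega}$ by balls of fixed radius with a subordinate partition of unity $\{\chi_{j}\}$ of finite multiplicity, separated into interior and boundary balls. Setting $\mathcal{R}(\mu)=\sum_{j}\psi_{j}\mathcal{R}_{j}(\mu)\chi_{j}$ with $\mathcal{R}_{j}$ the associated local inverse and $\psi_{j}\equiv1$ near $\operatorname{supp}\chi_{j}$, applying $\mathbb{D}^{0}_{\mathfrak{B}}(i\mu)$ gives $I+T(\mu)$, where $T(\mu)$ collects the commutators with the cutoffs and the coefficient-freezing errors. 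Each of these is one order below the principal part, hence of size $O((1+\mu^{2})^{-1/2})$ by (\ref{2.3}), and finite multiplicity keeps the sum bounded; a symmetric construction yields a left regularizer. Choosing $\mu_{0}$ so large that both $\|T(\mu)\|$ and $\|\mathcal{V}\|_{H_{\mu}^{1}\to L^{2}}$ are below the relevant thresholds for $|\mu|>\mu_{0}$, the Neumann series produces a genuine two-sided inverse. The main obstacle I expect is the uniformity over the possibly unbounded boundary---showing that the local boundary-inverse norms, the lower bound for $|\det\mathcal{L}|$, and the multiplicity of the partition are controlled independently of the base point---together with the careful bookkeeping of the parameter-dependent trace norms; this is precisely where the $C^{2}$-uniform regularity and the uniform Lopatinsky--Shapiro condition (\ref{2.18}) enter.
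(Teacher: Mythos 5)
Your proposal follows essentially the same route as the paper's proof: local parameter-dependent inverses at interior and boundary points (justified by uniform parameter-ellipticity and the uniform Lopatinsky--Shapiro condition (\ref{2.18})), glued by a finite-multiplicity partition of unity furnished by the $C^{2}$-uniform regularity of $\partial\Omega$, with the commutator and freezing errors estimated as $O(\left\vert \mu\right\vert ^{-1})$ via (\ref{2.3}) and removed by a Neumann series, yielding left and right inverses for large $\left\vert \mu\right\vert$. The only cosmetic difference is that you peel off the bounded lower-order term $\boldsymbol{\alpha}\cdot\boldsymbol{A}+\alpha_{0}m+\Phi I_{4}$ as a separate small perturbation, whereas the paper constructs the local inverses directly for the full operator; both variants are sound.
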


\begin{proof}
The idea of the proof is similar to given in the paper \cite{AgrVishik} (see
also \cite{Agran}, Sec.3). However, we consider the parameter-dependent
boundary problems for unbounded domains, and therefore we need an infinite
partition of unity and estimates associated with them. \ 

Since the Dirac operator $\mathfrak{D}_{\boldsymbol{A},\Phi}(i\mu)$ is the
uniformly elliptic parameter-depending operator on $\bar{\Omega}$, and the
uniform Lopatinsky-Shapiro condition (\ref{2.18}) holds, and $\partial\Omega$
is a $C^{2}-$uniformly regular surface, then there exists $r>0$ and $\mu
_{0}>0$ such that for every point $x_{0}\in\bar{\Omega}$ there exist
operators
\[
L_{x_{0}}(\mu),R_{x_{0}}(\mu)\in\mathcal{B}(L^{2}(\Omega,\mathbb{C}%
^{4}),H_{\mu}^{1}(\Omega,\mathbb{C}^{4}))\text{ }%
\]
such that
\begin{align}
\sup_{x_{0}\in\bar{\Omega},\left\vert \mu\right\vert \geq\mu_{0}}\left\Vert
L_{x_{0}}(\mu)\right\Vert _{\mathcal{B(}L^{2}(\Omega,\mathbb{C}^{4}%
),H^{1}(\Omega,\mathbb{C}^{4}))}  &  =d_{L}<\infty,\label{2.21}\\
\sup_{x_{0}\in\bar{\Omega},\left\vert \mu\right\vert \geq\mu_{0}}\left\Vert
R_{x_{0}}(\mu)\right\Vert _{\mathcal{B(}L^{2}(\Omega,\mathbb{C}^{4}%
),H^{1}(\Omega,\mathbb{C}^{4}))}  &  =d_{R}<\infty,\nonumber
\end{align}
and for every function $\varphi\in C_{0}^{\infty}(B_{r}(x_{0}))$
\begin{align}
L_{x_{0}}(\mu)\mathbb{D}_{\boldsymbol{A},\Phi,\mathfrak{B,\Omega}}%
(i\mu)\varphi I  &  =\varphi I,\nonumber\\
\varphi\mathbb{D}_{\boldsymbol{A},\Phi,\mathfrak{B,\Omega}}(i\mu)R_{x_{0}}%
(\mu)  &  =\varphi I. \label{2.22}%
\end{align}
We choose from the coverage $%
{\displaystyle\bigcup\limits_{x_{0}\in\bar{\Omega}}}
B_{r}(x_{0})\supset\bar{\Omega}$ a countable sub-coverage $%
{\displaystyle\bigcup\limits_{j\in\mathbb{N}}}
B_{r}(x_{j})\supset\bar{\Omega}$ of a finite multiplicity $N\in\mathbb{N},$
and we construct the partition of unity
\begin{equation}
\sum_{j\in\mathbb{N}}\theta_{j}(x)=1,x\in\bar{\Omega} \label{2.22'}%
\end{equation}
subordinated to the coverage $%
{\displaystyle\bigcup\limits_{j\in\mathbb{N}}}
B_{r}(x_{j})$, with $\theta_{j}\in C_{0}^{\infty}(B_{r}(x_{j}))$, $0\leq
\theta_{j}(x)\leq1,$ such that the sum $\sum_{j\in\mathbb{N}}\theta_{j}(x)$
contains for every $x\in\bar{\Omega}$ not more than $N$ nonzero terms. \ Let
$\varphi_{j}\in C_{0}^{\infty}(B_{r}(x_{j}),0\leq\varphi_{j}(x)\leq1,$ and
$\theta_{j}\varphi_{j}=\theta_{j}.$ We set
\begin{equation}
L(\mu)\boldsymbol{f}=\sum_{j\in\mathbb{N}}\theta_{j}L_{x_{j}}(\mu)\varphi
_{j}\boldsymbol{f,f}\in C_{0}^{\infty}(\bar{\Omega},\mathbb{C}^{4}),
\label{2.24'}%
\end{equation}%
\begin{equation}
R(\mu)\boldsymbol{f}=\sum_{j\in\mathbb{N}}\varphi_{j}R_{x_{j}}(\mu)\theta
_{j}\boldsymbol{f},\boldsymbol{f}\in C_{0}^{\infty}(\bar{\Omega}%
,\mathbb{C}^{4}) \label{2.24}%
\end{equation}
where $C_{0}^{\infty}(\bar{\Omega},\mathbb{C}^{4})$ is the space of
restrictions of functions in $C_{0}^{\infty}(\mathbb{R}^{3},\mathbb{C}^{4})$
on $\bar{\Omega}.$ Taking into account that the coverage $\left\{  B_{r}%
(x_{j})\right\}  _{j\in\mathbb{N}}$ has the finite multiplicity $N$ we obtain
the estimates
\begin{equation}
\left\Vert L(\mu)\boldsymbol{f}\right\Vert _{H_{\mu}^{1}(\Omega,\mathbb{C}%
^{4})}\leq C\sup_{j\in\mathbb{N}}\left\Vert L_{j}(\mu)\right\Vert \left\Vert
\boldsymbol{f}\right\Vert _{L^{2}(\Omega,\mathbb{C}^{4})}\leq Cd_{L}\left\Vert
\boldsymbol{f}\right\Vert _{L^{2}(\Omega,\mathbb{C}^{4})}, \label{2.25}%
\end{equation}%
\begin{equation}
\left\Vert R(\mu)\boldsymbol{f}\right\Vert _{H_{\mu}^{1}(\Omega,\mathbb{C}%
^{4})}\leq C\sup_{j\in\mathbb{N}}\left\Vert R_{j}(\mu)\right\Vert \left\Vert
f\right\Vert _{L^{2}(\Omega,\mathbb{C}^{4})}\leq Cd_{R}\left\Vert
\boldsymbol{f}\right\Vert _{L^{2}(\Omega,\mathbb{C}^{4})} \label{2.26}%
\end{equation}
for every $\left\vert \mu\right\vert \geq\mu_{0}>0,\boldsymbol{f}\in
C_{0}^{\infty}(\bar{\Omega},\mathbb{C}^{4})$ with the constant $C>0$
independent of $\boldsymbol{f.}$ \ Estimates (\ref{2.25}), (\ref{2.26}) yield
that the operators $L(\mu),R(\mu)$ are continued to \ bounded operators acting
from $L^{2}(\Omega,\mathbb{C}^{4})$ into $H_{\mu}^{1}(\Omega,\mathbb{C}%
^{4})).$ Let $\psi_{j}\in C_{0}^{\infty}(B_{r}(x_{j})),0\leq\psi_{j}(x)\leq1$
$\varphi_{j}\psi_{j}=\varphi_{j},\psi_{j}\in C_{0}^{\infty}(B_{r}(x_{j})),$
$0\leq\psi_{j}(x)\leq1$ be such that $\varphi_{j}\psi_{j}=\varphi_{j}.$ Then
\begin{align}
L(\mu)\mathbb{D}_{\boldsymbol{A},\Phi,\mathfrak{B}}(i\mu)  &  =\sum
_{j\in\mathbb{N}}\theta_{j}L_{x_{j}}(\mu)\varphi_{j}\mathbb{D}_{\boldsymbol{A}%
,\Phi,\mathfrak{B}}(i\mu)\psi_{j}I=\label{2.27}\\
&  I+T_{1}(\mu),\nonumber
\end{align}
where
\[
T_{1}(\mu)=\sum_{j\in\mathbb{N}}\theta_{j}L_{x_{j}}(\mu)\left[  \mathbb{D}%
_{\boldsymbol{A},\Phi,\mathfrak{B,\Omega}}(i\mu),\varphi_{j}I\right]  \psi
_{j},
\]
and%
\[
\left[  \mathbb{D}_{\boldsymbol{A},\Phi,\mathfrak{B}}(i\mu),\varphi
_{j}I\right]  =\mathbb{D}_{\boldsymbol{A},\Phi,\mathfrak{B}}(\mu)\varphi
_{j}I-\varphi_{j}\mathbb{D}_{\boldsymbol{A},\Phi,\mathfrak{B}}(\mu).
\]
Applying estimate (\ref{2.3}) we obtain that
\begin{equation}
\left\Vert \left[  \mathbb{D}_{\boldsymbol{A},\Phi,\mathfrak{B}}(i\mu
),\varphi_{j}I\right]  \right\Vert _{\mathcal{B}(H_{\mu}^{1}(\Omega
,\mathbb{C}^{4}),L^{2}(\Omega,\mathbb{C}^{4}))}\leq\frac{C}{\left\vert
\mu\right\vert },\left\vert \mu\right\vert \geq\mu_{0} \label{2.28}%
\end{equation}
with the constant $C>0$ independent of $j\in\mathbb{N}$. Again applying the
finite multiplicity of the coverage $\left\{  B_{r}(x_{j})\right\}
_{j\in\mathbb{N}}$ we obtain from (\ref{2.25}), (\ref{2.28}) the estimate
\begin{equation}
\left\Vert T_{1}(\mu)\right\Vert _{\mathcal{B(}H^{1}(\Omega,\mathbb{C}%
^{4}),H^{1}(\Omega,\mathbb{C}^{4}))}\leq\frac{C}{\left\vert \mu\right\vert
}\sup_{j\in\mathbb{N}}\left\Vert L_{x_{j}}(\mu)\right\Vert \leq\frac{Cd_{L}%
}{\left\vert \mu\right\vert },\left\vert \mu\right\vert \geq\mu_{0}.
\label{2.29}%
\end{equation}
Hence there exists $\mu_{1}\geq\mu_{0}$ such that
\begin{equation}
\sup_{\left\vert \mu\right\vert \geq\mu_{1}}\left\Vert T_{1}(\mu)\right\Vert
_{\mathcal{B(}H^{1}(\Omega,\mathbb{C}^{4}),H^{1}(\Omega,\mathbb{C}^{4}))}<1.
\label{2.30}%
\end{equation}
Thus the operator $\mathbb{D}_{\boldsymbol{A},\Phi,\mathfrak{B}}(i\mu)$ has
the left inverse operator $\mathbb{L(\mu)=}\left(  I+T_{1}(\mu)\right)
^{-1}L(\mu)$ for every $\mu\in\mathbb{R}:\mu\geq\mu_{1}.$ In the same way we
prove that there exists a right inverse operator $\mathbb{R(\mu)}$ of
$\mathbb{D}_{\boldsymbol{A},\Phi,\mathfrak{B,\Omega}}(i\mu)$ for $\left\vert
\mu\right\vert \geq\mu_{2}>\mu_{1}.$ Hence the operator $\mathbb{D}%
_{\boldsymbol{A},\Phi,\mathfrak{B}}(i\mu):H_{\mu}^{1}(\Omega,\mathbb{C}%
^{4})\rightarrow L^{2}(\Omega,\mathbb{C}^{4})$ is invertible for every
$\left\vert \mu\right\vert \geq\mu_{2}.$
\end{proof}

\begin{corollary}
\label{co2.1} Let conditions of Theorem \ref{t2.2} be satisfied. Then there
exists $\tilde{\mu}>0$ such that the operator $\mathbb{D}_{\boldsymbol{A}%
,\Phi,\mathfrak{B}}(i\mu):H^{1}(\Omega,\mathbb{C}^{4})\rightarrow L^{2}%
(\Omega,\mathbb{C}^{4})$ is invertible for every $\mu\in\mathbb{R}:\left\vert
\mu\right\vert \geq\tilde{\mu}.$
\end{corollary}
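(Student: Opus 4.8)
The plan is to deduce the corollary from the proof of Theorem~\ref{t2.2} by the elementary observation that, for each \emph{fixed} value of the parameter, the parameter-dependent space $H_{\mu}^{1}(\Omega,\mathbb{C}^{4})$ and the ordinary Sobolev space $H^{1}(\Omega,\mathbb{C}^{4})$ coincide as sets of distributions and carry equivalent norms. Indeed, the proof of Theorem~\ref{t2.2} establishes that there is $\mu_{2}>0$ for which $\mathbb{D}_{\boldsymbol{A},\Phi,\mathfrak{B}}(i\mu)\colon H_{\mu}^{1}(\Omega,\mathbb{C}^{4})\to L^{2}(\Omega,\mathbb{C}^{4})$ is invertible whenever $\left\vert\mu\right\vert\geq\mu_{2}$; what remains is to transfer this invertibility to the operator acting on $H^{1}(\Omega,\mathbb{C}^{4})$.

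First I would record the norm equivalence on the whole space $\mathbb{R}^{3}$. From the pointwise bounds
\[
1+\left\vert\xi\right\vert^{2}\leq 1+\left\vert\xi\right\vert^{2}+\mu^{2}\leq(1+\mu^{2})(1+\left\vert\xi\right\vert^{2}),\quad\xi\in\mathbb{R}^{3},
\]
applied inside the Fourier integrals defining the two norms, one gets
\[
\left\Vert\boldsymbol{u}\right\Vert_{H^{1}(\mathbb{R}^{3},\mathbb{C}^{4})}\leq\left\Vert\boldsymbol{u}\right\Vert_{H_{\mu}^{1}(\mathbb{R}^{3},\mathbb{C}^{4})}\leq(1+\mu^{2})^{1/2}\left\Vert\boldsymbol{u}\right\Vert_{H^{1}(\mathbb{R}^{3},\mathbb{C}^{4})}.
\]
Passing to the domain $\Omega$ through the infimum over extensions that defines the restriction norms, the same two-sided estimate holds on $\Omega$ with identical constants. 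Consequently, for each fixed $\mu$ the identity map $\iota_{\mu}\colon H^{1}(\Omega,\mathbb{C}^{4})\to H_{\mu}^{1}(\Omega,\mathbb{C}^{4})$ is a topological isomorphism whose inverse is again bounded.

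Then I would conclude by composition. Writing the operator on $H^{1}$ as $\mathbb{D}_{\boldsymbol{A},\Phi,\mathfrak{B}}(i\mu)\big|_{H^{1}}=\big(\mathbb{D}_{\boldsymbol{A},\Phi,\mathfrak{B}}(i\mu)\big|_{H_{\mu}^{1}}\big)\circ\iota_{\mu}$, we see it is the composition of the isomorphism $\iota_{\mu}$ with the invertible operator supplied by the proof of Theorem~\ref{t2.2}; hence it is itself invertible from $H^{1}(\Omega,\mathbb{C}^{4})$ onto $L^{2}(\Omega,\mathbb{C}^{4})$ for every $\left\vert\mu\right\vert\geq\mu_{2}$, and we may take $\tilde{\mu}=\mu_{2}$.

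There is essentially no analytic obstacle here; the single point requiring care is that the upper equivalence constant $(1+\mu^{2})^{1/2}$ degenerates as $\left\vert\mu\right\vert\to\infty$. This is harmless for the corollary, which asserts invertibility only for each individually fixed large $\mu$ and makes no claim of a $\mu$-uniform bound on the inverse in the $H^{1}$-norm; the genuinely $\mu$-uniform estimates live in the parameter-dependent scale, as in Theorem~\ref{t2.2}. Thus the corollary is a formal consequence of the theorem together with the fixed-parameter equivalence of the two families of spaces.
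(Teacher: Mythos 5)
Your proposal is correct and follows exactly the paper's own route: the paper's proof likewise rests on the observation that for each fixed $\mu$ the norms of $H_{\mu}^{1}(\Omega,\mathbb{C}^{4})$ and $H^{1}(\Omega,\mathbb{C}^{4})$ are equivalent, so the invertibility established in the parameter-dependent scale transfers immediately. You merely make explicit the equivalence constants and the composition with the identity isomorphism, details the paper leaves implicit.
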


\begin{proof}
For every fix $\mu\in\mathbb{R}$ the norm in the space $H_{\mu}^{1}%
(\Omega,\mathbb{C}^{4})$ is equivalent to the norm in the usual Sobolev spaces
$H^{1}(\Omega,\mathbb{C}^{4})$ without parameter $\mu.$ It implies the
invertibility of the operator $\mathbb{D}_{\boldsymbol{A},\Phi,\mathfrak{B}%
}(i\mu):H^{1}(\Omega,\mathbb{C}^{4})\rightarrow L^{2}(\Omega,\mathbb{C}^{4})$
for every $\mu:\left\vert \mu\right\vert \geq\tilde{\mu}. $
\end{proof}

\section{ Self-adjointness of the unbounded operator $\mathcal{D}%
_{\boldsymbol{A},\Phi,\mathfrak{B}}$}

Now we consider the self-adjointness in $L^{2}(\mathbb{R}^{3},\mathbb{C}^{4})$
the unbounded operators $\mathcal{D}_{\boldsymbol{A},\Phi,\mathfrak{B}}$
associated with the operator $\mathbb{D}_{\boldsymbol{A,}\Phi,\mathfrak{B}}$
defined by the Dirac operator
\[
\mathfrak{D}_{\boldsymbol{A,}\Phi}=\boldsymbol{\alpha}\cdot\left(
i\boldsymbol{\nabla}+\boldsymbol{A}\right)  +\mathfrak{\alpha}_{0}m+\Phi I_{4}%
\]
where $\boldsymbol{A}\in L^{\infty}(\Omega,\mathbb{C}^{4})$, $\Phi\in
L^{\infty}(\Omega)$ with domain
\begin{equation}
H_{\mathfrak{B}}^{1}(\Omega,\mathbb{C}^{4})=\left\{
\begin{array}
[c]{c}%
\boldsymbol{u}\in H^{1}(\Omega,\mathbb{C}^{4}):\mathfrak{B}\boldsymbol{u}%
(x^{\prime})=\mathfrak{b}_{1}(x^{\prime})\boldsymbol{u}_{\partial\Omega}%
^{1}(x^{\prime})+\mathfrak{b}_{2}(x^{\prime})\boldsymbol{u}_{\partial\Omega
}^{2}(x^{\prime})=\boldsymbol{0},\\
x^{\prime}\in\partial\Omega,\mathfrak{b}_{j}\in C_{b}(\partial\Omega
,\mathcal{B}(\mathbb{R}^{2})).
\end{array}
\right\}  ,\label{3.3}%
\end{equation}

\begin{theorem}
\label{te3.1} Let: (i) $\Omega\subset\mathbb{R}^{3}$ be a domain with the
$C^{2}$-uniformly regular boundary, (ii) the vector-valued potential
$\boldsymbol{A}\in L^{\infty}(\Omega,\mathbb{R}^{4})$ and the electrostatic
potential $\Phi\in L^{\infty}(\Omega)$ be real-valued, (iii) the uniformly
parameter-dependent Lopatinsky-Shapiro condition
\[
\inf_{x^{\prime}\in\partial\Omega,\left\vert \xi^{\prime}\right\vert ^{2}%
+\mu^{2}=1}\left\vert \det\mathcal{L(}x^{\prime},\boldsymbol{\xi}^{\prime}%
,\mu)\right\vert >0
\]
hold; (iv) $\mathfrak{b}_{j}\in C_{b}(\partial\Omega)\otimes\mathcal{B(}%
\mathbb{C}^{2}),j=1,2$ and the operator $\mathcal{D}_{\boldsymbol{A}%
,\Phi,\mathfrak{B}}$ is symmetric in $L^{2}(\Omega,\mathbb{C}^{4}).$ Then the
operator $\mathcal{D}_{\boldsymbol{A},\Phi,\mathfrak{B}}$ is self-adjoint in
$L^{2}(\Omega,\mathbb{C}^{4}).$
\end{theorem}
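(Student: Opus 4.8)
The plan is to deduce self-adjointness from the already established invertibility of the parameter-dependent problem, combined with the classical range criterion for self-adjointness. Recall that a densely defined symmetric operator $\mathcal{A}$ in a Hilbert space $\mathcal{H}$ is self-adjoint as soon as $\operatorname{Ran}(\mathcal{A}-\lambda I)=\operatorname{Ran}(\mathcal{A}-\bar{\lambda}I)=\mathcal{H}$ for a single $\lambda\in\mathbb{C}\setminus\mathbb{R}$: fullness of these two ranges forces both deficiency subspaces $\ker(\mathcal{A}^{\ast}-\lambda I)=\operatorname{Ran}(\mathcal{A}-\bar{\lambda}I)^{\perp}$ and $\ker(\mathcal{A}^{\ast}-\bar{\lambda}I)=\operatorname{Ran}(\mathcal{A}-\lambda I)^{\perp}$ to vanish, whence $\mathcal{A}=\mathcal{A}^{\ast}$ (no separate closedness assumption is needed). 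By hypothesis $(iv)$ the operator $\mathcal{D}_{\boldsymbol{A},\Phi,\mathfrak{B}}$ is symmetric and densely defined, so it only remains to exhibit such a $\lambda$.

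First I would record the precise correspondence between the bounded operator of Theorem \ref{t2.2} and the unbounded operator $\mathcal{D}_{\boldsymbol{A},\Phi,\mathfrak{B}}$. For $\boldsymbol{u}\in H_{\mathfrak{B}}^{1}(\Omega,\mathbb{C}^{4})$ the homogeneous boundary condition $\mathfrak{B}\boldsymbol{u}_{\partial\Omega}=0$ holds by the very definition of the domain, and there $(\mathcal{D}_{\boldsymbol{A},\Phi,\mathfrak{B}}-i\mu I)\boldsymbol{u}=(\mathfrak{D}_{\boldsymbol{A},\Phi}-i\mu I_{4})\boldsymbol{u}$. Hence solving $\mathbb{D}_{\boldsymbol{A},\Phi,\mathfrak{B}}(i\mu)\boldsymbol{u}=\boldsymbol{f}$ means exactly finding $\boldsymbol{u}$ with $(\mathfrak{D}_{\boldsymbol{A},\Phi}-i\mu I_{4})\boldsymbol{u}=\boldsymbol{f}$ and $\mathfrak{B}\boldsymbol{u}_{\partial\Omega}=0$, i.e.\ $(\mathcal{D}_{\boldsymbol{A},\Phi,\mathfrak{B}}-i\mu I)\boldsymbol{u}=\boldsymbol{f}$ with $\boldsymbol{u}$ ranging over $H_{\mathfrak{B}}^{1}(\Omega,\mathbb{C}^{4})$. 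Consequently the invertibility of $\mathbb{D}_{\boldsymbol{A},\Phi,\mathfrak{B}}(i\mu)$ onto $L^{2}(\Omega,\mathbb{C}^{4})$ is equivalent to the bijectivity of $\mathcal{D}_{\boldsymbol{A},\Phi,\mathfrak{B}}-i\mu I$ as a map from $H_{\mathfrak{B}}^{1}(\Omega,\mathbb{C}^{4})$ onto $L^{2}(\Omega,\mathbb{C}^{4})$; in particular its range is all of $L^{2}(\Omega,\mathbb{C}^{4})$.

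Next, observing that hypotheses $(i)$--$(iii)$ are exactly those of Theorem \ref{t2.2}, I would invoke Corollary \ref{co2.1} to obtain $\tilde{\mu}>0$ such that $\mathbb{D}_{\boldsymbol{A},\Phi,\mathfrak{B}}(i\mu)$ is invertible for all $\mu\in\mathbb{R}$ with $\left\vert \mu\right\vert \geq\tilde{\mu}$. Applying this to $\mu=\tilde{\mu}$ and to $\mu=-\tilde{\mu}$ and using the identification above gives
\[
\operatorname{Ran}(\mathcal{D}_{\boldsymbol{A},\Phi,\mathfrak{B}}-i\tilde{\mu}I)=\operatorname{Ran}(\mathcal{D}_{\boldsymbol{A},\Phi,\mathfrak{B}}+i\tilde{\mu}I)=L^{2}(\Omega,\mathbb{C}^{4}).
\]
With $\lambda=i\tilde{\mu}$ the range criterion then closes the argument: given $\phi\in\operatorname{dom}\mathcal{D}_{\boldsymbol{A},\Phi,\mathfrak{B}}^{\ast}$, surjectivity of $\mathcal{D}_{\boldsymbol{A},\Phi,\mathfrak{B}}-i\tilde{\mu}I$ yields $\eta\in H_{\mathfrak{B}}^{1}(\Omega,\mathbb{C}^{4})$ with $(\mathcal{D}_{\boldsymbol{A},\Phi,\mathfrak{B}}-i\tilde{\mu}I)\eta=(\mathcal{D}_{\boldsymbol{A},\Phi,\mathfrak{B}}^{\ast}-i\tilde{\mu}I)\phi$; since $\mathcal{D}_{\boldsymbol{A},\Phi,\mathfrak{B}}\subseteq\mathcal{D}_{\boldsymbol{A},\Phi,\mathfrak{B}}^{\ast}$, this forces $\phi-\eta\in\ker(\mathcal{D}_{\boldsymbol{A},\Phi,\mathfrak{B}}^{\ast}-i\tilde{\mu}I)=\operatorname{Ran}(\mathcal{D}_{\boldsymbol{A},\Phi,\mathfrak{B}}+i\tilde{\mu}I)^{\perp}=\{0\}$, so $\phi=\eta$ lies in the domain and $\mathcal{D}_{\boldsymbol{A},\Phi,\mathfrak{B}}=\mathcal{D}_{\boldsymbol{A},\Phi,\mathfrak{B}}^{\ast}$.

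The one genuinely delicate point I anticipate is the bookkeeping of the second paragraph. One must verify that the solution produced by the bounded inverse truly lands in the domain $H_{\mathfrak{B}}^{1}(\Omega,\mathbb{C}^{4})$ (that the homogeneous boundary condition is met exactly) and that no solutions are lost, so that ``invertibility onto $L^{2}$'' and ``surjectivity of the unbounded operator onto $L^{2}$'' really coincide; this is precisely where the built-in boundary condition in $\mathbb{D}_{\boldsymbol{A},\Phi,\mathfrak{B}}(i\mu)$ enters, and where one should note that for the range criterion only set-theoretic surjectivity is required, so no equivalence of the graph norm with the $H^{1}$-norm needs to be checked. Everything else---the assumed symmetry, the reduction to the single pair $\pm i\tilde{\mu}$, and the concluding deficiency-index computation---is routine functional analysis.
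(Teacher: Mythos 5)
Your proposal is correct and takes essentially the same approach as the paper: both arguments rest on Corollary \ref{co2.1} (invertibility of $\mathbb{D}_{\boldsymbol{A},\Phi,\mathfrak{B}}(i\mu)$ for $\mu=\pm\tilde{\mu}$, which is exactly surjectivity of $\mathcal{D}_{\boldsymbol{A},\Phi,\mathfrak{B}}\mp i\tilde{\mu}I$ onto $L^{2}(\Omega,\mathbb{C}^{4})$) combined with the standard criterion that a symmetric operator with vanishing deficiency indices is self-adjoint. The only cosmetic difference is that the paper first proves closedness via the a priori estimate (\ref{3.6}) and then cites the zero-deficiency-index theorem for closed symmetric operators, whereas your range-criterion formulation (full ranges at $\pm i\tilde{\mu}$) makes the separate closedness verification unnecessary, as you correctly note.
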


\begin{proof}
\ Corollary \ref{co2.1} yields that there exists $\left\vert \mu\right\vert $
large enough and a constant $C>0$ such that for every $\boldsymbol{u}\in
H_{\mathfrak{B}}^{1}(\Omega,\mathbb{C}^{4})$
\begin{equation}
\left\Vert \boldsymbol{u}\right\Vert _{H^{1}(\Omega,\mathbb{C}^{4})}\leq
C\left(  \left\Vert \mathfrak{D}_{\boldsymbol{A},\Phi}\boldsymbol{u}%
\right\Vert _{L^{2}(\Omega,\mathbb{C}^{4})}+\left\vert \mu\right\vert
\left\Vert \boldsymbol{u}\right\Vert _{L^{2}(\Omega,\mathbb{C}^{4})}\right)  .
\label{3.6}%
\end{equation}
It follows from a priori estimate (\ref{3.6}) that the operator $\mathcal{D}%
_{\boldsymbol{A},\Phi,\mathfrak{B}}$ is closed. \ Moreover, Corollary
\ref{co2.1} yields that the deficiency indices of $\mathcal{D}_{\boldsymbol{A}%
,\Phi,\mathfrak{B}}$ are equal $0.$ Hence (see for instance \cite{BirSol},
page 100) the operator $\mathcal{D}_{\boldsymbol{A},\Phi,\mathfrak{B}}$ is self-adjoint.
\end{proof}

\begin{corollary}
\label{co3.1} Let
\[
H_{\mathfrak{B}}^{1}(\Omega,\mathbb{C}^{4})=\left\{  \boldsymbol{u}\in
H^{1}(\Omega,\mathbb{C}^{4}):\mathfrak{B}\boldsymbol{u}(x^{\prime
})=\boldsymbol{u}_{\partial\Omega}^{1}(x^{\prime})+\mathfrak{b}(x^{\prime
})\boldsymbol{u}_{\partial\Omega}^{2}(x^{\prime})=\boldsymbol{0},x^{\prime}%
\in\partial\Omega\right\}
\]
where $\mathfrak{b}_{j}\in C_{b}(\partial\Omega)\otimes\mathcal{B(}%
\mathbb{C}^{2}),$and
\begin{equation}
\mathfrak{b}^{\ast}\left(  \mathfrak{\sigma\cdot}\boldsymbol{\nu}\right)
+\left(  \mathfrak{\sigma\cdot}\boldsymbol{\nu}\right)  \mathfrak{b}=0\text{
on }\partial\Omega\label{3.3'}%
\end{equation}
where $\boldsymbol{\nu}$ is the outward unit normal vector to $\partial
\Omega.$ Then the operator $\mathcal{D}_{\boldsymbol{A},\Phi,\mathfrak{B}}$ is
symmetric. Hence, if conditions (i),(ii),(iii) of Theorem \ref{te3.1} hold
then the operator $\mathcal{D}_{\boldsymbol{A},\Phi,\mathfrak{B}}$ is
self-adjoint in $L^{2}(\Omega,\mathbb{C}^{4}).$
\end{corollary}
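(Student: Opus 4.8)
The statement has two layers: establishing that $\mathcal{D}_{\boldsymbol{A},\Phi,\mathfrak{B}}$ is symmetric under the algebraic condition (\ref{3.3'}), after which self-adjointness is immediate by quoting Theorem \ref{te3.1}. The plan is therefore to reduce symmetry to the vanishing of a single boundary sesquilinear form, and then to recognize (\ref{3.3'}) as precisely the condition forcing that form to vanish on the subspace cut out by the boundary condition. First I would note that the domain $H^1_{\mathfrak{B}}(\Omega,\mathbb{C}^4)$ is dense in $L^2(\Omega,\mathbb{C}^4)$, since it contains $C_0^{\infty}(\Omega,\mathbb{C}^4)$: compactly supported interior fields have vanishing trace and so satisfy any boundary condition of the form (\ref{0.3}).

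The core computation is the ``defect'' of $\mathfrak{D}_{\boldsymbol{A},\Phi}$. For $\boldsymbol{u},\boldsymbol{v}\in H^1_{\mathfrak{B}}(\Omega,\mathbb{C}^4)$, since $\boldsymbol{A}$ and $\Phi$ are real-valued and the matrices $\alpha_0,\alpha_1,\alpha_2,\alpha_3$ are Hermitian, the zeroth-order part $\boldsymbol{\alpha}\cdot\boldsymbol{A}+\alpha_0 m+\Phi I_4$ is a symmetric multiplication operator and contributes nothing. Only the first-order part $\boldsymbol{\alpha}\cdot i\boldsymbol{\nabla}$ survives, and a termwise Green's formula gives
\[
\langle \mathfrak{D}_{\boldsymbol{A},\Phi}\boldsymbol{u},\boldsymbol{v}\rangle_{L^2}-\langle \boldsymbol{u},\mathfrak{D}_{\boldsymbol{A},\Phi}\boldsymbol{v}\rangle_{L^2}=i\int_{\partial\Omega}\boldsymbol{v}_{\partial\Omega}^{*}\,(\boldsymbol{\alpha}\cdot\boldsymbol{\nu})\,\boldsymbol{u}_{\partial\Omega}\,d\sigma .
\]
Hence symmetry is equivalent to the vanishing of this boundary integral for all admissible $\boldsymbol{u},\boldsymbol{v}$.

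The decisive step is to evaluate the integrand using the block structure (\ref{1.1}). Writing $\boldsymbol{\alpha}\cdot\boldsymbol{\nu}$ in $2\times2$ blocks yields the off-diagonal matrix with blocks $\sigma\cdot\boldsymbol{\nu}$, so for $\boldsymbol{u}=(\boldsymbol{u}^1,\boldsymbol{u}^2)$ and $\boldsymbol{v}=(\boldsymbol{v}^1,\boldsymbol{v}^2)$ the integrand equals $(\boldsymbol{v}^1)^{*}(\sigma\cdot\boldsymbol{\nu})\boldsymbol{u}^2+(\boldsymbol{v}^2)^{*}(\sigma\cdot\boldsymbol{\nu})\boldsymbol{u}^1$. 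The boundary condition of the corollary reads $\boldsymbol{u}^1_{\partial\Omega}=-\mathfrak{b}\,\boldsymbol{u}^2_{\partial\Omega}$, and likewise for $\boldsymbol{v}$; substituting these I would collapse the integrand to
\[
-\,(\boldsymbol{v}^2_{\partial\Omega})^{*}\bigl[\mathfrak{b}^{*}(\sigma\cdot\boldsymbol{\nu})+(\sigma\cdot\boldsymbol{\nu})\mathfrak{b}\bigr]\boldsymbol{u}^2_{\partial\Omega}.
\]
By hypothesis (\ref{3.3'}) the bracketed matrix vanishes pointwise on $\partial\Omega$, so the boundary integral is identically zero and $\mathcal{D}_{\boldsymbol{A},\Phi,\mathfrak{B}}$ is symmetric. (As a sanity check, $\mathfrak{b}=i(\sigma\cdot\boldsymbol{\nu})$, the MIT case, satisfies (\ref{3.3'}) because $(\sigma\cdot\boldsymbol{\nu})^2=I_2$.) With symmetry established, conditions (i)--(iii) of Theorem \ref{te3.1} are exactly the remaining hypotheses of that theorem, which then delivers self-adjointness.

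The main technical obstacle I anticipate is justifying the termwise Green's formula that produces the boundary-integral identity, since $\Omega$ may be unbounded and $\partial\Omega$ non-compact. For $H^1$ fields on a domain with $C^2$-uniformly regular boundary the trace into $H^{1/2}(\partial\Omega,\mathbb{C}^4)$ is bounded and the divergence identity $\int_\Omega \partial_{x_j}(\boldsymbol{v}^{*}\boldsymbol{w})\,dx=\int_{\partial\Omega}\nu_j\,\boldsymbol{v}^{*}\boldsymbol{w}\,d\sigma$ holds. I would therefore prove the identity first for $\boldsymbol{u},\boldsymbol{v}\in C_0^{\infty}(\overline{\Omega},\mathbb{C}^4)$ and then pass to the limit, using continuity of both sides in the $H^1$ and trace norms and relying on the uniform regularity of $\partial\Omega$ to ensure the boundary integrals converge.
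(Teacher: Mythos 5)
Your proof is correct and follows essentially the same route as the paper: integration by parts to reduce symmetry to a boundary term, the block form of $\boldsymbol{\alpha}\cdot\boldsymbol{\nu}$, substitution of $\boldsymbol{u}^{1}_{\partial\Omega}=-\mathfrak{b}\,\boldsymbol{u}^{2}_{\partial\Omega}$, and condition (\ref{3.3'}) to kill the integrand, then Theorem \ref{te3.1}. Your version is in fact slightly more careful than the paper's (which states the boundary identity without the density/trace justification and contains a harmless index slip in its analogue of your final display), but the argument is the same.
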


\begin{proof}
Integrating by parts we obtain
\begin{align*}
\left\langle \mathfrak{D}_{\boldsymbol{A,}\Phi}\boldsymbol{u},\boldsymbol{v}%
\right\rangle _{L^{2}(\Omega,\mathbb{C}^{4})}-\left\langle \boldsymbol{u}%
,\mathfrak{D}_{\boldsymbol{A,}\Phi}\boldsymbol{v}\right\rangle _{L^{2}%
(\Omega,\mathbb{C}^{4})}  &  =\left\langle \left(  -i\boldsymbol{\alpha}%
\cdot\boldsymbol{\nu}\right)  \boldsymbol{u}_{\partial\Omega},\boldsymbol{v}%
_{\partial\Omega}\right\rangle _{L^{2}(\partial\Omega,\mathbb{C}^{4})},\\
\boldsymbol{u,v}  &  \in H_{\mathfrak{B}}^{1}(\Omega,\mathbb{C}^{4}).
\end{align*}
Taking into account (\ref{3.3'}) we obtain that
\begin{align}
\left(  \boldsymbol{\alpha}\cdot\boldsymbol{\nu}\right)  \boldsymbol{u}%
_{\partial\Omega}\cdot\boldsymbol{v}_{\partial\Omega}  &  =-i\left(
\boldsymbol{\sigma}\cdot\boldsymbol{\nu}\right)  \boldsymbol{u}_{\partial
\Omega}^{2}\cdot\boldsymbol{v}_{\partial\Omega}^{1}-i\left(
\boldsymbol{\sigma}\cdot\boldsymbol{\nu}\right)  \boldsymbol{u}_{\partial
\Omega}^{1}\cdot\boldsymbol{v}_{\partial\Omega}^{2}\label{3.5}\\
&  =\left(  \mathfrak{b}^{\ast}i\left(  \boldsymbol{\sigma}\cdot
\boldsymbol{\nu}\right)  +i\left(  \boldsymbol{\sigma}\cdot\boldsymbol{\nu
}\right)  \mathfrak{b}\right)  \boldsymbol{u}_{\partial\Omega}^{1}%
\cdot\boldsymbol{v}_{\partial\Omega}^{2}=0.\nonumber
\end{align}
Hence $\mathcal{D}_{\boldsymbol{A},\Phi,\mathfrak{B}}$ is a symmetric operator
and by Theorem \ref{te3.1} $\mathcal{D}_{\boldsymbol{A},\Phi,\mathfrak{B}}$ is self-adjoint.
\end{proof}

\subsubsection{Self-adjointness of generalized MIT bag model}

We consider the operator of generalized \textit{MIT bag model}
\begin{equation}
\mathbb{M}_{\boldsymbol{A,}\Phi,\mathfrak{B}}\boldsymbol{u}(x)\boldsymbol{=}%
\left\{
\begin{array}
[c]{c}%
\mathfrak{D}_{\boldsymbol{A},\Phi}\boldsymbol{u}(x),x\in\Omega\\
\mathfrak{M}_{a}(x^{\prime})\boldsymbol{u}_{\partial\Omega}(x^{\prime
})=\boldsymbol{u}_{\partial\Omega}^{1}(x^{\prime})+ia(x^{\prime})\left(
\sigma\cdot\boldsymbol{\nu}\right)  \boldsymbol{u}_{\partial\Omega}%
^{2}(x^{\prime})\boldsymbol{,}\text{ }x^{\prime}\in\partial\Omega
\end{array}
\right.  \label{4.3}%
\end{equation}
where $a\in C_{b}(\partial\Omega)$ is a real-value function. Note that if
$a=1$ we obtain the boundary \ condition of the MIT bag model (see
\cite{Ariza} ,\cite{Ariza2}, \cite{Ariza3}, \cite{Behrndt}). Note that the
boundary condition given in the paper \cite{Behrndt}
\begin{equation}
\theta\left(  I_{4}+i\alpha_{0}(\boldsymbol{\alpha\cdot\nu)}\right)
\boldsymbol{u}_{\partial\Omega}=\left(  I_{4}+i\alpha_{0}(\boldsymbol{\alpha
\cdot\nu)}\right)  \alpha_{0}\boldsymbol{u}_{\partial\Omega} \label{4.4}%
\end{equation}
with $\theta\in C_{b}(\partial\Omega)$ can be written as%
\begin{equation}
\left\{
\begin{array}
[c]{c}%
(\theta-1)\boldsymbol{u}^{1}+(\theta+1)i(\sigma\cdot\nu)\boldsymbol{u}%
^{2}=\boldsymbol{0}\\
-(\theta-1)i(\sigma\cdot\nu)\boldsymbol{u}^{1}+(\theta+1)\boldsymbol{u}%
^{2}=\boldsymbol{0}%
\end{array}
\right.  . \label{4.4''}%
\end{equation}
The boundary condition (\ref{4.4''}) is equivalent to the condition
$\mathfrak{M}_{a}(x^{\prime})\boldsymbol{u}_{\partial\Omega}(x^{\prime
})=0,x^{\prime}\in\partial\Omega$ where
\[
a=\frac{\theta+1}{\theta-1}\in C_{b}(\partial\Omega)
\]
if
\begin{equation}
\inf_{x\in\partial\Omega}\left\vert \theta(x)-1\right\vert >0. \label{4.4'}%
\end{equation}
\ 

Note that the matrix $\mathfrak{b=i}a\left(  \sigma\cdot\boldsymbol{\nu
}\right)  $ satisfies condition (\ref{3.3'}). Hence the unbounded operator
$\mathcal{M}_{\boldsymbol{A,}\Phi,\mathfrak{M}_{a}}$ associated with
$\mathbb{M}_{\boldsymbol{A,}\Phi,\mathfrak{M}_{a}}$ is symmetric.

We consider the uniform parameter-dependent Lopatinsky-Shapiro condition for
the operator $\mathbb{D}_{\boldsymbol{A,}\Phi,\mathfrak{M}_{a}}(i\mu)$.
Applying formulas (\ref{2.16}) we obtain
\begin{align*}
\mathcal{L(}x,\xi^{\prime},\mu)  &  =(\boldsymbol{h}_{1}^{1}+ia\sigma
_{3}\boldsymbol{h}_{1}^{2},\boldsymbol{h}_{2}^{1}+ia\sigma_{3}\boldsymbol{h}%
_{2}^{2})\\
&  =(i\mu\mathbf{e}_{1}+ia\sigma_{3}\Lambda\mathbf{e}_{1},\Lambda
\boldsymbol{e}_{2}+ia\sigma_{3}i\mu\boldsymbol{e}_{2}).
\end{align*}
Formulas (\ref{2.12}), (\ref{2.13}) yield
\begin{align}
\det\mathcal{L(}x,\xi^{\prime},\mu)  &  =\det\left(
\begin{array}
[c]{c}%
i\mu+a(x)\rho\\
-ia(x)\zeta
\end{array}%
\begin{array}
[c]{c}%
\bar{\varsigma}\\
i\rho+a(x)\mu
\end{array}
\right)  =\nonumber\\
&  =\mu\rho(a^{2}(x)-1)+2ia(x)\rho^{2},x\in\partial\Omega. \label{4.6}%
\end{align}
It implies that the parameter-dependent Lopatinsky-Shapiro condition is
satisfied on $\partial\Omega$ uniformly if
\begin{equation}
\inf_{x\in\partial\Omega}\left\vert a(x)\right\vert >0. \label{4.7}%
\end{equation}
Thus Theorem \ref{te3.1} yields the following result.

\begin{theorem}
\label{te4.1} Let $\Omega\subset\mathbb{R}^{3}$ be a domain with $C^{2}%
-$uniformly regular boundary, $A_{j}\in L^{\infty}(\Omega)\boldsymbol{,}%
j=1,2,3,$ $\Phi\in L^{\infty}(\Omega)$, $a\in C_{b}(\partial\Omega)$ be
real-valued functions, and condition (\ref{4.7}) hold. Then the unbounded
operator $\mathcal{D}_{\boldsymbol{A,}\Phi,\mathfrak{M}_{a}}$ (\ref{4.3}) is
self-adjoint in $L^{2}(\Omega,\mathbb{C}^{4}).$
\end{theorem}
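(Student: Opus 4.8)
The plan is to deduce the statement from Theorem~\ref{te3.1} by exhibiting the generalized MIT condition as a special case of the boundary operator~(\ref{0.3}) and then verifying each hypothesis of that theorem. Writing $\mathfrak{M}_a\boldsymbol{u}_{\partial\Omega}=\boldsymbol{u}^{1}_{\partial\Omega}+ia(\sigma\cdot\boldsymbol{\nu})\boldsymbol{u}^{2}_{\partial\Omega}$ in the form~(\ref{0.3}) identifies $\mathfrak{b}_1=I_2$ and $\mathfrak{b}_2=ia(\sigma\cdot\boldsymbol{\nu})$; both have entries in $C_b(\partial\Omega)$ since $a\in C_b(\partial\Omega)$ and $\boldsymbol{\nu}$ is continuous on the $C^{2}$ surface $\partial\Omega$. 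The domain hypothesis (i) is assumed, and the assumed reality of $A_j$ and $\Phi$ supplies hypothesis (ii). Thus only the symmetry part of (iv) and the uniform Lopatinsky-Shapiro condition (iii) require attention.

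First I would establish symmetry through Corollary~\ref{co3.1}, i.e.\ verify the algebraic identity~(\ref{3.3'}) for $\mathfrak{b}=ia(\sigma\cdot\boldsymbol{\nu})$. Since $\sigma\cdot\boldsymbol{\nu}$ is Hermitian and $a$ is real, one has $\mathfrak{b}^{\ast}=-ia(\sigma\cdot\boldsymbol{\nu})$, and a direct computation gives
\[
\mathfrak{b}^{\ast}(\sigma\cdot\boldsymbol{\nu})+(\sigma\cdot\boldsymbol{\nu})\mathfrak{b}=-ia(\sigma\cdot\boldsymbol{\nu})^{2}+ia(\sigma\cdot\boldsymbol{\nu})^{2}=0.
\]
Hence~(\ref{3.3'}) holds pointwise on $\partial\Omega$, and Corollary~\ref{co3.1} shows $\mathcal{D}_{\boldsymbol{A,}\Phi,\mathfrak{M}_a}$ is symmetric, confirming the remark made above. (If one prefers, the relation $(\sigma\cdot\boldsymbol{\nu})^{2}=I_2$ following from the Pauli relations~(\ref{1.3}) and $|\boldsymbol{\nu}|=1$ makes each summand $\mp iaI_2$ explicit.)

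Next I would verify the uniform parameter-dependent Lopatinsky-Shapiro condition~(\ref{2.18}) from the explicit determinant~(\ref{4.6}). On the sphere $\left\vert\boldsymbol{\xi}^{\prime}\right\vert^{2}+\mu^{2}=1$ we have $\rho=1$, so $\det\mathcal{L}(x,\boldsymbol{\xi}^{\prime},\mu)=\mu(a^{2}(x)-1)+2ia(x)$, whence $\left\vert\det\mathcal{L}\right\vert^{2}=\mu^{2}(a^{2}(x)-1)^{2}+4a^{2}(x)\geq 4a^{2}(x)$. Under hypothesis~(\ref{4.7}) this yields $\inf_{x\in\partial\Omega,\,\left\vert\boldsymbol{\xi}^{\prime}\right\vert^{2}+\mu^{2}=1}\left\vert\det\mathcal{L}\right\vert\geq 2\inf_{x\in\partial\Omega}\left\vert a(x)\right\vert>0$, which is precisely~(\ref{2.18}). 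With hypotheses (i)--(iv) of Theorem~\ref{te3.1} now in force, that theorem delivers self-adjointness of $\mathcal{D}_{\boldsymbol{A,}\Phi,\mathfrak{M}_a}$ in $L^{2}(\Omega,\mathbb{C}^{4})$.

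The analytic substance does not lie in this theorem: the genuine difficulty---invertibility of the shifted operators $\mathbb{D}_{\boldsymbol{A},\Phi,\mathfrak{B}}(i\mu)$ for large $\left\vert\mu\right\vert$, obtained by gluing local parametrices across an infinite finite-multiplicity partition of unity---was already settled in Theorem~\ref{t2.2} and Corollary~\ref{co2.1}, and it is this that drives the deficiency-index argument inside Theorem~\ref{te3.1}. Consequently the only step here demanding care is the lower bound on $\left\vert\det\mathcal{L}\right\vert$, and the essential point is that it must be uniform over all of $\partial\Omega$ rather than merely nonzero at each point; condition~(\ref{4.7}) guarantees exactly this, since the estimate $\left\vert\det\mathcal{L}\right\vert^{2}\geq 4a^{2}(x)$ is controlled from below by $\inf_{x}\left\vert a(x)\right\vert^{2}$.
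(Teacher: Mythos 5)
Your proposal is correct and takes essentially the same route as the paper: symmetry of $\mathcal{D}_{\boldsymbol{A,}\Phi,\mathfrak{M}_{a}}$ via condition (\ref{3.3'}) and Corollary \ref{co3.1}, the uniform parameter-dependent Lopatinsky-Shapiro condition via the determinant formula (\ref{4.6}), and then an appeal to Theorem \ref{te3.1}. Your only additions are details the paper asserts without computation, namely the pointwise verification of (\ref{3.3'}) for $\mathfrak{b}=ia(\sigma\cdot\boldsymbol{\nu})$ and the explicit lower bound $\left\vert \det\mathcal{L}\right\vert ^{2}=\mu^{2}(a^{2}(x)-1)^{2}+4a^{2}(x)\geq4a^{2}(x)$ on the sphere $\left\vert \boldsymbol{\xi}^{\prime}\right\vert ^{2}+\mu^{2}=1$ (where $\rho=1$), which shows exactly how (\ref{4.7}) yields (\ref{2.18}).
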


\begin{corollary}
\label{co4.1} Theorem \ref{te4.1} yields that the operator $\mathcal{D}%
_{\boldsymbol{A,}\Phi,\mathfrak{M}_{a}}$ where $a=\frac{1+\theta(x)}%
{1-\theta(x)}$ with $A_{j}\in L^{\infty}(\Omega)\boldsymbol{,}j=1,2,3,$
$\Phi\in L^{\infty}(\Omega)$, and
\begin{equation}
0<\inf_{x\in\partial\Omega}\left\vert a(x)\right\vert \leq\sup_{x\in
\partial\Omega}\left\vert a(x)\right\vert <\infty\label{4.7'}%
\end{equation}
is self-adjoint in $L^{2}(\Omega,\mathbb{C}^{3}).$ In particular case $a=1$ we
obtain that the operator of MIT bag model is self-adjoint.
\end{corollary}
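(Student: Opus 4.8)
The plan is to derive this directly from Theorem \ref{te4.1}, since the corollary only recasts that theorem's hypotheses in terms of the function $\theta$ from the Behrndt boundary condition (\ref{4.4}). First I would observe that defining $a=\frac{1+\theta(x)}{1-\theta(x)}$ requires $\theta(x)\neq1$ at every $x\in\partial\Omega$; this well-definedness, and in fact the boundedness of $a$, is exactly the content of the upper bound $\sup_{x\in\partial\Omega}\left\vert a(x)\right\vert<\infty$ in (\ref{4.7'}), which is equivalent to $\theta$ staying uniformly away from $1$, i.e. to the non-degeneracy condition (\ref{4.4'}). Hence $a$ is a genuine continuous function on $\partial\Omega$.

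Next I would verify the two standing hypotheses that Theorem \ref{te4.1} imposes on $a$. Real-valuedness of $a$ follows immediately from the assumption that $\theta$ is real-valued, since $\frac{1+\theta}{1-\theta}$ is then a real quotient; and $a\in C_b(\partial\Omega)$ because $\theta$ is continuous, bounded away from $1$, and $\sup_{x\in\partial\Omega}\left\vert a(x)\right\vert<\infty$ by (\ref{4.7'}). The one remaining structural hypothesis of Theorem \ref{te4.1} is condition (\ref{4.7}), namely $\inf_{x\in\partial\Omega}\left\vert a(x)\right\vert>0$, which is precisely the left-hand inequality in (\ref{4.7'}). Together with the standing assumptions $A_j\in L^{\infty}(\Omega)$, $\Phi\in L^{\infty}(\Omega)$ real-valued, and $\partial\Omega$ of class $C^{2}$-uniformly regular, every hypothesis of Theorem \ref{te4.1} is met, so I would conclude that $\mathcal{D}_{\boldsymbol{A,}\Phi,\mathfrak{M}_{a}}$ is self-adjoint in $L^{2}(\Omega,\mathbb{C}^{4})$.

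For the distinguished case I would note that the constant value $a\equiv1$ corresponds to $\theta\equiv0$ and reproduces the classical MIT bag boundary condition (\ref{0.5}); since $a\equiv1$ trivially satisfies (\ref{4.7'}), Theorem \ref{te4.1} applies verbatim and yields self-adjointness of the MIT bag operator. The only point that genuinely needs care in this argument is the translation between the finiteness of $\sup_{x\in\partial\Omega}\left\vert a(x)\right\vert$ and the uniform separation $\inf_{x\in\partial\Omega}\left\vert\theta(x)-1\right\vert>0$ recorded in (\ref{4.4'}); once that equivalence is made explicit, the rest is a direct substitution of hypotheses into Theorem \ref{te4.1}, so I expect no serious obstacle.
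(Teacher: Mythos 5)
Your proposal is correct and follows exactly the route the paper intends: the corollary is stated as an immediate application of Theorem \ref{te4.1}, and your verification that condition (\ref{4.7'}) supplies both the boundedness of $a=\frac{1+\theta}{1-\theta}$ (equivalently, the non-degeneracy (\ref{4.4'})) and the key hypothesis (\ref{4.7}), with $a\equiv1$ recovering the MIT bag condition, is precisely the substitution of hypotheses the paper leaves implicit. As a minor remark, you correctly state the conclusion in $L^{2}(\Omega,\mathbb{C}^{4})$, whereas the paper's statement contains a typo ($\mathbb{C}^{3}$).
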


\begin{remark}
\label{re4.1}Self-adjointness of operators of MIT bag models for domains with
\textbf{bounded} $C^{2}-$boundaries $\partial\Omega\subset\mathbb{R}^{3}$ has
been studied in the recent papers \cite{Ariza}, \cite{Ariza1}, \cite{Qurm},
\cite{Behrndt} by means of the different approach.
\end{remark}

\section{Fredholm theory and the essential spectrum}

\subsection{ Fredholmness of the operator $\mathbb{D}_{\boldsymbol{A,}%
\Phi,\mathfrak{B}}$ for bounded domains}

\begin{theorem}
\label{te5.1} Let $\Omega$ be a bounded domain with $C^{2}$-boundary,
$\boldsymbol{A}\in C^{1}(\bar{\Omega},\mathbb{C}^{3}),\Phi\in C^{1}%
(\bar{\Omega}),\mathfrak{b}_{j}\in C(\partial\Omega,\mathcal{B}(\mathbb{C}%
^{2})),j=1,2$ and the local standard Lopatinsky-Shapiro condition hold at
every point $x\in\partial\Omega.$ Then the operator $\mathbb{D}%
_{\boldsymbol{A,}\Phi,\mathfrak{B}}:H^{1}(\Omega,\mathbb{C}^{3})\rightarrow
L^{2}(\mathbb{R}^{3},\mathbb{C}^{4})$ is Fredholm.
\end{theorem}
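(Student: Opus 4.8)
The plan is to establish Fredholmness by constructing a two-sided regularizer for $\mathbb{D}_{\boldsymbol{A,}\Phi,\mathfrak{B}}$ modulo compact operators, in the same spirit as the proof of Theorem \ref{t2.2}, but now at $\mu=0$ and exploiting the compactness that becomes available because $\Omega$ is bounded. First I would observe that the zeroth-order terms $\boldsymbol{\alpha}\cdot\boldsymbol{A}+\alpha_0 m+\Phi I_4$ are multiplication by bounded continuous matrix functions, so as maps $H^1(\Omega,\mathbb{C}^4)\to L^2(\Omega,\mathbb{C}^4)$ they factor through the embedding $H^1(\Omega,\mathbb{C}^4)\hookrightarrow L^2(\Omega,\mathbb{C}^4)$, which is compact by the Rellich theorem since $\Omega$ is bounded. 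Hence $\mathbb{D}_{\boldsymbol{A,}\Phi,\mathfrak{B}}$ differs from the boundary value problem for the principal part $\boldsymbol{\alpha}\cdot i\boldsymbol{\nabla}$ by a compact operator, and since the Fredholm property is stable under compact perturbations it suffices to treat the principal part. This is consistent with the standard Lopatinsky-Shapiro condition of Definition \ref{d2.2}, which is formulated purely in terms of the principal symbol through the matrix $\Lambda(\xi^{\prime})$.

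Next I would construct local regularizers. At each interior point $x_0\in\Omega$ the operator is a first-order elliptic system, so a standard pseudodifferential parametrix $L_{x_0},R_{x_0}\in\mathcal{B}(L^2,H^1)$ inverts it modulo smoothing (hence compact) operators on a small ball $B_r(x_0)$. At each boundary point $x_0\in\partial\Omega$ the local standard Lopatinsky-Shapiro condition $\det\mathcal{L}(x_0,\xi^{\prime})\neq0$ for all $\xi^{\prime}\in S^{1}$ guarantees solvability of the half-space model problem; after flattening the $C^2$-boundary locally and following \cite{AgrVishik},\cite{Agran} one obtains local left and right regularizers satisfying the analogue of (\ref{2.22}) modulo compact remainders. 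This is precisely the $\mu=0$ version of the local construction already used in Theorem \ref{t2.2}, the only difference being that at $\mu=0$ one no longer obtains exact local inverses, only regularizers modulo lower-order operators.

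Since $\Omega$ is bounded, $\bar\Omega$ is compact, so from the cover $\bigcup_{x_0\in\bar\Omega}B_r(x_0)$ I would extract a \emph{finite} subcover $\bigcup_{j=1}^{N}B_r(x_j)$ and a subordinate finite partition of unity $\sum_{j=1}^{N}\theta_j=1$ on $\bar\Omega$, together with cut-offs $\varphi_j,\psi_j$ exactly as in the proof of Theorem \ref{t2.2}. Setting $L=\sum_{j=1}^{N}\theta_j L_{x_j}\varphi_j$ and $R=\sum_{j=1}^{N}\varphi_j R_{x_j}\theta_j$, a computation as in (\ref{2.27}) yields
\[
L\,\mathbb{D}_{\boldsymbol{A,}\Phi,\mathfrak{B}}=I+T_1,\qquad \mathbb{D}_{\boldsymbol{A,}\Phi,\mathfrak{B}}\,R=I+T_2,
\]
where $T_1,T_2$ are finite sums of terms each containing a commutator $[\mathbb{D}_{\boldsymbol{A,}\Phi,\mathfrak{B}},\varphi_j I]$. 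Such a commutator equals multiplication by the smooth compactly supported function $\boldsymbol{\alpha}\cdot i\boldsymbol{\nabla}\varphi_j$, so as a map $H^1(\Omega,\mathbb{C}^4)\to L^2(\Omega,\mathbb{C}^4)$ it again factors through the compact Rellich embedding. Composing with the bounded operators $L_{x_j},R_{x_j},\theta_j,\psi_j$ and summing finitely many such terms shows that $T_1,T_2$ are compact.

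The conclusion then follows from the abstract fact that an operator possessing a left and a right regularizer modulo compact operators is Fredholm: $I+T_1$ and $I+T_2$ are Fredholm, so $\ker\mathbb{D}_{\boldsymbol{A,}\Phi,\mathfrak{B}}$ is finite-dimensional and $\operatorname{Im}\mathbb{D}_{\boldsymbol{A,}\Phi,\mathfrak{B}}$ is closed of finite codimension. The step I expect to be the crux is the compactness of the remainders $T_1,T_2$: unlike in Theorem \ref{t2.2}, where the parameter $\mu$ made the analogous remainders \emph{small in norm} via (\ref{2.3}), here they are not small, and Fredholmness rests entirely on their compactness, which in turn relies on the boundedness of $\Omega$ through the Rellich theorem. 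It is exactly this compactness that fails when $\Omega$ is unbounded and forces the limit-operator techniques developed in the subsequent sections.
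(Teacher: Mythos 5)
Your proposal is correct and takes essentially the same route as the paper: the paper's own proof is a one-line appeal to standard elliptic theory (\cite{Agran}, \cite{BB}, \cite{BB1}, \cite{LionsMagenes}), and what you have written out---interior and boundary parametrices furnished by ellipticity and the local Lopatinsky--Shapiro condition, a finite partition of unity over the compact $\bar{\Omega}$, and compactness of the commutator remainders via the Rellich embedding---is precisely the argument behind that citation, organized in parallel with the gluing construction of Theorem \ref{t2.2}. Your closing observation, that compactness of the remainders replaces the parameter-smallness estimate (\ref{2.29}) available in Theorem \ref{t2.2} and is exactly what fails for unbounded $\Omega$, correctly identifies the role of boundedness here.
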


\begin{proof}
Since $\mathfrak{D}_{\boldsymbol{A,}\Phi}$ is the elliptic operator this
theorem follows from the standard elliptic theory (see for instance
\cite{Agran}, \cite{BB}, \cite{BB1},\cite{LionsMagenes}.)
\end{proof}

\begin{corollary}
\label{co5.2}Let condition of Theorem \ref{te3.1} hold and the domain $\Omega$
is bounded with $C^{2}-$boundary. Then the operator $\mathcal{D}%
_{\boldsymbol{A,}\Phi,\mathfrak{B}}$ is self-adjoint with the discrete real spectrum.
\end{corollary}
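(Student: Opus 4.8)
The plan is to deduce the two assertions separately and cheaply, since almost all of the analytic work has already been done. Self-adjointness is immediate: the hypotheses of Theorem \ref{te3.1} are assumed to hold (and for a bounded domain with compact $C^{2}$-boundary the local standard condition automatically upgrades to the uniform parameter-dependent Lopatinsky-Shapiro condition by the remark following Definition \ref{de2.1}), so Theorem \ref{te3.1} gives that $\mathcal{D}_{\boldsymbol{A,}\Phi,\mathfrak{B}}$ is self-adjoint in $L^{2}(\Omega,\mathbb{C}^{4})$. In particular $sp\,\mathcal{D}_{\boldsymbol{A,}\Phi,\mathfrak{B}}\subset\mathbb{R}$, which takes care of the word \emph{real} in the statement.

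The remaining task is to show that the spectrum is \emph{discrete}, i.e. that $sp_{ess}\mathcal{D}_{\boldsymbol{A,}\Phi,\mathfrak{B}}=\varnothing$. First I would invoke Corollary \ref{co2.1}: there is $\tilde{\mu}>0$ so that for $\left\vert \mu\right\vert \geq\tilde{\mu}$ the bounded operator $\mathbb{D}_{\boldsymbol{A},\Phi,\mathfrak{B}}(i\mu)\colon H^{1}(\Omega,\mathbb{C}^{4})\rightarrow L^{2}(\Omega,\mathbb{C}^{4})$ is invertible. Fixing such a $\mu$, the point $i\mu$ lies in the resolvent set of the self-adjoint operator $\mathcal{D}_{\boldsymbol{A,}\Phi,\mathfrak{B}}$, and its resolvent $(\mathcal{D}_{\boldsymbol{A,}\Phi,\mathfrak{B}}-i\mu I)^{-1}$ coincides with the bounded inverse furnished by the corollary, hence maps $L^{2}(\Omega,\mathbb{C}^{4})$ boundedly \emph{into} the domain $H_{\mathfrak{B}}^{1}(\Omega,\mathbb{C}^{4})\subset H^{1}(\Omega,\mathbb{C}^{4})$.

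Next I would factor the resolvent as $L^{2}(\Omega,\mathbb{C}^{4})\to H^{1}(\Omega,\mathbb{C}^{4})\hookrightarrow L^{2}(\Omega,\mathbb{C}^{4})$, where the first arrow is the bounded inverse just described and the second is the canonical embedding. Since $\Omega$ is \emph{bounded} with $C^{2}$-boundary, the Rellich--Kondrachov theorem makes this embedding compact; therefore $(\mathcal{D}_{\boldsymbol{A,}\Phi,\mathfrak{B}}-i\mu I)^{-1}$ is a compact operator on $L^{2}(\Omega,\mathbb{C}^{4})$. A self-adjoint operator possessing a compact resolvent has purely discrete spectrum, consisting of isolated real eigenvalues of finite multiplicity accumulating only at $\pm\infty$; this yields $sp\,\mathcal{D}_{\boldsymbol{A,}\Phi,\mathfrak{B}}=sp_{dis}\mathcal{D}_{\boldsymbol{A,}\Phi,\mathfrak{B}}$ and completes the proof.

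The only genuinely substantive point is the compactness of the embedding, which hinges on the boundedness of $\Omega$; this is exactly where the hypothesis is used and is the step that fails for the unbounded domains treated in the later sections, where the essential spectrum is in general nonempty. Everything else is a bookkeeping assembly of Theorem \ref{te3.1}, Corollary \ref{co2.1}, and the standard criterion for discreteness of the spectrum via compact resolvent, so I do not anticipate any real obstacle beyond correctly identifying the resolvent with the inverse produced by Corollary \ref{co2.1} on the domain $H_{\mathfrak{B}}^{1}(\Omega,\mathbb{C}^{4})$.
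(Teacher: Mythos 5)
Your proof is correct, but it takes a genuinely different route from the one the paper intends. The corollary is placed immediately after Theorem \ref{te5.1} and is meant as a consequence of it: for a bounded domain, the shifted operator $\mathbb{D}_{\boldsymbol{A},\Phi,\mathfrak{B}}-\lambda I=\mathbb{D}_{\boldsymbol{A},\Phi-\lambda,\mathfrak{B}}$ satisfies the same ellipticity and standard Lopatinsky--Shapiro hypotheses for every $\lambda\in\mathbb{C}$ (the shift only changes the electrostatic potential and affects neither the principal symbol nor the boundary symbol), so Theorem \ref{te5.1} gives Fredholmness of $\mathbb{D}_{\boldsymbol{A},\Phi,\mathfrak{B}}-\lambda I$ for all $\lambda$, hence $sp_{ess}\mathcal{D}_{\boldsymbol{A},\Phi,\mathfrak{B}}=\varnothing$, and a self-adjoint operator with empty essential spectrum has purely discrete real spectrum. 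You never use Theorem \ref{te5.1}: you combine Corollary \ref{co2.1} with the Rellich--Kondrachov compactness of $H^{1}(\Omega,\mathbb{C}^{4})\hookrightarrow L^{2}(\Omega,\mathbb{C}^{4})$ to conclude that $(\mathcal{D}_{\boldsymbol{A},\Phi,\mathfrak{B}}-i\mu I)^{-1}$ is compact, and then invoke the compact-resolvent criterion. Your argument is more self-contained (it relies only on the machinery already used for self-adjointness) and it sidesteps a step the paper leaves implicit, namely identifying Fredholmness of the bounded operator $H^{1}\rightarrow L^{2}$ with Fredholmness of the unbounded operator equipped with the graph norm; the paper's route, in exchange, stays inside its general framework (essential spectrum as failure of Fredholmness), which is the viewpoint that carries over to the unbounded domains treated afterwards, where your compact-embedding step is exactly what fails.

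One parenthetical remark in your first paragraph is inaccurate: the remark following Definition \ref{de2.1} states that the \emph{local parameter-dependent} condition at every point of a compact boundary implies the \emph{uniform parameter-dependent} condition; it does not allow you to upgrade the \emph{standard} ($\mu=0$) condition of Definition \ref{d2.2} to the parameter-dependent one, and in general such an upgrade is not automatic. This does not damage your proof, because the corollary assumes the full hypotheses of Theorem \ref{te3.1}, which already include the uniform parameter-dependent Lopatinsky--Shapiro condition, so no upgrade is needed; but the parenthetical should be deleted rather than repaired.
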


\subsection{ Fredholmness of the operator $\mathbb{D}_{\boldsymbol{A,}%
\Phi,\mathfrak{B}}$ for unbounded domains}

Let $\chi\in C_{0}^{\infty}(\mathbb{R}^{n})$ be such that $0\leq\chi
(x)\leq1,\chi(x)=1$ for $\left\vert x\right\vert \leq1,$ and $\chi(x)=0$ for
$\left\vert x\right\vert \geq2,$ and $\chi_{R}(x)=\chi(\frac{x}{R})$,
$\psi_{R}(x)=1-\chi_{R}(x).$

\begin{definition}
\label{de5.2} Let $\Omega\subset\mathbb{R}^{3}$ be unbounded domain. (i) We
say that the operator
\[
\mathbb{D}_{\boldsymbol{A},\Phi,\mathfrak{B}}:H^{1}(\Omega,\mathbb{C}%
^{4})\rightarrow L^{2}(\Omega,\mathbb{C}^{4})
\]
\ is a locally Fredholm operator if for every $R>0\ $there exist operators
$\mathcal{L}_{R},\mathcal{R}_{R}\in\mathcal{B}(L^{2}(\Omega,\mathbb{C}%
^{4}),H^{1}(\Omega,\mathbb{C}^{4}))$ such that such that
\begin{equation}
\mathcal{L}_{R}\mathbb{D}_{\boldsymbol{A},\Phi,\mathfrak{B}}\chi_{R}I=\chi
_{R}I+T_{R}^{\prime},\chi_{R}\mathbb{D}_{\boldsymbol{A},\Phi,\mathfrak{B}%
}\mathcal{R}_{R}=\chi_{R}I+T_{R}^{\prime\prime} \label{5.12}%
\end{equation}
where $T_{R}^{\prime}\in\mathcal{K}(H^{1}(\Omega,\mathbb{C}^{4})),T_{R}%
^{\prime\prime}\in\mathcal{K}(L^{2}(\Omega,\mathbb{C}^{4})).$

(ii) We say that the operator
\[
\mathbb{D}_{\boldsymbol{A},\Phi,\mathfrak{B}}:H^{1}(\Omega,\mathbb{C}%
^{4})\rightarrow L^{2}(\Omega,\mathbb{C}^{4})
\]
\ is locally invertible at infinity if there exists $R>0\ $ and operators
$\mathcal{L}_{R}^{\prime},\mathcal{R}^{\prime}_{R}\in\mathcal{B}(L^{2}%
(\Omega,\mathbb{C}^{4}),H^{1}(\Omega,\mathbb{C}^{4}))$ such that
\begin{equation}
\mathcal{L}_{R}^{\prime}\mathbb{D}_{\boldsymbol{A},\Phi,\mathfrak{B}}\psi
_{R}I=\psi_{R}I,\psi_{R}\mathbb{D}_{\boldsymbol{A},\Phi,\mathfrak{B}%
}\mathcal{R}_{R}^{\prime}=\psi_{R}I. \label{5.13}%
\end{equation}

\end{definition}

\begin{proposition}
\label{pr5.1} \cite{Ra1}The operator
\[
\mathbb{D}_{\boldsymbol{A},\Phi,\mathfrak{B}}:H^{1}(\Omega,\mathbb{C}%
^{4})\rightarrow L^{2}(\Omega,\mathbb{C}^{4})
\]
is Fredholm if and only if $\mathbb{D}_{\boldsymbol{A},\Phi,\mathfrak{B}}$ is
locally Fredholm and locally invertible operator at infinity.
\end{proposition}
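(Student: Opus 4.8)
The plan is to treat both implications with one pair of observations about the cutoffs $\chi_R,\psi_R$, and then to \emph{glue} for sufficiency and to \emph{perturb} for necessity. Throughout write $\mathbb{D}$ for $\mathbb{D}_{\boldsymbol{A},\Phi,\mathfrak{B}}$. Two facts drive everything. First, since $\mathfrak{D}_{\boldsymbol{A},\Phi}$ is first order with the only non-multiplication term $\boldsymbol{\alpha}\cdot i\boldsymbol{\nabla}$, and since $\mathfrak{B}$ has order zero, the commutator $[\mathbb{D},\chi_R]=\boldsymbol{\alpha}\cdot i(\boldsymbol{\nabla}\chi_R)$ acts by multiplication by a smooth matrix supported in the bounded annulus $\{R\leq\left\vert x\right\vert\leq 2R\}$; composed with the Rellich embedding it is compact from $H^{1}(\Omega,\mathbb{C}^{4})$ into $L^{2}(\Omega,\mathbb{C}^{4})$, and the same holds for $[\mathbb{D},\psi_R]=-[\mathbb{D},\chi_R]$. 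Second, since $\psi_R\to 0$ strongly in both $L^{2}$ and $H^{1}$ as $R\to\infty$, for any compact $K$ one has $\left\Vert\psi_R K\right\Vert\to 0$ and $\left\Vert K\psi_R\right\Vert\to 0$; in particular $\left\Vert\psi_R'K\psi_R'\right\Vert\to 0$ for any companion cutoff $\psi_R'$.

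For \emph{sufficiency}, assume $\mathbb{D}$ is locally Fredholm and locally invertible at infinity, and fix $R$ large enough that both hold, producing $\mathcal{L}_R,\mathcal{R}_R$ (with compact remainders $T_R',T_R''$) and $\mathcal{L}_R',\mathcal{R}_R'$ (with exact remainders). Using $\chi_R+\psi_R=1$ I would set the global left parametrix $\mathcal{L}=\mathcal{L}_R\chi_R+\mathcal{L}_R'\psi_R$. Moving each cutoff through $\mathbb{D}$ costs a commutator, compact by the first fact, so $\mathcal{L}_R\chi_R\mathbb{D}=\mathcal{L}_R\mathbb{D}\chi_R+\text{cpt}=\chi_R+T_R'+\text{cpt}$ and $\mathcal{L}_R'\psi_R\mathbb{D}=\psi_R+\text{cpt}$; summing gives $\mathcal{L}\,\mathbb{D}=I+K_\ell$ with $K_\ell$ compact on $H^{1}$. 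Symmetrically $\mathcal{R}=\chi_R\mathcal{R}_R+\psi_R\mathcal{R}_R'$ yields $\mathbb{D}\,\mathcal{R}=I+K_r$ with $K_r$ compact on $L^{2}$. A left regularizer modulo compacts together with a right regularizer modulo compacts forces $\ker\mathbb{D}$ and $\operatorname{coker}\mathbb{D}$ to be finite-dimensional, so $\mathbb{D}$ is Fredholm.

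For \emph{necessity}, assume $\mathbb{D}$ is Fredholm with a two-sided regularizer $B$, $B\mathbb{D}=I+K_1$, $\mathbb{D}B=I+K_2$, $K_1,K_2$ compact. Local Fredholmness is immediate: take $\mathcal{L}_R=\mathcal{R}_R=B$, so $B\mathbb{D}\chi_R=\chi_R+K_1\chi_R$ and $\chi_R\mathbb{D}B=\chi_R+\chi_R K_2$, both remainders compact. For local invertibility at infinity I would fix a companion cutoff $\psi_R'$ with $\psi_R'\psi_R=\psi_R$ and invoke the second fact: since $\left\Vert\psi_R'K_1\psi_R'\right\Vert\to 0$, choose $R$ so large that $I+\psi_R'K_1\psi_R'$ is invertible on $H^{1}$ by Neumann series, and set $\mathcal{L}_R'=(I+\psi_R'K_1\psi_R')^{-1}\psi_R'B$. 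Using $\psi_R=\psi_R'\psi_R$ one computes $\psi_R'B\mathbb{D}\psi_R=(I+\psi_R'K_1\psi_R')\psi_R$, whence $\mathcal{L}_R'\mathbb{D}\psi_R=\psi_R$ \emph{exactly}. The right inverse $\mathcal{R}_R'=B\psi_R'(I+\psi_R'K_2\psi_R')^{-1}$ is built the same way and gives $\psi_R\mathbb{D}\mathcal{R}_R'=\psi_R$ exactly.

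The only genuinely delicate point is this last step: in the necessity direction the remainders at infinity must be eliminated \emph{exactly}, not merely modulo compacts. This is where the norm-decay $\left\Vert\psi_R'K\psi_R'\right\Vert\to 0$ together with the algebraic identity $\psi_R'\psi_R=\psi_R$ is essential, allowing a Neumann-series inverse to absorb the compact term completely on the support of $\psi_R$. Everything else reduces to the compactness of the commutators of $\mathbb{D}$ with the cutoffs, which in turn rests only on the first-order structure of $\mathfrak{D}_{\boldsymbol{A},\Phi}$ and the local Rellich compactness on the bounded annuli.
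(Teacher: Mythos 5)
The paper itself gives no proof of Proposition \ref{pr5.1}: it is stated with a citation to \cite{Ra1}, so there is no internal argument to measure yours against. What you have written is a self-contained proof, and it is correct; it is essentially the localization argument that underlies the cited result. For sufficiency, the key point is exactly as you say: $[\mathbb{D},\chi_R]$ is multiplication by $\boldsymbol{\alpha}\cdot i(\nabla\chi_R)$ (the boundary operator $\mathfrak{B}$, being of order zero, commutes with scalar cutoffs, so the boundary component of the commutator vanishes), supported in a bounded annulus and hence compact from $H^{1}(\Omega,\mathbb{C}^{4})$ to $L^{2}(\Omega,\mathbb{C}^{4})$; gluing $\mathcal{L}=\mathcal{L}_R\chi_R+\mathcal{L}_R'\psi_R$ and $\mathcal{R}=\chi_R\mathcal{R}_R+\psi_R\mathcal{R}_R'$ then gives two-sided regularizers modulo compacts, hence Fredholmness. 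For necessity, a global two-sided regularizer $B$ settles local Fredholmness for every $R$, and the identity $\psi_R'\psi_R=\psi_R$ combined with the decay $\|\psi_R'K\psi_R'\|\to 0$ lets the Neumann series absorb the compact remainder \emph{exactly}; you correctly identify this as the only delicate step, and your computation $\psi_R'B\mathbb{D}\psi_R=(I+\psi_R'K_1\psi_R')\psi_R$ is right. Two glosses are worth adding. First, the compactness of $[\mathbb{D},\chi_R]$ invokes Rellich--Kondrachov on $\Omega\cap\{R\le|x|\le 2R\}$, a region that meets $\partial\Omega$; interior compactness is not enough there, and what makes the embedding compact up to the boundary is precisely the $C^{2}$-uniform regularity of $\partial\Omega$ (for instance via a bounded extension operator $H^{1}(\Omega)\to H^{1}(\mathbb{R}^{3})$). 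Second, your blanket claim $\|K\psi_R\|\to 0$ for compact $K$ does not follow from the strong convergence $\psi_R\to 0$ alone: strong convergence of $A_R$ controls $\|A_RK\|$, not $\|KA_R\|$ (consider powers of the unilateral shift), and on $H^{1}$ multiplication by $\psi_R$ is not self-adjoint, so one would need strong convergence of the $H^{1}$-adjoints. Fortunately you never use that half of the claim: the estimate $\|\psi_R'K\psi_R'\|\le\|\psi_R'K\|\,\|\psi_R'\|$ needs only the left-sided decay, which is valid because $K$ maps the unit ball to a precompact set on which $\psi_R'\to 0$ uniformly. With these two glosses your proof stands as a complete substitute for the reference to \cite{Ra1}.
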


We denote by $\widetilde{\mathbb{R}^{3}}$ the compactification of
$\mathbb{R}^{3}$ obtained by the joining to the every ray $l_{\omega}=\left\{
x\in\mathbb{R}^{3}:x=t\omega,t>0,\omega\in S^{2}\right\}  $ the infinitely
distant point $\vartheta_{\omega}$. The topology in $\widetilde{\mathbb{R}%
^{3}}$ is introduced such that $\widetilde{\mathbb{R}^{3}}$ becomes
homeomorphic to the unit closed ball $\bar{B}_{1}(0).$ $\ $The fundamental
system of neighborhoods of the point $\vartheta_{\omega_{0}}$ is formed by the
conical sets $U_{\omega_{0},R}=\digamma_{\omega_{0}}\times(R,+\infty)$ where
$R>0$ and $\digamma_{\omega_{0}}$ is a neighborhood of the point $\omega_{0}$
on the unit sphere $S^{2}.$ We define the cut-off function $\varphi
_{\vartheta_{\omega_{0}}}$ of the infinitely distant point $\vartheta
_{\omega_{0}}$ as $\varphi_{\vartheta_{\omega_{0}}}=\varphi_{\omega_{0}}%
(\frac{x}{\left\vert x\right\vert })\psi_{R}(x)$ where $\varphi_{\omega_{0}%
}(\omega)\in C_{0}^{\infty}(\digamma_{\omega_{0}})$ and $\varphi_{\omega_{0}%
}(\omega)=1$ in a neighborhood $\digamma_{\omega_{0}}^{\prime}$ such that
$\overline{\digamma_{\omega_{0}}^{\prime}}\subset\digamma_{\omega_{0}}.$

If $\Omega\subset\mathbb{R}^{3}$ is an unbounded domain we denote by
$\tilde{\Omega},\widetilde{\partial\Omega}$ the closure of $\Omega
,\partial\Omega$ in $\widetilde{\mathbb{R}^{3}},$ and by $\Omega_{\infty
},\partial\Omega_{\infty}$ the associated sets of the infinitely distant points.

\begin{definition}
We say that the operator $\mathbb{D}_{\boldsymbol{A},\Phi,\mathfrak{B}}%
:H^{1}(\Omega,\mathbb{C}^{4})\rightarrow L^{2}(\Omega,\mathbb{C}^{4})$\ is
locally invertible at the\ infinitely distant point $\vartheta_{\omega}$ if
there exists $\ $a neighborhood $U_{\vartheta_{\omega}}$ of the point
$\vartheta_{\omega}$ and the operators
\[
\mathcal{L}_{\vartheta_{\omega}},\mathcal{R}_{\vartheta_{\omega}}%
\in\mathcal{B(}L^{2}(\Omega,\mathbb{C}^{4}),H^{1}(\Omega,\mathbb{C}^{4}))
\]
such that
\begin{equation}
\mathcal{L}_{\vartheta_{\omega}}\mathbb{D}_{\mathbb{D}_{\boldsymbol{A}%
,\Phi,\mathfrak{B}}}\varphi_{\vartheta_{\omega}}I=\varphi_{\vartheta_{\omega}%
}I,\varphi_{\vartheta_{\omega}}\mathbb{D}_{\mathbb{D}_{\boldsymbol{A}%
,\Phi,\mathfrak{B}}}\mathcal{R}_{\vartheta_{\omega}}=\varphi_{\vartheta
_{\omega}}I, \label{5.14}%
\end{equation}
where $\varphi_{\vartheta_{\omega}}$ is the cut-off function of the infinitely
distant point $\vartheta_{\omega}.$
\end{definition}

\begin{proposition}
\label{pr5.2}(\cite{Ra1}) The operator $\mathbb{D}_{\boldsymbol{A}%
,\Phi,\mathfrak{B}}:H^{1}(\Omega,\mathbb{C}^{4})\rightarrow L^{2}%
(\Omega,\mathbb{C}^{4})$ is a Fredholm operator if and only if $\mathbb{D}%
_{\boldsymbol{A},\Phi,\mathfrak{B}}$ is a locally Fredholm operator and
$\mathbb{D}_{\boldsymbol{A},\Phi,\mathfrak{B}}$ is locally invertible at every
infinitely distant point $\vartheta_{\omega}\in\Omega_{\infty}\cup
\partial\Omega_{\infty}$ .
\end{proposition}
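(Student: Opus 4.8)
The plan is to reduce this statement to the already-stated Proposition \ref{pr5.1}, which characterizes Fredholmness of $\mathbb{D}_{\boldsymbol{A},\Phi,\mathfrak{B}}$ as the conjunction of local Fredholmness and local invertibility at infinity (in the sense of the cut-off $\psi_R$). The only gap between the two propositions is the meaning of ``invertibility at infinity'': Proposition \ref{pr5.1} uses the single exterior cut-off $\psi_R$, while Proposition \ref{pr5.2} demands local invertibility at each infinitely distant point $\vartheta_\omega$ separately, via the conical cut-offs $\varphi_{\vartheta_\omega}$. So the essential content to prove is the equivalence

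\begin{equation}
\text{invertible at infinity (via }\psi_R\text{)}\iff\text{locally invertible at every }\vartheta_\omega\in\Omega_\infty\cup\partial\Omega_\infty. \label{equiv}
\end{equation}

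The forward direction is immediate: if $\psi_R$ admits local regularizers $\mathcal{L}'_R,\mathcal{R}'_R$ as in (\ref{5.13}), then for any $\vartheta_\omega$ one may choose $R$ large enough that the cut-off $\varphi_{\vartheta_\omega}=\varphi_{\omega}(x/|x|)\psi_R(x)$ is supported where $\psi_R\equiv1$, whence $\psi_R\varphi_{\vartheta_\omega}=\varphi_{\vartheta_\omega}$, and one sets $\mathcal{L}_{\vartheta_\omega}=\mathcal{L}'_R$, $\mathcal{R}_{\vartheta_\omega}=\varphi_{\vartheta_\omega}\mathcal{R}'_R$ (inserting an extra cut-off equal to $1$ on the support of $\varphi_{\vartheta_\omega}$) to verify (\ref{5.14}). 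The reverse direction is the substantive step and is a compactness-and-gluing argument. First I would cover the compact set $\Omega_\infty\cup\partial\Omega_\infty$ (the set of infinitely distant points, which is closed in the compactification $\widetilde{\mathbb{R}^{3}}$ and hence compact) by finitely many neighborhoods $U_{\vartheta_{\omega_1}},\dots,U_{\vartheta_{\omega_N}}$ on which local regularizers exist. Then I would build a subordinate partition of unity $\{\varphi_{\vartheta_{\omega_k}}\}$ together with the radial $\psi_R$, so that $\psi_R=\sum_k\varphi_{\vartheta_{\omega_k}}\psi_R$ modulo a compactly supported correction, and patch the local inverses $\mathcal{L}_{\vartheta_{\omega_k}}$ into a global left regularizer $\mathcal{L}'_R=\sum_k\theta_k\,\mathcal{L}_{\vartheta_{\omega_k}}\varphi_{\vartheta_{\omega_k}}$ (and symmetrically for the right regularizer), exactly as in the proof of Theorem \ref{t2.2} but localized at infinity.

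The hard part will be controlling the commutator remainders produced by the gluing. Writing $\mathcal{L}'_R\mathbb{D}_{\boldsymbol{A},\Phi,\mathfrak{B}}\psi_R I=\sum_k\theta_k\mathcal{L}_{\vartheta_{\omega_k}}\varphi_{\vartheta_{\omega_k}}\mathbb{D}_{\boldsymbol{A},\Phi,\mathfrak{B}}\psi_R I$ and moving $\varphi_{\vartheta_{\omega_k}}$ past the first-order operator $\mathbb{D}_{\boldsymbol{A},\Phi,\mathfrak{B}}$ generates commutators $[\mathbb{D}_{\boldsymbol{A},\Phi,\mathfrak{B}},\varphi_{\vartheta_{\omega_k}}I]$ that are zeroth-order multiplication operators whose symbols are derivatives of $\varphi_{\vartheta_{\omega_k}}$. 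Because these cut-offs are homogeneous of degree $0$ in the angular variable times $\psi_R$, their gradients decay like $|x|^{-1}$ as $|x|\to\infty$; this decay, combined with the fact that the regularizers $\mathcal{L}_{\vartheta_{\omega_k}}$ map $L^2$ boundedly into $H^1$, makes each commutator contribution compact on the relevant Sobolev spaces, so the remainder is of the form $\psi_R I+$ (compact). Since the distinction between a genuine identity $\psi_R I$ and $\psi_R I$ modulo compact operators is harmless for local invertibility at infinity (one absorbs the compact term by a Neumann series on the support of $\psi_R$ for $R$ large, using that the operator norm of the compact remainder tends to $0$ as $R\to\infty$), this establishes (\ref{5.13}) and hence, via Proposition \ref{pr5.1}, the claimed Fredholm criterion. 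The only point requiring care is ensuring the finite subcover is uniform, i.e. that the same $R$ works simultaneously for all $k$; this follows from compactness of $\Omega_\infty\cup\partial\Omega_\infty$ and the finiteness of the cover.
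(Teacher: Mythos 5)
The paper itself gives no proof of Proposition \ref{pr5.2}: like Proposition \ref{pr5.1}, it is quoted from \cite{Ra1}. So the only thing to judge is whether your argument stands on its own. Your overall strategy is the right one and is exactly the localization technique behind \cite{Ra1}: reduce Proposition \ref{pr5.2} to Proposition \ref{pr5.1} by proving that invertibility at infinity in the sense of (\ref{5.13}) is equivalent to local invertibility at every $\vartheta_{\omega}\in\Omega_{\infty}\cup\partial\Omega_{\infty}$ in the sense of (\ref{5.14}), the nontrivial direction being handled by a finite subcover of the compact set of infinitely distant points, a conical partition of unity, gluing of the local regularizers, and absorption of the commutator remainders; this is the same machinery the paper uses to prove Theorem \ref{t2.2}, transplanted to a neighborhood of infinity, and it does close the argument.

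Two steps need repair, though both are fixable. First, in the forward direction your right regularizer $\mathcal{R}_{\vartheta_{\omega}}=\varphi_{\vartheta_{\omega}}\mathcal{R}_{R}^{\prime}$ does not verify (\ref{5.14}): writing $\mathbb{D}=\mathbb{D}_{\boldsymbol{A},\Phi,\mathfrak{B}}$, one gets $\varphi_{\vartheta_{\omega}}\mathbb{D}\varphi_{\vartheta_{\omega}}\mathcal{R}_{R}^{\prime}=\varphi_{\vartheta_{\omega}}^{2}I+\varphi_{\vartheta_{\omega}}\left[\mathbb{D},\varphi_{\vartheta_{\omega}}I\right]\mathcal{R}_{R}^{\prime}$, which is neither $\varphi_{\vartheta_{\omega}}I$ nor commutator-free. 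The correct and simpler choice is $\mathcal{R}_{\vartheta_{\omega}}=\mathcal{R}_{R_{0}}^{\prime}$ itself, with the cutoff $\varphi_{\vartheta_{\omega}}$ chosen so that $\psi_{R_{0}}=1$ on its support: then $\varphi_{\vartheta_{\omega}}\mathbb{D}\mathcal{R}_{R_{0}}^{\prime}=\varphi_{\vartheta_{\omega}}\left(\psi_{R_{0}}\mathbb{D}\mathcal{R}_{R_{0}}^{\prime}\right)=\varphi_{\vartheta_{\omega}}\psi_{R_{0}}I=\varphi_{\vartheta_{\omega}}I$, and symmetrically $\mathcal{L}_{\vartheta_{\omega}}=\mathcal{L}_{R_{0}}^{\prime}$ works on the left. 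Second, compactness of the commutator terms is the wrong currency: Definition \ref{de5.2}(ii) and (\ref{5.14}) demand \emph{exact} identities, so ``$\psi_{R}I$ plus compact'' proves nothing by itself, and compactness would not give invertibility of the correction factor. What actually closes the argument is the $O(1/R)$ operator-norm bound on the commutators (from $\left\vert \nabla\varphi\right\vert \leq C/\left\vert x\right\vert$ for conical cutoffs), which lets you write $\mathcal{L}^{\prime}\mathbb{D}\psi_{R^{\prime}}I=\left(I-T^{\prime}\right)\psi_{R^{\prime}}I$ with $\left\Vert T^{\prime}\right\Vert <1$ and take $\left(I-T^{\prime}\right)^{-1}\mathcal{L}^{\prime}$ as the exact left regularizer; you do invoke this norm decay, so the proof survives, but the appeal to compactness should be deleted. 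Finally, you should state explicitly the small lemma your gluing uses implicitly: the identity (\ref{5.14}) for the fixed cutoff $\varphi_{\vartheta_{\omega}}$ transfers to every cutoff $\varphi^{\prime}$ with support in the set where $\varphi_{\vartheta_{\omega}}=1$, because $\mathbb{D}\varphi^{\prime}I=\left(\mathbb{D}\varphi_{\vartheta_{\omega}}I\right)\varphi^{\prime}I$; this is what licenses replacing the given cutoffs by the rescaled ones subordinate to your finite cover.
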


\subsubsection{Fredholm property and essential spectra in the exterior of
bounded domain}

Let $\Omega\subset\mathbb{R}^{3}$ be an unbounded domain with $C^{2}-$boundary
such that $\Omega^{\prime}=\mathbb{R}^{3}\diagdown\bar{\Omega}$ is a bounded
domain. We consider the Fredholm property of the operator
\[
\mathbb{D}_{\boldsymbol{A},\Phi,\mathfrak{B}}:H^{1}(\Omega,\mathbb{C}%
^{4})\rightarrow L^{2}(\Omega,\mathbb{C}^{4})
\]
We assume as above that $A_{j},\Phi\in C_{b}^{1}(\bar{\Omega})$ and
$\mathfrak{b}_{j}\in C^{1}(\partial\Omega,\mathcal{B(}\mathbb{C}^{2})),j=1,2$.

Following to the book \cite{RRS} and the paper \ \cite{Ra1} we describe the
Fredholm properties of $\mathbb{D}_{\boldsymbol{A},\Phi,\mathfrak{B}}$ in
terms of limit operators.

We give definition of the limit operators. Let $f\in C_{b}^{1}(\mathbb{\bar
{\Omega})}$ and a sequence $\mathbb{R}^{3}\ni g_{m}\rightarrow\vartheta
_{\omega}\in\mathbb{R}_{\infty}^{3}.$ The family of functions $\left\{
f(\cdot+g_{m})\right\}  $ is uniformly bounded and equicontinuous on $\Omega.$
Then the Arcela-Ascoli Theorem yields that there exists a subsequence $h_{m}$
of $g_{m}$ and a limit function $f^{h}\in C_{b}(\bar{\Omega})$ such that
\begin{equation}
\lim_{m\rightarrow\infty}\sup_{x\in K}\left\vert f(x+h_{m})-f^{h}%
(x)\right\vert =0 \label{5.15}%
\end{equation}
for every compact set $K\subset\bar{\Omega}$.$.$

Let $h_{m}\rightarrow\vartheta_{\omega}$ be such sequence that
\[
\boldsymbol{A}(x+h_{m})\rightarrow\boldsymbol{A}^{h}(x),\Phi(x+h_{m}%
)\rightarrow\Phi^{h}(x)
\]
in the sense of the convergence defined by formula (\ref{5.15}). The operator
$\mathfrak{D}_{\boldsymbol{A,}\Phi}^{h}=\mathfrak{D}_{\boldsymbol{A}%
^{h}\boldsymbol{,}\Phi^{h}}$ is called the limit operator defined by the
sequence $h_{m},$ and we denote by $Lim_{\vartheta_{\omega}}\mathfrak{D}%
_{\boldsymbol{A,}\Phi}$ the set of all limit operators defined by the
sequences $h_{m}\rightarrow\vartheta_{\omega}.$ We set
\[
Lim\mathfrak{D}_{\boldsymbol{A,}\Phi}=%
{\displaystyle\bigcup\limits_{\vartheta_{\omega}\in\mathbb{R}_{\infty}^{3}}}
Lim_{\vartheta_{\omega}}\mathfrak{D}_{\boldsymbol{A,}\Phi}.
\]

\begin{theorem}
\label{te5.2}Let $\boldsymbol{A}\in C_{b}^{1}(\bar{\Omega},\mathbb{C}%
^{3}),\Phi\in C_{b}^{1}(\bar{\Omega}),$ $\mathfrak{b}_{j}\in C_{b}%
^{1}(\partial\Omega),j=1,2;$ $i,j=1,2,$ and the local standard
Lopatinsky-Shapiro condition hold at every point $x\in\partial\Omega.$ Then
\[
\mathbb{D}_{\boldsymbol{A},\Phi,\mathfrak{B}}:H^{1}(\Omega,\mathbb{C}%
^{4})\rightarrow L^{2}(\Omega,\mathbb{C}^{4})
\]
is the Fredholm operator if and only if all limit operators $\mathfrak{D}%
_{\boldsymbol{A,}\Phi}^{h}\in$ $Lim\mathfrak{D}_{\boldsymbol{A,}\Phi}$ are
invertible from $H^{1}(\Omega,\mathbb{C}^{4})$ into $L^{2}(\Omega
,\mathbb{C}^{4}).$
\end{theorem}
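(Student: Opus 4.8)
The plan is to deduce the theorem from the abstract Fredholm criterion of Proposition \ref{pr5.2}, together with the fact that, because $\Omega$ is the exterior of a bounded domain, the boundary $\partial\Omega$ is compact and hence carries no infinitely distant points. First I would record that $\partial\Omega_{\infty}=\emptyset$: since $\Omega^{\prime}=\mathbb{R}^{3}\setminus\bar{\Omega}$ is bounded, $\partial\Omega$ lies in a ball $B_{R}(0)$, so every infinitely distant point $\vartheta_{\omega}$ of $\tilde{\Omega}$ belongs to $\Omega_{\infty}$ and possesses a conical neighborhood $U_{\vartheta_{\omega}}$ disjoint from $\partial\Omega$. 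By Proposition \ref{pr5.2}, the operator $\mathbb{D}_{\boldsymbol{A},\Phi,\mathfrak{B}}$ is then Fredholm if and only if it is locally Fredholm and locally invertible at each such $\vartheta_{\omega}\in\Omega_{\infty}$.

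Next I would dispose of local Fredholmness, which uses neither the limit operators nor the behaviour at infinity. For any fixed $R>0$ the support of $\chi_{R}$ is a bounded set, on which $\mathbb{D}_{\boldsymbol{A},\Phi,\mathfrak{B}}$ is the operator of an elliptic boundary value problem whose standard Lopatinsky-Shapiro condition holds at every point of the compact piece of boundary it meets. Exactly as in Theorem \ref{te5.1}, the standard elliptic theory furnishes a two-sided parametrix on $\mathrm{supp}\,\chi_{R}$; away from $\partial\Omega$ this is merely the interior parametrix of the elliptic operator $\mathfrak{D}_{\boldsymbol{A},\Phi}$. Gluing these with a partition of unity produces operators $\mathcal{L}_{R},\mathcal{R}_{R}$ satisfying (\ref{5.12}) with compact remainders, so that $\mathbb{D}_{\boldsymbol{A},\Phi,\mathfrak{B}}$ is locally Fredholm in the sense of Definition \ref{de5.2}(i).

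The heart of the proof is the local invertibility at infinity, where I would exploit the compactness of $\partial\Omega$ decisively. Choosing $R$ so large that $\mathrm{supp}\,\psi_{R}\cap\partial\Omega=\emptyset$, one has $\psi_{R}\mathbb{D}_{\boldsymbol{A},\Phi,\mathfrak{B}}=\psi_{R}\mathfrak{D}_{\boldsymbol{A},\Phi}$, so that near infinity the boundary condition is invisible and the operator coincides with the Dirac operator $\mathfrak{D}_{\boldsymbol{A},\Phi}$ regarded on all of $\mathbb{R}^{3}$ (using a $C_{b}^{1}$ extension of the coefficients). Local invertibility of $\mathbb{D}_{\boldsymbol{A},\Phi,\mathfrak{B}}$ at $\vartheta_{\omega}$ is therefore equivalent to local invertibility of the full-space operator $\mathfrak{D}_{\boldsymbol{A},\Phi}$ at $\vartheta_{\omega}$. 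To the latter I would apply the limit operators machinery of \cite{Ra1},\cite{RRS}: the limit functions $\boldsymbol{A}^{h},\Phi^{h}$ exist along suitable subsequences by the Arzel\`{a}-Ascoli argument already recorded in (\ref{5.15}), they generate the limit operators $\mathfrak{D}^{h}=\mathfrak{D}_{\boldsymbol{A}^{h},\Phi^{h}}$ on $\mathbb{R}^{3}$, and the theorem of \cite{Ra1} states that $\mathfrak{D}_{\boldsymbol{A},\Phi}$ is locally invertible at $\vartheta_{\omega}$ precisely when every such $\mathfrak{D}^{h}$ arising from sequences $h_{m}\rightarrow\vartheta_{\omega}$ is invertible.

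Combining the three steps through Proposition \ref{pr5.2} yields the equivalence: $\mathbb{D}_{\boldsymbol{A},\Phi,\mathfrak{B}}$ is Fredholm if and only if all limit operators in $Lim\mathfrak{D}_{\boldsymbol{A},\Phi}$ are invertible. The hard part will be the rigorous reduction of local invertibility at infinity to the full-space limit operator criterion, that is, verifying that the Dirac operator with $C_{b}^{1}$ coefficients lies in the operator class covered by the theorem of \cite{Ra1} and that the localizing parametrices $\mathcal{L}_{R}^{\prime},\mathcal{R}_{R}^{\prime}$ of (\ref{5.13}) can be assembled from the invertible limit operators with uniformly bounded norms; by contrast, the emptiness of $\partial\Omega_{\infty}$ and the local Fredholmness are comparatively routine.
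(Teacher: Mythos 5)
Your proposal is correct and follows essentially the same route as the paper's own proof: local Fredholmness from ellipticity plus the standard Lopatinsky--Shapiro condition on the compact boundary, reduction via the local-principle propositions to local invertibility at infinity, observation that the operator coincides with the full-space Dirac operator $\mathfrak{D}_{\boldsymbol{A},\Phi}$ outside a bounded set (so the boundary condition plays no role there), and finally the limit-operator criterion from \cite{Ra1},\cite{Ra2},\cite{RRS}. The only cosmetic differences are that you invoke Proposition \ref{pr5.2} where the paper uses Proposition \ref{pr5.1} (equivalent here since $\partial\Omega_{\infty}=\emptyset$), and you spell out details (the $C_{b}^{1}$ extension of coefficients, the gluing of parametrices) that the paper leaves implicit.
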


\begin{proof}
Since the boundary $\partial\Omega\subset\mathbb{R}^{3}$ is a compact surface,
the operator $\mathfrak{D}_{\boldsymbol{A,}\Phi}$ is elliptic on $\bar{\Omega
},$ and the Lopatinsky-Shapiro condition holds at every point $x\in
\partial\Omega$ the operator $\mathbb{D}_{\boldsymbol{A},\Phi,\mathfrak{B}}$
is locally Fredholm. Hence by Proposition \ref{pr5.1} $\mathbb{D}%
_{\boldsymbol{A},\Phi,\mathfrak{B}}$ is a Fredholm operator if and only if
$\mathbb{D}_{\boldsymbol{A},\Phi,\mathfrak{B}}$ is a locally invertible
operator at infinity. The operator $\mathbb{D}_{\boldsymbol{A},\Phi
,\mathfrak{B}}$ coincides with the operator $\mathfrak{D}_{\boldsymbol{A}%
,\Phi}$ outside the set $\overline{\Omega^{\prime}}.$ Applying the results of
the paper \cite{Ra2} we obtain that $\mathfrak{D}_{\boldsymbol{A},\Phi}$ is
locally invertible at infinity if and only if for every $\vartheta_{\omega}%
\in\mathbb{R}_{\infty}^{3}$ all limit operators $\mathfrak{D}_{\boldsymbol{A,}%
\Phi}^{h}\in$ $Lim_{\vartheta_{\omega}}\mathfrak{D}_{\boldsymbol{A,}\Phi}$ are
invertible from $H^{1}(\mathbb{R}^{3},\mathbb{C}^{4})$ into $L^{2}%
(\mathbb{R}^{3},\mathbb{C}^{4}).$
\end{proof}

\begin{corollary}
\label{co5.4} Let conditions of Theorem \ref{te5.2} hold. Then
\begin{equation}
sp_{ess}\mathcal{D}_{\boldsymbol{A},\Phi,\mathfrak{B}}=%
{\displaystyle\bigcup\limits_{\mathfrak{D}_{\boldsymbol{A,}\Phi}^{h}\in
Lim\mathfrak{D}_{\boldsymbol{A,}\Phi}}}
sp\mathcal{D}_{\boldsymbol{A}^{h}\boldsymbol{,}\Phi^{h}}. \label{5.16}%
\end{equation}

\end{corollary}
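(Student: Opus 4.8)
The plan is to deduce the spectral formula (\ref{5.16}) from the Fredholm criterion of Theorem \ref{te5.2} by applying that theorem to the shifted family $\mathbb{D}_{\boldsymbol{A},\Phi,\mathfrak{B}}-\lambda I$, $\lambda\in\mathbb{C}$. By the definition of the essential spectrum recalled in the Notations, $\lambda\in sp_{ess}\mathcal{D}_{\boldsymbol{A},\Phi,\mathfrak{B}}$ precisely when the unbounded operator $\mathcal{D}_{\boldsymbol{A},\Phi,\mathfrak{B}}-\lambda I$ fails to be Fredholm. First I would record that, by the a priori estimate underlying Corollary \ref{co2.1} (equivalence of the graph norm on $dom\,\mathcal{D}_{\boldsymbol{A},\Phi,\mathfrak{B}}=H_{\mathfrak{B}}^{1}(\Omega,\mathbb{C}^{4})$ with the $H^{1}$-norm), Fredholmness of the unbounded operator $\mathcal{D}_{\boldsymbol{A},\Phi,\mathfrak{B}}-\lambda I$ is equivalent to Fredholmness of the bounded operator $\mathbb{D}_{\boldsymbol{A},\Phi,\mathfrak{B}}-\lambda I\colon H^{1}(\Omega,\mathbb{C}^{4})\to L^{2}(\Omega,\mathbb{C}^{4})$.

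Next I would check that Theorem \ref{te5.2} applies verbatim to $\mathbb{D}_{\boldsymbol{A},\Phi,\mathfrak{B}}-\lambda I$. The boundary operator $\mathfrak{B}$ is unchanged, and the perturbation $-\lambda I$ is of order zero, hence it does not affect the principal symbol of $\mathfrak{D}_{\boldsymbol{A},\Phi}$ nor the vectors $\boldsymbol{h}_{1}(\xi'),\boldsymbol{h}_{2}(\xi')$ entering the matrix $\mathcal{L}(x_{0},\xi')$. Consequently the standard Lopatinsky--Shapiro condition for $\mathbb{D}_{\boldsymbol{A},\Phi,\mathfrak{B}}-\lambda I$ coincides with that for $\mathbb{D}_{\boldsymbol{A},\Phi,\mathfrak{B}}$ and remains valid at every $x\in\partial\Omega$. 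Theorem \ref{te5.2} then yields that $\mathbb{D}_{\boldsymbol{A},\Phi,\mathfrak{B}}-\lambda I$ is Fredholm if and only if every limit operator of $\mathfrak{D}_{\boldsymbol{A},\Phi}-\lambda I$ is invertible on $\mathbb{R}^{3}$. Since $\lambda I$ is constant-coefficient and translation invariant, the limit operators of $\mathfrak{D}_{\boldsymbol{A},\Phi}-\lambda I$ are exactly $\mathfrak{D}_{\boldsymbol{A}^{h},\Phi^{h}}-\lambda I$ with $\mathfrak{D}_{\boldsymbol{A},\Phi}^{h}\in Lim\,\mathfrak{D}_{\boldsymbol{A},\Phi}$.

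It remains to translate invertibility of the limit operators into spectral membership. In the exterior case $\partial\Omega$ is compact, so, as in the proof of Theorem \ref{te5.2}, every limit operator is a full-space Dirac operator $\mathfrak{D}_{\boldsymbol{A}^{h},\Phi^{h}}$ on $\mathbb{R}^{3}$; with $\boldsymbol{A}^{h},\Phi^{h}$ bounded and real-valued, the associated unbounded operator $\mathcal{D}_{\boldsymbol{A}^{h},\Phi^{h}}$ on $H^{1}(\mathbb{R}^{3},\mathbb{C}^{4})$ is self-adjoint, whence $\mathfrak{D}_{\boldsymbol{A}^{h},\Phi^{h}}-\lambda I\colon H^{1}(\mathbb{R}^{3},\mathbb{C}^{4})\to L^{2}(\mathbb{R}^{3},\mathbb{C}^{4})$ is invertible if and only if $\lambda\notin sp\,\mathcal{D}_{\boldsymbol{A}^{h},\Phi^{h}}$. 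Stringing the equivalences together, $\lambda\in sp_{ess}\mathcal{D}_{\boldsymbol{A},\Phi,\mathfrak{B}}$ if and only if some limit operator $\mathfrak{D}_{\boldsymbol{A}^{h},\Phi^{h}}-\lambda I$ is non-invertible, i.e.\ $\lambda\in sp\,\mathcal{D}_{\boldsymbol{A}^{h},\Phi^{h}}$ for some $\mathfrak{D}_{\boldsymbol{A},\Phi}^{h}\in Lim\,\mathfrak{D}_{\boldsymbol{A},\Phi}$, which is precisely formula (\ref{5.16}).

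The step I expect to require the most care is the passage between the unbounded operator $\mathcal{D}_{\boldsymbol{A},\Phi,\mathfrak{B}}-\lambda I$ and the bounded operator $\mathbb{D}_{\boldsymbol{A},\Phi,\mathfrak{B}}-\lambda I$, together with the claim that invertibility of a self-adjoint limit operator is equivalent to $\lambda$ lying outside its spectrum; both hinge on the equivalence of the graph norm with the $H^{1}$-norm and on the self-adjointness of each $\mathcal{D}_{\boldsymbol{A}^{h},\Phi^{h}}$ on the full space. One should also note that for non-real $\lambda$ the identity is trivially consistent, since $\mathcal{D}_{\boldsymbol{A},\Phi,\mathfrak{B}}$ and all its limit operators are self-adjoint, so such $\lambda$ lie in neither side.
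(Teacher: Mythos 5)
Your proposal is correct and follows the route the paper itself intends: Corollary \ref{co5.4} is stated without a separate proof precisely because it is the immediate consequence of applying Theorem \ref{te5.2} to the shifted operator $\mathbb{D}_{\boldsymbol{A},\Phi,\mathfrak{B}}-\lambda I=\mathbb{D}_{\boldsymbol{A},\Phi-\lambda,\mathfrak{B}}$, whose principal symbol, Lopatinsky--Shapiro data, and limit-operator family are exactly as you describe, and then translating invertibility of the limit operators into spectral membership. Two small corrections to your justifications: the equivalence of the graph norm with the $H^{1}$-norm should be drawn from the standard elliptic estimates attached to the standard Lopatinsky--Shapiro condition (as invoked in Theorem \ref{te5.1}), not from Corollary \ref{co2.1}, whose uniform parameter-dependent hypothesis is not assumed here; and the step ``$\mathfrak{D}_{\boldsymbol{A}^{h},\Phi^{h}}-\lambda I$ invertible if and only if $\lambda\notin sp\,\mathcal{D}_{\boldsymbol{A}^{h},\Phi^{h}}$'' needs only that same norm equivalence (it is just the definition of the resolvent set of the unbounded operator), so self-adjointness and real-valuedness of the potentials --- which Theorem \ref{te5.2} does not require, since it allows $\boldsymbol{A}\in C_{b}^{1}(\bar{\Omega},\mathbb{C}^{3})$ --- play no role and your argument in fact covers all $\lambda\in\mathbb{C}$ uniformly.
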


\begin{definition}
\label{de5.3} We say that a function $a\in C_{b}^{1}(\mathbb{R}^{3})$ is
slowly oscillating at infinity and belongs to the class $SO^{1}(\mathbb{R}%
^{3})$ if
\begin{equation}
\lim_{x\rightarrow\infty}\partial_{x_{j}}a(x)=0,j=1,2,3. \label{5.16'}%
\end{equation}
If $\Omega\subset\mathbb{R}^{3}$ is an unbounded domain we denote by
$SO^{1}(\Omega)$ the class of functions being the restrictions on $\Omega$ of
functions in $SO^{1}(\mathbb{R}^{3}).$
\end{definition}

Note that if $f\in$ $SO^{1}(\mathbb{R}^{3})$ and there exists a limit function
$f^{h}$ in the sense of formula (\ref{5.15}), then $f^{h}\in\mathbb{C}$ (see
for instance \cite{RRS}, p.228).\ 

Let $A_{j},\Phi\in SO^{1}(\Omega).$ Then the limit operators $\mathfrak{D}%
_{\boldsymbol{A,}\Phi}^{h}$ are of the form
\begin{equation}
\mathfrak{D}_{\boldsymbol{A,}\Phi}^{h}=\mathfrak{D}_{\boldsymbol{A}%
^{h}\boldsymbol{,}\Phi^{h}}=\alpha\cdot(i\boldsymbol{\nabla}+\boldsymbol{A}%
^{h})+\alpha_{0}m+\Phi^{h}I_{4} \label{5.17}%
\end{equation}
where $\boldsymbol{A}^{h}\in\mathbb{C}^{3},\Phi^{h}\in\mathbb{C}$. In the case
if $A_{j},\Phi$ are real-valued functions the operator $\mathfrak{D}%
_{\boldsymbol{A,}\Phi}^{h}$ is self-adjoint and
\begin{equation}
sp\mathfrak{D}_{\boldsymbol{A,}\Phi}^{h}=\left(  -\infty,\Phi^{h}-\left\vert
m\right\vert \right]
{\displaystyle\bigcup}
\left[  \Phi^{h}+\left\vert m\right\vert ,+\infty\right)  . \label{5.17'}%
\end{equation}
Hence formula (\ref{5.16}) yields the following result.

\begin{theorem}
\label{te5.3}Let $\Omega$ be an exterior of a bounded domain with the $C^{2}%
-$boundary $\partial\Omega,$ $\boldsymbol{A}\in SO^{1}(\Omega,\mathbb{R}%
^{3}),\Phi\in SO^{1}(\Omega,\mathbb{R})$ and the local standard
Lopatinsky-Shapiro condition is satisfied at every point $x\in\partial\Omega.$
Then
\begin{equation}
sp_{ess}\mathcal{D}_{\boldsymbol{A},\Phi,\mathfrak{B}}=(-\infty,M_{\Phi}%
^{\sup}-\left\vert m\right\vert ]%
{\displaystyle\bigcup}
[M_{\Phi}^{\inf}+\left\vert m\right\vert ,+\infty) \label{5.18}%
\end{equation}
where
\begin{equation}
M_{\Phi}^{\sup}=\limsup_{x\rightarrow\infty}\Phi(x),M_{\Phi}^{\inf}%
=\liminf_{x\rightarrow\infty}\Phi(x). \label{5.18'}%
\end{equation}

\end{theorem}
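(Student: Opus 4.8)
The plan is to read off the essential spectrum from the limit-operator description of Corollary \ref{co5.4} and then to evaluate the resulting union explicitly. First I would check that the hypotheses of Theorem \ref{te5.2} (hence of Corollary \ref{co5.4}) hold: since $SO^1(\Omega)\subset C_b^1(\bar\Omega)$ by Definition \ref{de5.3}, the potentials $\boldsymbol{A},\Phi$ and the matrices $\mathfrak{b}_j$ meet the regularity requirements, and the local standard Lopatinsky-Shapiro condition is assumed. Corollary \ref{co5.4} then gives
\[
sp_{ess}\mathcal{D}_{\boldsymbol{A},\Phi,\mathfrak{B}}=\bigcup_{\mathfrak{D}_{\boldsymbol{A,}\Phi}^{h}\in Lim\mathfrak{D}_{\boldsymbol{A,}\Phi}}sp\mathcal{D}_{\boldsymbol{A}^h,\Phi^h}.
\]

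Next I would use that $\boldsymbol{A},\Phi\in SO^1(\Omega)$ are real-valued, so by the remark after Definition \ref{de5.3} every limit operator has the constant-coefficient form (\ref{5.17}) with $\boldsymbol{A}^h\in\mathbb{R}^3$ and $\Phi^h\in\mathbb{R}$, and by (\ref{5.17'}) it is self-adjoint with
\[
sp\mathcal{D}_{\boldsymbol{A}^h,\Phi^h}=(-\infty,\Phi^h-|m|]\cup[\Phi^h+|m|,+\infty),
\]
a set independent of the constant magnetic potential $\boldsymbol{A}^h$ (the latter is removed by the gauge unitary $e^{-i\boldsymbol{A}^h\cdot x}$, which conjugates $i\partial_{x_j}+A_j^h$ into $i\partial_{x_j}$). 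Consequently the essential spectrum depends only on the set of attained values $\Lambda=\{\Phi^h:\ h_m\to\vartheta_\omega,\ \vartheta_\omega\in\mathbb{R}_\infty^3\}$. Moreover, slow oscillation gives $|\Phi(x+h_m)-\Phi(h_m)|\le |x|\sup_{t\in[0,1]}|\nabla\Phi(h_m+tx)|\to0$ on compacts, so $\Phi^h=\lim_m\Phi(h_m)$; hence $\Lambda$ is exactly the set of partial limits of $\Phi$ at infinity.

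The key step, which I expect to be the main obstacle, is to identify $\Lambda$ with the full closed interval $[M_\Phi^{\inf},M_\Phi^{\sup}]$ rather than a proper closed subset. The inclusions $\inf\Lambda=M_\Phi^{\inf}$, $\sup\Lambda=M_\Phi^{\sup}$ and closedness of $\Lambda$ are immediate from (\ref{5.18'}); the content is the reverse inclusion. Here I would exploit that $\Omega$ is the exterior of a bounded domain: for $R$ large the closed exterior $\{|x|\ge R\}$ lies in $\Omega$ and is path-connected in $\mathbb{R}^3$. Fix $c\in(M_\Phi^{\inf},M_\Phi^{\sup})$. Since $\limsup_{x\to\infty}\Phi>c$ and $\liminf_{x\to\infty}\Phi<c$, for each such $R$ there are points $x_R,y_R$ with $|x_R|,|y_R|\ge R$ and $\Phi(x_R)>c>\Phi(y_R)$; joining them by a path inside $\{|x|\ge R\}$ and invoking the intermediate value theorem yields $z_R$ with $|z_R|\ge R$ and $\Phi(z_R)=c$. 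Letting $R\to\infty$ produces $z_R\to\infty$ with $\Phi(z_R)=c$, so $c\in\Lambda$; thus $\Lambda=[M_\Phi^{\inf},M_\Phi^{\sup}]$.

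Finally I would evaluate the union over $\Phi^h\in[M_\Phi^{\inf},M_\Phi^{\sup}]$,
\[
\bigcup_{\Phi^h\in[M_\Phi^{\inf},M_\Phi^{\sup}]}\Big((-\infty,\Phi^h-|m|]\cup[\Phi^h+|m|,+\infty)\Big),
\]
observing that the left rays $(-\infty,\Phi^h-|m|]$ are largest when $\Phi^h=M_\Phi^{\sup}$ and sweep out $(-\infty,M_\Phi^{\sup}-|m|]$, while the right rays $[\Phi^h+|m|,+\infty)$ are smallest when $\Phi^h=M_\Phi^{\inf}$ and sweep out $[M_\Phi^{\inf}+|m|,+\infty)$. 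This produces precisely (\ref{5.18}). The only genuinely geometric ingredient is the interval-filling argument of the previous paragraph, which is where the exterior-domain structure (connectedness of the complements of large balls) together with continuity of $\Phi$ is indispensable; slow oscillation itself enters earlier, to force the limit potentials to be constant so that (\ref{5.17'}) applies.
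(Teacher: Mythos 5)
Your proof is correct, and its skeleton is exactly the paper's: Corollary \ref{co5.4} reduces $sp_{ess}\mathcal{D}_{\boldsymbol{A},\Phi,\mathfrak{B}}$ to the union of the spectra of the limit operators, slow oscillation forces every limit pair $(\boldsymbol{A}^{h},\Phi^{h})$ to be a real constant, and then (\ref{5.17'}) (with the gauge removal of $\boldsymbol{A}^{h}$) gives each spectrum explicitly. Where you diverge is that you have misidentified the crux: the interval-filling argument (connectedness of $\{|x|\geq R\}$ plus the intermediate value theorem, showing that the set $\Lambda$ of partial limits of $\Phi$ is all of $[M_{\Phi}^{\inf},M_{\Phi}^{\sup}]$) is correct but entirely unnecessary. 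Since the rays $(-\infty,\Phi^{h}-|m|]$ increase with $\Phi^{h}$ and the rays $[\Phi^{h}+|m|,+\infty)$ decrease with $\Phi^{h}$, the union $\bigcup_{\Phi^{h}\in\Lambda}\bigl((-\infty,\Phi^{h}-|m|]\cup[\Phi^{h}+|m|,+\infty)\bigr)$ equals $(-\infty,\max\Lambda-|m|]\cup[\min\Lambda+|m|,+\infty)$ for \emph{any} compact set $\Lambda$; the intermediate values of $\Phi^{h}$ contribute nothing. So the only thing that must be checked is that $M_{\Phi}^{\sup}$ and $M_{\Phi}^{\inf}$ are actually attained as limit values (so that the endpoints carry closed brackets), and this you already obtain in your second paragraph from the definition of $\limsup$ and $\liminf$ together with the Arzela--Ascoli extraction and the estimate giving $\Phi^{h}=\lim_{m}\Phi(h_{m})$. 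In particular the connectedness of complements of large balls, which you call indispensable, plays no role: the exterior structure of $\Omega$ enters only through the compactness of $\partial\Omega$ (which yields local Fredholmness in Theorem \ref{te5.2}) and through the fact that all limit operators act on the whole of $\mathbb{R}^{3}$ with no boundary condition.
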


\begin{corollary}
\label{co5.3} Under conditions of Theorem \ref{te5.3}
\[
sp_{dis}\mathcal{D}_{\boldsymbol{A},\Phi,\mathfrak{B}}\subset(M_{\Phi}^{\sup
}-\left\vert m\right\vert ,M_{\Phi}^{\inf}+\left\vert m\right\vert )
\]
if $M_{\Phi}^{\sup}-M_{\Phi}^{\inf}<2\left\vert m\right\vert ,$ and if
$M_{\Phi}^{\sup}-M_{\Phi}^{\inf}\geq2\left\vert m\right\vert $ then
\end{corollary}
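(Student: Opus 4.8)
The plan is to deduce the statement directly from the essential-spectrum formula (\ref{5.18}) of Theorem \ref{te5.3}, combined with the self-adjointness of $\mathcal{D}_{\boldsymbol{A},\Phi,\mathfrak{B}}$ and the elementary identity $sp_{dis}\mathcal{A}=sp\mathcal{A}\diagdown sp_{ess}\mathcal{A}$ valid for any self-adjoint operator (recorded in the Notations). First I would fix the functional-analytic setting: since the potentials are real-valued and $\partial\Omega$ is a compact $C^{2}$-surface (hence uniformly regular), the operator $\mathcal{D}_{\boldsymbol{A},\Phi,\mathfrak{B}}$ is self-adjoint by Theorem \ref{te3.1} (whose symmetry and uniform parameter-dependent Lopatinsky-Shapiro hypotheses are understood to be in force here, as they are for the MIT bag model by Theorem \ref{te4.1}). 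In particular its spectrum is contained in $\mathbb{R}$ and splits as the disjoint union $sp=sp_{ess}\cup sp_{dis}$, so that $sp_{dis}\mathcal{D}_{\boldsymbol{A},\Phi,\mathfrak{B}}\subset\mathbb{R}\diagdown sp_{ess}\mathcal{D}_{\boldsymbol{A},\Phi,\mathfrak{B}}$.

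In the first regime, $M_{\Phi}^{\sup}-M_{\Phi}^{\inf}<2\left\vert m\right\vert$, this inequality is equivalent to $M_{\Phi}^{\sup}-\left\vert m\right\vert<M_{\Phi}^{\inf}+\left\vert m\right\vert$, so formula (\ref{5.18}) presents $sp_{ess}\mathcal{D}_{\boldsymbol{A},\Phi,\mathfrak{B}}$ as two disjoint closed half-lines whose complement in $\mathbb{R}$ is precisely the open interval $(M_{\Phi}^{\sup}-\left\vert m\right\vert,M_{\Phi}^{\inf}+\left\vert m\right\vert)$. Since $sp_{dis}\subset\mathbb{R}\diagdown sp_{ess}$, the inclusion $sp_{dis}\mathcal{D}_{\boldsymbol{A},\Phi,\mathfrak{B}}\subset(M_{\Phi}^{\sup}-\left\vert m\right\vert,M_{\Phi}^{\inf}+\left\vert m\right\vert)$ follows at once.

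In the second regime, $M_{\Phi}^{\sup}-M_{\Phi}^{\inf}\geq2\left\vert m\right\vert$, the inequality rewrites as $M_{\Phi}^{\inf}+\left\vert m\right\vert\leq M_{\Phi}^{\sup}-\left\vert m\right\vert$, so the two half-lines in (\ref{5.18}) already cover all of $\mathbb{R}$ and $sp_{ess}\mathcal{D}_{\boldsymbol{A},\Phi,\mathfrak{B}}=(-\infty,+\infty)$. As $sp_{ess}\subset sp\subset\mathbb{R}$, this forces $sp\mathcal{D}_{\boldsymbol{A},\Phi,\mathfrak{B}}=sp_{ess}\mathcal{D}_{\boldsymbol{A},\Phi,\mathfrak{B}}=(-\infty,+\infty)$, whence $sp_{dis}\mathcal{D}_{\boldsymbol{A},\Phi,\mathfrak{B}}=\emptyset$; this is the conclusion that the unfinished sentence of the corollary is meant to record (compare (\ref{0.11})).

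There is no genuine analytic difficulty here: the corollary is a formal consequence of (\ref{5.18}) together with the spectral decomposition of a self-adjoint operator, reducing in both cases to a comparison of the two endpoints $M_{\Phi}^{\sup}-\left\vert m\right\vert$ and $M_{\Phi}^{\inf}+\left\vert m\right\vert$. The only step meriting attention is the appeal to self-adjointness: one must check that the hypotheses in force (real potentials, uniformly regular compact boundary, plus the symmetry and uniform parameter-dependent Lopatinsky-Shapiro conditions of Theorem \ref{te3.1}, as opposed to merely the standard one used for the Fredholm description of Theorem \ref{te5.2}) genuinely deliver the clean identity $sp_{dis}=sp\diagdown sp_{ess}$, which is what makes the endpoint comparison conclusive.
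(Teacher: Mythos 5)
Your proposal is correct and follows exactly the route the paper intends: the paper states Corollary \ref{co5.3} without any written proof, treating it as an immediate consequence of formula (\ref{5.18}) together with the identity $sp_{dis}\mathcal{A}=sp\mathcal{A}\diagdown sp_{ess}\mathcal{A}$ for self-adjoint operators recorded in the Notations, which is precisely your endpoint-comparison argument. Your explicit flag that self-adjointness (via Theorem \ref{te3.1}, not literally among the stated hypotheses of Theorem \ref{te5.3}) must be in force is a fair reading of the paper's implicit assumption rather than a deviation from it.
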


\[
sp\mathcal{D}_{\boldsymbol{A},\Phi,\mathfrak{B}}=sp_{ess}\mathcal{D}%
_{\boldsymbol{A},\Phi,\mathfrak{B}}=(-\infty,+\infty).
\]

\subsection{Fredholm property and essential spectrum in domains with a conical
structure at infinity}

Let $\Omega\subset\mathbb{R}^{3}$ be an connected open domain with $C^{2}%
-$boundary. We say that $\Omega$ has a conic exit at infinity if
\[
\Omega\cap B_{R}^{\prime}=\left\{  x\in\mathbb{R}^{3}:x=t\omega,t>R,\omega
\in\Sigma\right\}
\]
where $B_{R}^{\prime}=\left\{  x\in\mathbb{R}^{3}:\left\vert x\right\vert
>R\right\}  ,$ $\Sigma\subset S^{2}$ is an open set with $C^{2}$-boundary
$\partial\Sigma$. We denote by $\tilde{\Omega},\widetilde{\partial\Omega}$ the
compactifications of $\Omega,\partial\Omega$ in the topology of $\widetilde
{\mathbb{R}^{3}}.$

We consider the operator $\mathbb{D}_{\boldsymbol{A},\Phi,\mathfrak{B}}$ in
domains with a conic exit at infinity with potentials $A_{j},\Phi\in
SO^{1}(\mathbb{\Omega})$ and $\mathfrak{b}_{j}\in SO^{1}(\partial
\Omega,\mathcal{B(}\mathbb{C}^{2}))=SO^{1}(\partial\Omega)\otimes
\mathcal{B(}\mathbb{C}^{2}),j=1,2.$We define the limit operators of the
operator $\mathbb{D}_{\boldsymbol{A},\Phi,\mathfrak{B}}$ similarly to how it
was done in the articles \cite{Ra1},\cite{Ra2}.

\begin{itemize}
\item If $\vartheta_{\omega}\notin\partial\Omega_{\infty}$ then the limit
operators defined by the sequence $h_{m}\rightarrow\vartheta_{\omega}$ are the
Dirac operators $\mathfrak{D}_{\boldsymbol{A}^{h},\Phi^{h}}$ with the spectrum
given by formula (\ref{5.17'}).

\item Let $\vartheta_{\omega}\in\partial\Omega_{\infty},\ $ $l_{\omega}%
^{R}=\left\{  x\in\mathbb{R}^{3}:x=t\omega,t>R\right\}  $ and $\mathbb{T}%
_{\vartheta_{\omega}}$ be the tangent plane to $\partial\Omega$ at the$\ $ ray
$l_{\omega}^{R}$ and $\boldsymbol{\nu}(\omega)$ is the outgoing normal vector
to $\partial\Omega$ at the points of ray $l_{\omega}^{R}.$ We denote by
\[
\mathbb{R}_{+,\vartheta_{\omega}}^{3}=\left\{  y=(y^{\prime},y_{3}%
)\in\mathbb{R}^{3}:y^{\prime}\in\mathbb{T}_{\vartheta_{\omega}}y_{3}%
=t\boldsymbol{\nu}(\omega),t>0\right\}
\]
the half-space in $\mathbb{R}^{3}$ with the boundary $\mathbb{T}%
_{\vartheta_{\omega}}.$ Following to the paper \cite{Ra1} we obtain the limit
operators of $\mathbb{D}_{\boldsymbol{A},\Phi,\mathfrak{B}}$ defined by the
sequences $h_{m}\rightarrow\vartheta_{\omega}$ as
\begin{equation}
\mathbb{D}_{\boldsymbol{A},\Phi,\mathfrak{B}}^{h}\boldsymbol{u}(x)=\left\{
\begin{array}
[c]{c}%
\mathfrak{D}_{\boldsymbol{A}^{h},\Phi^{h}}\boldsymbol{u}(x),x\in
\mathbb{R}_{+,\vartheta_{\omega}}^{3},\\
\mathfrak{B}^{h}\boldsymbol{u(}s\boldsymbol{)=}\mathfrak{b}_{1}^{h}%
\boldsymbol{u}_{\mathbb{T}_{\vartheta_{\omega}}}^{1}(s)+\mathfrak{b}_{2}%
^{h}\boldsymbol{u}_{\mathbb{T}_{\vartheta_{\omega}}}^{2}(s)=\boldsymbol{0}%
,s\in\mathbb{T}_{\vartheta_{\omega}}%
\end{array}
\right.  , \label{6.1}%
\end{equation}

where
\begin{align*}
\boldsymbol{A}^{h}  &  \in\mathbb{C}^{3},\Phi^{h}\in\mathbb{C},\mathfrak{b}%
_{j}^{h}\in\mathcal{B(}\mathbb{T}_{\vartheta_{\omega}},\mathcal{B(}%
\mathbb{C}^{2})),\\
\boldsymbol{u}  &  \in H^{1}(\mathbb{R}_{+,\vartheta_{\omega}}^{3}%
,\mathbb{C}^{4}),\boldsymbol{u}_{\mathbb{T}_{\vartheta_{\omega}}}^{j}%
=\gamma_{\mathbb{T}_{\vartheta_{\omega}}}\boldsymbol{u}^{j}\in H^{1/2}%
(\mathbb{T}_{\vartheta_{\omega}},\mathbb{C}^{2}).
\end{align*}

\end{itemize}

\begin{theorem}
\label{te5.4} Let: (i) $\partial\Omega$ be a $C^{2}-$surface with a conic exit
at infinity, (ii) $A_{j},\Phi\in SO^{1}(\mathbb{\Omega})$ and $\mathfrak{b}%
_{j}\in SO^{1}(\partial\Omega,\mathcal{B(}\mathbb{C}^{2}))=SO^{1}%
(\partial\Omega)\otimes\mathcal{B(}\mathbb{C}^{2}),$ (iii) the
Lopatinsky-Shapiro condition be satisfied at every point $x\in\partial\Omega.$
Then the operator $\mathfrak{D}_{\boldsymbol{A},\Phi,\mathfrak{B}}%
:H^{1}(\mathbb{\Omega},\mathbb{C}^{4})\rightarrow L^{2}(\Omega,\mathbb{C}%
^{4})$ is a Fredholm operator if and only if for every $\vartheta_{\omega}%
\in\partial\Omega_{\infty}$ all limit operators $\mathbb{D}_{\boldsymbol{A}%
^{h},\Phi^{h},\mathfrak{B}^{h}}$ $\in Lim_{\vartheta_{\omega}}\mathfrak{D}%
_{\boldsymbol{A},\Phi,\mathfrak{B}}$ defined by formulas (\ref{5.17}),
(\ref{6.1}) are invertible.
\end{theorem}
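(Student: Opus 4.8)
The plan is to argue as in the proof of Theorem \ref{te5.2}, combining two ingredients: local Fredholmness, which comes from ellipticity together with the Lopatinsky-Shapiro condition, and the reduction of the global Fredholm property to local invertibility at the infinitely distant points, furnished by Propositions \ref{pr5.1} and \ref{pr5.2}. The new feature compared with Theorem \ref{te5.2} is that here $\partial\Omega$ is noncompact and runs out to infinity along the cone, so besides the bulk limit operators (\ref{5.17}) one must also handle boundary limit operators of half-space type (\ref{6.1}).

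Local Fredholmness is a local property and, exactly as in Theorem \ref{te5.2}, it follows from the ellipticity of $\mathfrak{D}_{\boldsymbol{A},\Phi}$ on $\bar{\Omega}$ and the local standard Lopatinsky-Shapiro condition (\ref{2.18'}) at every $x\in\partial\Omega$: away from the boundary one uses an interior pseudodifferential parametrix, and near each finite boundary point one builds a local boundary parametrix from the exponentially decaying model solutions of (\ref{2.8}). This produces, for every $R>0$, regularizers $\mathcal{L}_{R},\mathcal{R}_{R}$ satisfying (\ref{5.12}) with compact remainders, so $\mathbb{D}_{\boldsymbol{A},\Phi,\mathfrak{B}}$ is locally Fredholm.

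By Proposition \ref{pr5.2} it then remains to characterize local invertibility at each infinitely distant point $\vartheta_{\omega}\in\Omega_{\infty}\cup\partial\Omega_{\infty}$, which I would split according to the position of $\omega$ relative to $\Sigma$. If $\omega\in\Sigma$, i.e. $\vartheta_{\omega}\notin\partial\Omega_{\infty}$, then a conical neighborhood of $\vartheta_{\omega}$ lies in the interior of $\Omega$, so on the support of the cut-off $\varphi_{\vartheta_{\omega}}$ the operator coincides with $\mathfrak{D}_{\boldsymbol{A},\Phi}$; since $A_{j},\Phi\in SO^{1}(\Omega)$ the shifts converge to constants $\boldsymbol{A}^{h}\in\mathbb{C}^{3},\Phi^{h}\in\mathbb{C}$, and by \cite{Ra2} local invertibility at $\vartheta_{\omega}$ is equivalent to invertibility of the bulk limit operators $\mathfrak{D}_{\boldsymbol{A}^{h},\Phi^{h}}$ of the form (\ref{5.17}) on $H^{1}(\mathbb{R}^{3},\mathbb{C}^{4})$, just as in Theorem \ref{te5.2}.

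The hard part is the case $\omega\in\partial\Sigma$, i.e. $\vartheta_{\omega}\in\partial\Omega_{\infty}$, where the boundary itself reaches infinity. Here I would exploit the conic exit assumption to control the geometry: translating along a sequence $h_{m}\to\vartheta_{\omega}$ and passing to the limit, the $C^{2}$-uniformly regular boundary flattens onto its tangent plane $\mathbb{T}_{\vartheta_{\omega}}$ along the ray $l_{\omega}^{R}$ and $\Omega$ converges to the half-space $\mathbb{R}_{+,\vartheta_{\omega}}^{3}$, while the slow oscillation $A_{j},\Phi\in SO^{1}(\Omega)$ and $\mathfrak{b}_{j}\in SO^{1}(\partial\Omega,\mathcal{B}(\mathbb{C}^{2}))$ forces the coefficients and boundary matrices to converge to constants $\boldsymbol{A}^{h},\Phi^{h},\mathfrak{b}_{j}^{h}$ along the ray. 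This identifies the limit operators with the half-space boundary value problems $\mathbb{D}_{\boldsymbol{A}^{h},\Phi^{h},\mathfrak{B}^{h}}$ of (\ref{6.1}), and invoking the limit operator calculus for boundary value problems of \cite{Ra1} (whose hypotheses hold precisely because $\partial\Omega$ is $C^{2}$-uniformly regular and the data are slowly oscillating) shows that local invertibility at $\vartheta_{\omega}$ is equivalent to invertibility of all limit operators (\ref{6.1}). Combining the two cases with Proposition \ref{pr5.2} yields the stated equivalence. The principal obstacle is exactly this convergence at the boundary at infinity: one must show that the shifted domains converge to a fixed half-space in the sense required by the machinery of \cite{Ra1}, and that the Lopatinsky-Shapiro condition passes to the limit, so that each limiting problem (\ref{6.1}) is again a well-posed elliptic boundary value problem on $\mathbb{R}_{+,\vartheta_{\omega}}^{3}$.
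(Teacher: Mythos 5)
Your proposal is correct and follows essentially the same route as the paper: local Fredholmness from ellipticity plus the Lopatinsky-Shapiro condition, reduction via Proposition \ref{pr5.2} to local invertibility at the infinitely distant points, and identification of that local invertibility with invertibility of the limit operators (\ref{5.17}) at interior directions and (\ref{6.1}) at boundary directions by the machinery of \cite{RRS}, \cite{Ra1}, \cite{Ra2}. In fact the paper's own proof is far terser --- it simply cites those references for the final step --- so your case split at $\Omega_{\infty}$ versus $\partial\Omega_{\infty}$ and your discussion of the convergence of the shifted domains to half-spaces is a faithful elaboration of what the paper leaves implicit.
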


\begin{proof}
The operator $\mathbb{D}_{\boldsymbol{A},\Phi,\mathfrak{B}}$ is locally
Fredholm since the Dirac operator $\mathfrak{D}_{A,\Phi}$ is elliptic and the
Lopatinsky condition holds at every point $x\in\partial\Omega.$ According
Proposition \ref{p3.1} $\mathbb{D}_{\boldsymbol{A},\Phi,\mathfrak{B}}$ is a
Fredholm operator if and only if $\mathbb{D}_{\boldsymbol{A},\Phi
,\mathfrak{B}}$ is locally invertible at every infinitely distant point
$\vartheta_{\omega}\in\bar{\Omega}_{\infty}$. Following to the monograph
\cite{RRS}, and the paper \cite{Ra1},\cite{Ra2} we obtain the statement of
Theorem \ref{te5.4}.
\end{proof}

\begin{corollary}
\label{co5.5} Let the conditions of Theorem \ref{te5.4} hold. Then
\begin{equation}
sp_{ess}\mathcal{D}_{\boldsymbol{A},\Phi,\mathfrak{B}}=%
{\displaystyle\bigcup\limits_{\mathcal{D}_{\boldsymbol{A},\Phi,\mathfrak{B}%
}^{h}\in Lim\mathcal{D}_{\boldsymbol{A},\Phi,\mathfrak{B}}}}
sp\mathcal{D}_{\boldsymbol{A},\Phi,\mathfrak{B}}^{h} \label{6.2}%
\end{equation}
where $\mathcal{D}_{\boldsymbol{A},\Phi,\mathfrak{B}}^{h}$ are unbounded
operators associated with the limit operators $\mathbb{D}_{\boldsymbol{A}%
,\Phi,\mathfrak{B}}^{h}.$
\end{corollary}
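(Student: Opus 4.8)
The plan is to deduce (\ref{6.2}) directly from the Fredholm criterion of Theorem \ref{te5.4} by absorbing the spectral parameter into the electrostatic potential. First I would note that for any $\lambda\in\mathbb{C}$
\[
\mathcal{D}_{\boldsymbol{A},\Phi,\mathfrak{B}}-\lambda I=\mathcal{D}_{\boldsymbol{A},\Phi-\lambda,\mathfrak{B}},
\]
since subtracting $\lambda I_{4}$ merely shifts the zeroth-order term $\Phi I_{4}$ to $(\Phi-\lambda)I_{4}$. Because $\lambda$ is constant, $\Phi-\lambda\in SO^{1}(\Omega)$ whenever $\Phi\in SO^{1}(\Omega)$ (the derivatives are unchanged), and the standard Lopatinsky-Shapiro condition for $\mathbb{D}_{\boldsymbol{A},\Phi-\lambda,\mathfrak{B}}$ coincides with that for $\mathbb{D}_{\boldsymbol{A},\Phi,\mathfrak{B}}$, because the matrix $\mathcal{L}(x_{0},\xi^{\prime})$ of Definition \ref{d2.2} depends only on the boundary matrices $\mathfrak{b}_{j}$ and the principal symbol and is insensitive to the lower-order potentials. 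Hence all hypotheses of Theorem \ref{te5.4} hold for $\mathbb{D}_{\boldsymbol{A},\Phi-\lambda,\mathfrak{B}}$ for every $\lambda$.

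Next I would relate the essential spectrum to Fredholmness of the bounded operator. By the definition of $sp_{ess}$ and the characterization of Fredholmness for a closed unbounded operator given in the Notations, $\lambda\in sp_{ess}\mathcal{D}_{\boldsymbol{A},\Phi,\mathfrak{B}}$ iff $\mathcal{D}_{\boldsymbol{A},\Phi-\lambda,\mathfrak{B}}$ with domain $H_{\mathfrak{B}}^{1}(\Omega,\mathbb{C}^{4})$ (equipped with the graph norm) fails to be Fredholm. The a priori estimate (\ref{3.6}), which follows from Corollary \ref{co2.1} and does not require the potentials to be real-valued, shows that this graph norm is equivalent to the $H^{1}$-norm on $H_{\mathfrak{B}}^{1}(\Omega,\mathbb{C}^{4})$; therefore the statement is equivalent to non-Fredholmness of the bounded operator $\mathbb{D}_{\boldsymbol{A},\Phi-\lambda,\mathfrak{B}}:H^{1}\to L^{2}$ to which Theorem \ref{te5.4} applies. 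By that theorem, $\mathbb{D}_{\boldsymbol{A},\Phi-\lambda,\mathfrak{B}}$ is Fredholm iff every limit operator in $Lim_{\vartheta_{\omega}}\mathfrak{D}_{\boldsymbol{A},\Phi-\lambda,\mathfrak{B}}$ is invertible. Since the limit of the shifted potential is $\Phi^{h}-\lambda$ while the limit boundary data $\mathfrak{b}_{j}^{h}$ are unchanged, these limit operators are exactly $\mathbb{D}_{\boldsymbol{A},\Phi,\mathfrak{B}}^{h}-\lambda I$.

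Finally I would convert invertibility of each limit operator into a spectral statement: $\mathbb{D}_{\boldsymbol{A},\Phi,\mathfrak{B}}^{h}-\lambda I$ is invertible as a bounded map $H^{1}\to L^{2}$ iff $\lambda\notin sp\mathcal{D}_{\boldsymbol{A},\Phi,\mathfrak{B}}^{h}$, using once more the equivalence of the graph norm with the $H^{1}$-norm, now for the constant-coefficient limit operators (the free Dirac operators at interior distant points and the half-space problems (\ref{6.1}) at boundary distant points). Passing to complements then yields
\[
\lambda\in sp_{ess}\mathcal{D}_{\boldsymbol{A},\Phi,\mathfrak{B}}\iff\exists\,h:\ \lambda\in sp\mathcal{D}_{\boldsymbol{A},\Phi,\mathfrak{B}}^{h},
\]
which is precisely (\ref{6.2}). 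I expect the main point requiring care to be the two norm-equivalence steps, i.e. guaranteeing that the Fredholm and invertibility statements for the bounded operators $H^{1}\to L^{2}$ correspond faithfully to the non-Fredholm (spectral) statements for the associated unbounded operators; this rests on the a priori estimates and on the uniform Lopatinsky-Shapiro condition assumed throughout, and it is the technical bridge that makes the purely algebraic substitution $\Phi\mapsto\Phi-\lambda$ translate into the spectral identity.
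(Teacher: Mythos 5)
You take essentially the same route the paper intends: Corollary \ref{co5.5} is stated in the paper with no proof of its own, and the implicit argument is exactly yours --- the shift $\mathcal{D}_{\boldsymbol{A},\Phi,\mathfrak{B}}-\lambda I=\mathcal{D}_{\boldsymbol{A},\Phi-\lambda,\mathfrak{B}}$ preserves the hypotheses of Theorem \ref{te5.4} (adding a constant does not affect $SO^{1}$ membership, and the standard Lopatinsky--Shapiro matrix of Definition \ref{d2.2} involves only $\mathfrak{b}_{j}$ and the principal symbol), the limit operators of the shifted problem are the shifted limit operators $\mathbb{D}_{\boldsymbol{A},\Phi,\mathfrak{B}}^{h}-\lambda I$, and Theorem \ref{te5.4} then turns non-Fredholmness of $\mathbb{D}_{\boldsymbol{A},\Phi-\lambda,\mathfrak{B}}$ into non-invertibility of some limit operator, i.e.\ into $\lambda\in sp\mathcal{D}_{\boldsymbol{A},\Phi,\mathfrak{B}}^{h}$. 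Your skeleton is correct and is the intended one.

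The one step whose justification does not follow from what you cite is the norm-equivalence bridge. You invoke estimate (\ref{3.6}) via Corollary \ref{co2.1}; but that corollary, and hence (\ref{3.6}), is proved under the \emph{uniform parameter-dependent} Lopatinsky--Shapiro condition (\ref{2.18}), whereas Theorem \ref{te5.4} assumes only the \emph{standard} condition at each boundary point, which is (essentially) the $\mu=0$ slice of condition (\ref{2.17}) and is genuinely weaker. So under the stated hypotheses of Corollary \ref{co5.5} you are not entitled to (\ref{3.6}) as cited. What you actually need is weaker and parameter-free: the a priori estimate $\Vert\boldsymbol{u}\Vert_{H^{1}(\Omega,\mathbb{C}^{4})}\leq C\left(  \Vert\mathfrak{D}_{\boldsymbol{A},\Phi}\boldsymbol{u}\Vert_{L^{2}(\Omega,\mathbb{C}^{4})}+\Vert\boldsymbol{u}\Vert_{L^{2}(\Omega,\mathbb{C}^{4})}\right)$ for $\boldsymbol{u}\in H_{\mathfrak{B}}^{1}(\Omega,\mathbb{C}^{4})$, which is precisely the closedness of $\mathcal{D}_{\boldsymbol{A},\Phi,\mathfrak{B}}$ and the equivalence of the graph norm with the $H^{1}$-norm; without closedness the operator is nowhere Fredholm, $sp_{ess}\mathcal{D}_{\boldsymbol{A},\Phi,\mathfrak{B}}=\mathbb{C}$, and formula (\ref{6.2}) would fail. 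This estimate does follow from the standard Lopatinsky--Shapiro condition (taken uniformly --- on an unbounded boundary, pointwise non-vanishing of $\det\mathcal{L}(x,\xi^{\prime})$ does not by itself give a uniform lower bound) by gluing local elliptic estimates with the partition-of-unity argument of the proof of Theorem \ref{t2.2}, no parameter $\mu$ being required; alternatively one can add condition (\ref{2.18}) to the hypotheses, which is what holds in the paper's own application to the MIT bag model. This is a repairable citation issue rather than a structural error, and the paper itself never addresses the point; your write-up is in fact more explicit about this technical content than the source.
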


\subsection{Essential spectrum of the operator of MIT bag model in domains
with conic exit to infinity}

We consider the operator of MIT bag model in the domain $\Omega$ with $C^{2}%
-$conical at infinity boundary
\begin{equation}
\mathbb{M}_{\boldsymbol{A,}\Phi,\mathfrak{M}}\boldsymbol{u}(x)\boldsymbol{=}%
\left\{
\begin{array}
[c]{c}%
\mathfrak{D}_{\boldsymbol{A},\Phi}\boldsymbol{u}(x),x\in\Omega\\
\mathfrak{M}\boldsymbol{u}=\mathbf{u}_{\partial\Omega}^{1}+i\left(
\sigma\cdot\boldsymbol{\nu}\right)  \boldsymbol{u}_{\partial\Omega}%
^{2}\boldsymbol{=0}\text{ on }\partial\Omega
\end{array}
\right.  . \label{7.1}%
\end{equation}
We assume that the potentials $\boldsymbol{A}\in SO^{1}(\Omega,\mathbb{R}%
^{3}),\Phi\in SO^{1}(\Omega,\mathbb{R})$ are real-valued. Since the conical at
infinity $C^{2}-$surface is uniformly regular the unbounded operator
$\mathcal{M}_{\boldsymbol{A,}\Phi,\mathfrak{M}}$ associated with
$\mathbb{M}_{\boldsymbol{A,}\Phi,\mathfrak{M}}:H^{1}(\Omega,\mathbb{C}%
^{4})\rightarrow L^{2}(\Omega,\mathbb{C}^{4})$ is self-adjoint in
$L^{2}(\Omega,\mathbb{C}^{4})$ and $sp_{ess}\mathcal{M}_{\boldsymbol{A,}%
\Phi,\mathfrak{M}}$ is defined by formula (\ref{6.2}). That is
\[
sp_{ess}\mathcal{M}_{\boldsymbol{A,}\Phi,\mathfrak{M}}=%
{\displaystyle\bigcup\limits_{\mathcal{M}_{\boldsymbol{A},\Phi,\mathfrak{M}%
}^{h}\in Lim\mathcal{M}_{\boldsymbol{A},\Phi,\mathfrak{M}}}}
sp\mathcal{M}_{\boldsymbol{A},\Phi,\mathfrak{M}}^{h},
\]
where $\mathcal{M}_{\boldsymbol{A},\Phi,\mathfrak{B}}^{h}$ are limit operators
of the operator $\mathcal{M}_{\boldsymbol{A},\Phi,\mathfrak{M}}.$

Let the sequence $h_{m}\rightarrow\vartheta_{\omega}\in\Omega_{\infty
}\diagdown\partial\Omega_{\infty}.$ Then the limit operators are of the form
$\mathcal{M}_{\boldsymbol{A,}\Phi,\mathfrak{M}}^{h}=\mathfrak{D}%
_{\boldsymbol{A}^{h},\Phi^{h}}$ and
\begin{equation}
sp\mathcal{M}_{\boldsymbol{A,}\Phi,\mathfrak{M}}^{h}=sp\mathfrak{D}%
_{\boldsymbol{A}^{h},\Phi^{h}}=\left(  -\infty,\Phi^{h}-\left\vert
m\right\vert \right]  \cup\left[  \Phi^{h}+\left\vert m\right\vert
,+\infty\right)  . \label{7.1'}%
\end{equation}

Let the sequence $h_{m}\rightarrow\vartheta_{\omega}\in\partial\Omega_{\infty
}.$ Then without loss of generality we assume that $\mathbb{T}_{\vartheta
_{\omega}}=\mathbb{R}_{x^{\prime}}^{2}=\left\{  x=(x^{\prime},x_{3}%
):x_{3}=0\right\}  .$ Hence the limit operators of $\mathbb{M}%
_{\boldsymbol{A,}\Phi}$ are of the form%
\begin{equation}
\mathbb{M}_{\boldsymbol{A,}\Phi,\mathfrak{M}_{\vartheta_{\omega}}}%
^{h}\boldsymbol{u}(x)\boldsymbol{=}\left\{
\begin{array}
[c]{c}%
\mathfrak{D}_{\boldsymbol{A}^{h},\Phi^{h}}\boldsymbol{u}(x),x\in\mathbb{R}%
_{+}^{3}\\
\mathfrak{M}_{\vartheta_{\omega}}\boldsymbol{u}(x^{\prime})=\boldsymbol{u}%
(x^{\prime},+0)+i\sigma_{3}\boldsymbol{u}^{2}(x^{\prime},+0)\boldsymbol{=0}%
\text{ on }\mathbb{R}_{x^{\prime}}^{2}%
\end{array}
\right.  . \label{7.2}%
\end{equation}

The gauge transformation $x\rightarrow e^{-i\boldsymbol{A}^{h}\cdot x}x$
reduces the study the spectrum of $\mathbb{M}_{\boldsymbol{A,}\Phi
,\mathfrak{G}}^{h}$ to the spectrum of the operator $\mathbb{M}%
_{\boldsymbol{0,}\Phi^{h},\mathfrak{G}}=\mathbb{M}_{\boldsymbol{0,}%
0,\mathfrak{G}}+\Phi^{h}I_{4}$ where
\begin{equation}
\mathbb{M}_{\boldsymbol{0,}0,\mathfrak{G}}\boldsymbol{u=}\left\{
\begin{array}
[c]{c}%
(i\alpha\cdot\boldsymbol{\nabla}+\alpha_{0}m)\boldsymbol{u}(x),x\in
\mathbb{R}_{+}^{3}\\
\mathfrak{G}\boldsymbol{u}(x^{\prime})=\boldsymbol{u}(x^{\prime}%
,+0)+i\sigma_{3}\boldsymbol{u}^{2}(x^{\prime},+0)\boldsymbol{=0,}x^{\prime}%
\in\mathbb{R}_{x^{\prime}}^{2}%
\end{array}
\right.  . \label{7.2'}%
\end{equation}

After the Fourier transform in (\ref{7.2'}) with respect to $x^{\prime}%
\in\mathbb{R}^{2}$ we obtain the family of one-dimensional Dirac operators
depending on the parameter $\xi^{\prime}\in\mathbb{R}^{2}$
\begin{equation}
\mathbb{L}(\xi^{\prime})\boldsymbol{v}(z)=\left\{
\begin{array}
[c]{c}%
\left(  \alpha^{\prime}\cdot\xi^{\prime}+i\alpha_{3}\frac{d}{dz}+\alpha
_{0}m\right)  \boldsymbol{v}(z),z\in\mathbb{R}_{+},\xi^{\prime}\in
\mathbb{R}^{2}\\
\boldsymbol{v}^{1}(+0)+i\sigma_{3}\boldsymbol{v}^{2}(+0)\boldsymbol{=0}.
\end{array}
\right.  . \label{6.3}%
\end{equation}
The operator $\mathbb{L}(\xi^{\prime})$ has the essential spectrum
\begin{equation}
sp_{ess}\mathbb{L}(\xi^{\prime})=\left(  -\infty,-\sqrt{\left\vert \xi
^{\prime}\right\vert ^{2}+\left\vert m\right\vert ^{2}}\right]
{\displaystyle\bigcup}
\left[  \sqrt{\left\vert \xi^{\prime}\right\vert ^{2}+\left\vert m\right\vert
^{2}},+\infty\right)  \label{6.4''}%
\end{equation}
and a possible discrete spectrum
\begin{equation}
sp_{dis}\mathbb{L}(\xi^{\prime})\subset\left(  -\sqrt{\left\vert \xi^{\prime
}\right\vert ^{2}+m^{2}},\sqrt{\left\vert \xi^{\prime}\right\vert ^{2}+m^{2}%
}\right)  . \label{6.4'''}%
\end{equation}
We are looking for $sp_{dis}\mathbb{L}(\xi^{\prime})$ as follows. The
equation
\begin{equation}
\left(  \alpha^{\prime}\cdot\xi^{\prime}+i\alpha_{3}\frac{d}{dz}+\alpha
_{0}m-\lambda I_{4}\right)  \boldsymbol{v}(z)=0,z>0,\lambda\in\mathbb{R}
\label{6.4'}%
\end{equation}
has the exponentially decreasing solutions of the form $\boldsymbol{v}%
(z)=\boldsymbol{h}e^{-\rho z}$,$\rho=\sqrt{\left\vert \xi^{\prime}\right\vert
^{2}+m^{2}-\lambda^{2}}>0\ $\ with $\left\vert \lambda\right\vert
<\sqrt{\left\vert \xi^{\prime}\right\vert ^{2}+m^{2}},$ where the vector
$\boldsymbol{h}$ satisfies the equation
\begin{equation}
\Theta_{-}\left(  \boldsymbol{\xi}^{\prime},\lambda\right)  \boldsymbol{h}%
=\left(  \alpha^{\prime}\cdot\xi^{\prime}-i\rho\alpha_{3}+\alpha_{0}m-\lambda
I_{4}\right)  \boldsymbol{h=0.} \label{6.5}%
\end{equation}
Equation (\ref{6.5}) has the general solution
\begin{equation}
h=\boldsymbol{h}(\xi^{\prime},\lambda)=\Theta_{+}\left(  \boldsymbol{\xi
}^{\prime},\lambda\right)  \boldsymbol{f} \label{6.6}%
\end{equation}
where $\boldsymbol{f}$ $\in\mathbb{C}^{4}$ is an arbitrary vector and
\[
\Theta_{+}\left(  \boldsymbol{\xi}^{\prime},\lambda\right)  =\alpha^{\prime
}\cdot\boldsymbol{\xi}^{\prime}-i\rho\mathfrak{\alpha}_{3}+\alpha_{0}m+\lambda
I_{4}=\left(
\begin{array}
[c]{cc}%
\left(  \lambda+m\right)  I_{2} & \Lambda(\boldsymbol{\xi}^{\prime},\lambda)\\
\Lambda(\boldsymbol{\xi}^{\prime},\lambda) & \left(  \lambda-m\right)  I_{2}%
\end{array}
\right)  ,
\]
and%
\begin{equation}
\Lambda=\Lambda(\boldsymbol{\xi}^{\prime},\lambda)=\sigma^{\prime}%
\cdot\boldsymbol{\xi}^{\prime}-i\rho\sigma_{3}=\left(
\begin{array}
[c]{cc}%
-i\rho & \bar{\varsigma}\\
\varsigma & i\rho
\end{array}
\right)  ,\varsigma=\xi_{1}+i\xi_{2}. \label{6.7}%
\end{equation}
Let $\boldsymbol{h}_{1},\boldsymbol{h}_{2}$ be two linear independent
solutions of equation (\ref{6.4'}) of the form%

\begin{equation}
\boldsymbol{h}_{1}=\boldsymbol{h}_{1}(\boldsymbol{\xi}^{\prime},\lambda
)=\Theta(\boldsymbol{\xi}^{\prime},\lambda)\left(
\begin{array}
[c]{c}%
\boldsymbol{e}_{1}\\
\boldsymbol{0}%
\end{array}
\right)  =\left(
\begin{array}
[c]{c}%
\left(  \lambda+m\right)  \boldsymbol{e}_{1}\\
\Lambda(\boldsymbol{\xi}^{\prime},\lambda)\boldsymbol{e}_{1}%
\end{array}
\right)  , \label{6.8}%
\end{equation}

\begin{equation}
\boldsymbol{h}_{2}=\boldsymbol{h}_{2}(\boldsymbol{\xi}^{\prime},\lambda
)=\Theta(\boldsymbol{\xi}^{\prime},\lambda)\left(
\begin{array}
[c]{c}%
0\\
\boldsymbol{e}_{2}%
\end{array}
\right)  =\left(
\begin{array}
[c]{c}%
\Lambda(\boldsymbol{\xi}^{\prime},\lambda)\boldsymbol{e}_{2}\\
\left(  \lambda-m\right)  \boldsymbol{e}_{2}%
\end{array}
\right)  \label{6.8'}%
\end{equation}
where
\begin{equation}
\Lambda(\boldsymbol{\xi}^{\prime},\lambda)\boldsymbol{e}_{1}=\left(
\begin{array}
[c]{c}%
-i\rho\\
\varsigma
\end{array}
\right)  ,\Lambda(\boldsymbol{\xi}^{\prime},\lambda)\boldsymbol{e}_{2}=\left(
\begin{array}
[c]{c}%
\bar{\varsigma}\\
i\rho
\end{array}
\right)  . \label{6.9}%
\end{equation}
Then the general solution of equation (\ref{6.4'}) is
\begin{equation}
\boldsymbol{v}(z)=C_{1}\boldsymbol{h}_{1}e^{-\rho z}+C_{2}\boldsymbol{h}%
_{2}e^{-\rho z},z>0. \label{6.10}%
\end{equation}
Substituting (\ref{6.10}) in the boundary condition $\boldsymbol{v}%
^{1}(+0)+i\sigma_{3}\boldsymbol{v}^{2}(+0)\boldsymbol{=0}$ we obtain the
system of linear equations with respect to $C_{1},C_{2}$
\begin{equation}
C_{1}(\boldsymbol{h}_{1}^{1}+i\sigma_{3}\boldsymbol{h}_{1}^{2})+C_{2}%
(\boldsymbol{h}_{2}^{1}+i\sigma_{3}\boldsymbol{h}_{2}^{2})=0. \label{6.11}%
\end{equation}

\bigskip System (\ref{6.11}) has a nontrivial solution if and only if
\begin{align}
&  \det\left(  \boldsymbol{h}_{1}^{1}+i\sigma_{3}\boldsymbol{h}_{1}%
^{2},\boldsymbol{h}_{2}^{1}+i\sigma_{3}\boldsymbol{h}_{2}^{2}\right)
\label{6.12}\\
&  =\det\left(
\begin{array}
[c]{cc}%
\lambda+m+\rho & \bar{\varsigma}\\
-i\varsigma & i\rho-i(\lambda-m)
\end{array}
\right)  =2i\rho(\rho+m)=0.\nonumber
\end{align}

We consider two cases:

1) The mass of the particle $m\geq0.$ In this case equation (\ref{6.12}) does
not have positive solutions and $sp\mathcal{M}_{\boldsymbol{A,}\Phi
,\mathfrak{G}}^{h}$ is given by formula (\ref{7.1'}). It implies that
$sp_{ess}\mathcal{M}_{\boldsymbol{A,}\Phi,\mathfrak{G}}$ is given by formulas
(\ref{5.18}),(\ref{5.18'}).

2) $\ $Let the mass $m<0.$ Then the equation (\ref{6.12}) has the positive
solutions $\rho=\left\vert m\right\vert .$ It follows from equation%
\[
\rho=\sqrt{\left\vert \xi^{\prime}\right\vert ^{2}+m^{2}-\lambda^{2}%
}=\left\vert m\right\vert
\]
that the operator $\mathbb{L}^{0}(\xi^{\prime})$ has eigenvalues $\lambda
_{\pm}(\xi^{\prime})=\pm\left\vert \xi^{\prime}\right\vert ,$ for every
$\xi^{\prime}\in\mathbb{R}^{2}.$ It yields that
\begin{equation}
sp\mathcal{M}_{\boldsymbol{A}^{h}\boldsymbol{,}\Phi^{h},\mathfrak{G}}=\left(
-\infty,+\infty\right)  , \label{6.13}%
\end{equation}
Taking in the account (\ref{6.2}),(\ref{6.13}) we obtain that
\[
sp\mathcal{D}_{\boldsymbol{A},\Phi,\mathfrak{M}}=sp_{ess}\mathcal{D}%
_{\boldsymbol{A},\Phi,\mathfrak{M}}=(-\infty,+\infty).
\]

Thus, we established the following result.

\begin{theorem}
\label{te7.1}Let $\boldsymbol{A}\in SO^{1}(\Omega,\mathbb{R}^{3}),$ and
$\Phi\in SO^{1}(\Omega)$ be the real valued functions and $\Omega$ is a domain
with a $C^{2}-$boundary having a conic exit at infinity. Then the essential
spectrum of MIT bag model is defined by the formulas:

(a) If $m\geq0$
\begin{equation}
sp_{ess}\mathcal{M}_{\boldsymbol{A},\Phi,\mathfrak{M}}=(-\infty,M_{\Phi}%
^{\sup}-\left\vert m\right\vert ]%
{\displaystyle\bigcup}
[M_{\Phi}^{\inf}+\left\vert m\right\vert ,+\infty)
\end{equation}
where
\begin{equation}
M_{\Phi}^{\sup}=\limsup_{x\rightarrow\infty}\Phi(x),M_{\Phi}^{\inf}%
=\liminf_{x\rightarrow\infty}\Phi(x).
\end{equation}

(b) If $m<0$
\[
sp_{ess}\mathcal{M}_{\boldsymbol{A},\Phi,\mathfrak{M}}=(-\infty,+\infty).
\]

\end{theorem}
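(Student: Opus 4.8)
The plan is to assemble the theorem from Corollary \ref{co5.5} together with the spectral computations for the two families of limit operators already carried out above. First I would invoke Corollary \ref{co5.5}, which expresses $sp_{ess}\mathcal{M}_{\boldsymbol{A},\Phi,\mathfrak{M}}$ as the union $\bigcup_h sp\mathcal{M}^h$ of the spectra of all limit operators $\mathcal{M}^h$. These limit operators split into two classes according to whether the infinitely distant point $\vartheta_\omega$ lies in the interior part $\Omega_\infty\setminus\partial\Omega_\infty$ or on the boundary at infinity $\partial\Omega_\infty$, so the remaining task is to compute each class of spectra and take their union.

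For the interior points the limit operators are the constant-coefficient Dirac operators $\mathfrak{D}_{\boldsymbol{A}^h,\Phi^h}$ on $\mathbb{R}^3$ with spectrum (\ref{7.1'}). Here I would use that $\Phi\in SO^1(\Omega)$ is real valued, so every limit value $\Phi^h$ is a real constant, and the set of all such values has supremum $M_\Phi^{\sup}$ and infimum $M_\Phi^{\inf}$, both realized as genuine limits along sequences $x_n\to\infty$ chosen so that $\Phi(x_n)$ tends to the respective extreme value. Taking the union of $(-\infty,\Phi^h-|m|]\cup[\Phi^h+|m|,+\infty)$ over these values then yields $(-\infty,M_\Phi^{\sup}-|m|]\cup[M_\Phi^{\inf}+|m|,+\infty)$, which is closed precisely because the two extreme values are attained.

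For the boundary points I would follow the reduction already prepared: the gauge transformation $x\mapsto e^{-i\boldsymbol{A}^h\cdot x}x$ removes the constant magnetic potential and shifts the spectrum by $\Phi^h$, and the partial Fourier transform in the tangential variables decomposes the half-space operator into the direct integral of the one-dimensional problems $\mathbb{L}(\xi')$ in (\ref{6.3}). The spectrum of each boundary limit operator is then the closure of $\bigcup_{\xi'\in\mathbb{R}^2}sp\mathbb{L}(\xi')$, shifted by $\Phi^h$. The continuous part (\ref{6.4''}) contributes $(-\infty,-|m|]\cup[|m|,+\infty)$, while the eigenvalue equation reduces to the determinant condition (\ref{6.12}), namely $2i\rho(\rho+m)=0$. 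When $m\geq 0$ this has no root with $\rho>0$, so there is no discrete spectrum and each boundary limit operator has spectrum (\ref{7.1'}), identical to the interior case; the total union is the two-ray set, giving part (a). When $m<0$ the admissible root $\rho=|m|$ produces eigenvalues $\lambda_\pm(\xi')=\pm|\xi'|$, and as $\xi'$ ranges over $\mathbb{R}^2$ these sweep out all of $\mathbb{R}$, filling the spectral gap, so every boundary limit operator has spectrum $(-\infty,+\infty)$ and the union is the whole real line, giving part (b).

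The step I expect to be the main obstacle is the careful spectral analysis of the boundary half-space limit operators: justifying the direct-integral decomposition over $\xi'$ and, in the case $m<0$, verifying that the discrete eigenvalue branches $\lambda_\pm(\xi')=\pm|\xi'|$ genuinely lie in the gap $(-\sqrt{|\xi'|^2+m^2},\sqrt{|\xi'|^2+m^2})$ for each $\xi'$ and together cover the entire interval $(-\infty,+\infty)$. By contrast, the interior contribution and the reduction to the determinant condition are comparatively routine once Corollary \ref{co5.5} and the computations (\ref{6.4''})--(\ref{6.12}) are in hand, and the final assembly is just the union of the interior and boundary contributions.
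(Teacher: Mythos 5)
Your proposal is correct and follows essentially the same route as the paper: Corollary \ref{co5.5} reduces the problem to the union of spectra of limit operators, the interior limit operators give the two-ray sets $(-\infty,\Phi^{h}-\left\vert m\right\vert ]\cup\lbrack\Phi^{h}+\left\vert m\right\vert ,+\infty)$, and the boundary (half-space) limit operators are analyzed via the gauge transformation, tangential Fourier transform, and the determinant condition (\ref{6.12}), with the case split $m\geq0$ versus $m<0$ exactly as in the text. The only difference is that you spell out more explicitly why the union over the limit values $\Phi^{h}$ closes up to $(-\infty,M_{\Phi}^{\sup}-\left\vert m\right\vert ]\cup\lbrack M_{\Phi}^{\inf}+\left\vert m\right\vert ,+\infty)$, a point the paper delegates to Theorem \ref{te5.3}.
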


It follows from Theorem \ref{te7.1} that the operator $\mathcal{M}%
_{\boldsymbol{A},\Phi,\mathfrak{M}}$ may have the discrete spectrum on the
interval $\left(  M_{\Phi}^{\sup}-\left\vert m\right\vert ,M_{\Phi}^{\inf
}+\left\vert m\right\vert \right)  $ if $m>0$ and $M_{\Phi}^{\sup}-M_{\Phi
}^{\inf}<2m,$ and $\mathcal{M}_{\boldsymbol{A},\Phi,\mathfrak{M}}$ does not
have discrete spectrum if $m<0.$

\begin{acknowledgement}
This work was supported by the Mexican National System of Investigators (SNI).
\end{acknowledgement}

\bigskip

\bigskip

\bigskip

\bigskip
\end{document}